\newcommand{\LPF}{\mathsf{LPF}}
\newcommand{\NDExt}[3]{\mathsf{DExt}_{#1}(#2,#3)}
\newcommand{\Occurrences}[2]{\mathsf{Occ}_{#1}(#2)}
\newcommand{\LCP}{\mathsf{LCP}}
\newcommand{\IPM}{\mathsf{IPM}}
\newcommand{\Clusters}{\mathsf{Clusters}}
\newcommand{\LCS}{\mathsf{LCS}}
\newcommand{\LCA}{\mathsf{LCA}}
\newcommand{\LPFpos}{\mathsf{LPFpos}}
\newcommand{\Otild}{\tilde{O}}
\newcommand{\per}{\mathsf{per}}
\newcommand{\Insert}{\mathsf{Insert}}
\newcommand{\Link}{\mathsf{Link}}
\newcommand{\Delete}{\mathsf{Delete}}
\newcommand{\MoveInterval}{\mathsf{MoveInterval}}
\newcommand{\findIAncestor}{\mathsf{LevelAncestor}}
\newcommand{\GetDepth}{\mathsf{GetDepth}}
\newcommand{\SelectPhrase}{\mathsf{SelectPhrase}}
\newcommand{\LA}{\mathsf{LevelAncestor}}
\newcommand{\ChargeTwo}{\mathsf{Charge}_2}
\newcommand{\ContainingPhrase}{\mathsf{ContainingPhrase}}
\newcommand{\LZLength}{\mathsf{LZLength}}
\newcommand{\LZss}{\text{LZ77}}
\newcommand{\depth}{\mathsf{depth}}
\newcommand{\ourvec}{\mathbf}
\newcommand{\ACoord}{\mathsf{ACoordinate}}
\newcommand{\BCoord}{\mathsf{BCoordinate}}
\newcommand{\AVect}{\mathsf{AVector}}
\newcommand{\BVect}{\mathsf{BVector}}
\newcommand{\Phrases}{\mathsf{Phrases}}
\newcommand{\Mat}{\mathsf{Matrix}}
\newcommand{\skipo}{\mathsf{Skip\text{-}}0}
\newcommand{\skiphalves}{\mathsf{Skip\text{-}}2\mathsf{\text{-}halves}}
\newcommand{\Dict}{\mathsf{Dictionary}}
\newcommand{\ext}{\mathsf{Extensions}}
\newcommand{\C}{\mathcal C}
\newcommand{\I}{\mathcal I}
\newcommand{\sm}{\setminus}
\newcommand{\OV}{\mathsf{OV}}
\newcommand{\eps}{\varepsilon}
\DeclareMathOperator*{\argmax}{arg\,max}
\newtheorem{fact}[theorem]{Fact}
\newcommand{\T}{\mathcal{T}}
\newcommand{\lpos}{\mathsf{L_z}}
\newcommand{\rpos}{\mathsf{R_z}}
\DeclarePairedDelimiter\ceil{\lceil}{\rceil}
\DeclarePairedDelimiter\floor{\lfloor}{\rfloor}
\newcommand{\para}[1]{\subparagraph*{#1}}
\newenvironment{problem}[1]{
  \refstepcounter{problemctr}
  \begin{tcolorbox}[
    colback=blue!5,
    colframe=blue!75!black,
    fonttitle=\bfseries,
    title=Problem~\theproblemctr: #1,
    boxrule=0.8mm,
    width=\textwidth
  ]
}{
  \end{tcolorbox}
}
\newcounter{problemctr}
\renewcommand{\theproblemctr}{\arabic{problemctr}}
\crefname{problemctr}{Problem}{Problems}
\crefname{claim}{Claim}{Claims}
\crefname{observation}{Observation}{Observations}
\title{The Complexity of Dynamic LZ77 is $\tilde{\Theta}(n^{2/3})$}
\author{Itai Boneh}{University of Wrocław, Poland \and \url{https://sites.google.com/view/itai-boneh}}{itai.boneh@cs.uni.wroc.pl}{https://orcid.org/0009-0007-8895-4069}{Partially supported by Israel Science Foundation grant 810/21. Partially supported by the Polish National Science Centre grant number 2023/51/B/ST6/01505.}
\author{Shay Golan}{Ariel University, Israel \and \url{https://sites.google.com/view/shaygolan}}{golansh1@biu.ac.il}{https://orcid.org/0000-0001-8357-2802}{Partially supported by Israel Science Foundation grant 810/21.}
\author{Matan Kraus}{Bar Ilan Univesity, Israel}{matan3@gmail.com}{https://orcid.org/0000-0002-2989-1113}{Supported by the ISF grant no. 1926/19, by the BSF grant 2018364, and by the ERC grant MPM under the EU's Horizon 2020 Research and Innovation Programme (grant no. 683064).}
\authorrunning{Boneh, Golan and Kraus}
\keywords{Text Algorithms, Lempel-Ziv 77, Dynamic Algorithms}
\begin{document}

\maketitle
\begin{abstract}
The Lempel-Ziv 77 (LZ77) factorization is a fundamental compression scheme widely used in text processing and data compression.  In this work, we investigate the time complexity of maintaining the LZ77 factorization of a dynamic string.  By establishing matching upper and lower bounds, we fully characterize the complexity of this problem.

We present an algorithm that efficiently maintains the LZ77 factorization of a string $S$ undergoing edit operations, including character substitutions, insertions, and deletions.  Our data structure can be constructed in $\tilde{O}(n)$ time for an initial string of length $n$ and supports updates in $\tilde{O}(n^{2/3})$ time, where $n$ is the current length of $S$.  Additionally, we prove that no algorithm can achieve an update time of $O(n^{2/3-\varepsilon})$ unless the Strong Exponential Time Hypothesis fails.  This lower bound holds even in the restricted setting where only substitutions are allowed and only the length of the LZ77 factorization is maintained.

\end{abstract}

\newpage

\section{Introduction}

Lempel-Ziv 77 ($\LZss$)~\cite{LZ77} is one of the most powerful and widely used compression techniques, forming the backbone of formats like gzip and PNG.
The $\LZss$ scheme partitions a string $S[1..n]$ into substrings, called \textit{phrases}, where each phrase is either the first occurrence of a character or a maximal substring that begins to the left of the current phrase in $S$.
Both types of phrases can be encoded by at most two integers, allowing the compression size of $S$ to be linear in the number of phrases.

In practice, $\LZss$ (and its variants) is one of the most widely used compression algorithms, forming the foundation of 57 out of 210 compressors listed in the Large Text Compression Benchmark \cite{ltcb}.
Its influence spans a diverse range of file formats, such as PNG, PDF, and ZIP, highlighting its versatility and efficiency.
Moreover, $\LZss$ plays a crucial role in modern web infrastructure, being embedded in virtually all contemporary web browsers and servers~\cite{AFFKOS19}.

Throughout the years the $\LZss$ factorization and its variants attracted significant attention both in practice~\cite{LiuNCW16,OzsoyS11,OzsoySC14,ZuH14,BKKP16,KempaP13,KarkkainenKP13,ShunZ13} and in theory~\cite{FischerGGK15,FischerIK15,GagieGKNP14,GagieGP15,GawrychowskiKM23,Hong0B23,KempaK17b,kreft2010navarro,KopplS16,KempaS22,NishimotoIIBT20,CIS08,KKP13,BilleCFG17,BilleEGV18,BelazzouguiP16,EllertFP23,bitlz,Gaw11,GotoB13,GotoB14,GNPlatin18,HanLN22,KarkkainenKP13b,KarkkainenKP14,Kosolobov15,Kosolobov15b,KosolobovVNP20,KleinW05,OhlebuschG11,PolicritiP15,Rytter03,Starikovskaya12,YamamotoIBIT14}.
In particular, in recent years, breakthroughs have been made in computing the $\LZss$ factorization of a text in sub-linear time~\cite{kempa2024lempel,ellert2023sublinear}, as well as in developing quantum algorithms for its computation~\cite{gibney2024near}.

In this work, we investigate the maintenance of the $\LZss$ factorization in dynamic settings.
We consider the settings in which the input string $S$ undergoes insertions, deletions, and substitutions.
Our objective is to maintain a data structure alongside $S$ that allows access to its current $\LZss$ factorization at any point during the sequence of updates.
The interface we provide for accessing the $\LZss$ factorization is exceptionally versatile and well-suited for dynamic applications:
given an index $i$, the data structure can either report the $\LZss$ phrase that contains $i$ in $S$ or return the
$i$th phrase in the factorization of $S$.
All access queries are supported in poly-logarithmic time.
For some previously studied applications (see~\cite{BCR24}), a more limited notion of dynamic $\LZss$ maintenance suffices: after each update to $S$, only the current size of its $\LZss$ factorization is reported.
Our data structure supports this functionality as well.

The problem of dynamic $\LZss$ maintenance has recently been considered by Bannai, Charalampopoulos, and Radoszewski~\cite{BCR24}. In their work, they focus on a limited semi-dynamic settings where only insertions at the end of the string and deletions from the beginning are permitted.
In these settings, \cite{BCR24} provide an algorithm with an amortized running time of $\Otild(\sqrt{n})$\footnote{Throughout the paper, we use the $\Otild$ notation to ignore multiplicative polylogarithmic factors of $n$, i.e., $\Otild(f(*)) = O(f(*) \cdot \log^c n)$ for some constant $c$, where $n$ is the current length of the string.}, which reports the size of $\LZss(S)$ after every update.

Although the semi-dynamic settings studied in \cite{BCR24} are well-motivated, a more natural and general approach to developing dynamic string algorithms is to consider the fully dynamic model, where the string undergoes insertions, deletions, and substitutions.
Recent work on dynamic strings in the fully dynamic model (and in some cases, even stronger models) includes substring equality queries \cite{gawrychowski2018optimal}, Longest Common Factor maintenance \cite{amir2020dynamic,charalampopoulos2020dynamic}, Approximate Pattern Matching \cite{CKW20,CKW22}, maintenance of all periodic substrings \cite{amir2019repetition}, maintenance of Suffix Arrays and Inverted Suffix Arrays \cite{kempa2022dynamic,AB21}, Edit Distance computation \cite{charalampopoulos2020alignment}, bounded Edit Distance computation \cite{GK24,CKW23}, Longest Increasing Subsequence \cite{KS21,GJ21}, and Dynamic Time Warping computation \cite{bringmann2024dynamic}.
Building on the ongoing progress in developing fully dynamic algorithms for classical string problems, and drawing inspiration from \cite{BCR24}, which explores $\LZss$ in dynamic settings, we study the complexity of maintaining the $\LZss$ factorization of a dynamic string.
We reach the (surprising) conclusion that the complexity of this problem is $\tilde\Theta(n^{2/3})$.

Interestingly, almost all string problems listed above with near-linear static running times admit a dynamic algorithm with poly-logarithmic update time.
The only exception is the Longest Increasing Subsequence problem, which has an $O(n\log n)$ static algorithm, but only an $\Otild(n^{2/3})$ dynamic algorithm is known~\cite{KS20} and no nontrivial lower bound is known\footnote{We note that there are polynomial lower bounds for \textbf{other} versions of the problem~\cite{GJ2021}.}.
This places the time complexity of $\LZss$ in a unique position: it is the first string problem with near-linear time complexity in the static settings that has a proven polynomial and sublinear time complexity in the dynamic settings.

\subsection{Our Contribution}
We present a fully dynamic data structure that supports access to the $\LZss$ factorization of a dynamic string $S$ which undergoes insertions, deletions, and substitutions.
We present the following interface for querying the $\LZss$ factorization of a string $S$.
\begin{enumerate}
    \item $\SelectPhrase_S(i)$: given an index $i$, report the $i$th $\LZss$ phrase $S[a..b]$ in the factorization of $S$ represented by $a$ and $b$.
    \item $\ContainingPhrase_S(i)$: given an index $i$, report the $\LZss$ phrase $S[a..b]$ in the factorization of $S$ such that $i\in [a..b]$ represented by $a$ and $b$.
    \item $\LZLength_S(i)$: report the length of the $\LZss$ factorization of $S[1..i]$.
\end{enumerate}
We use the term \textit{$\LZss$ queries} to collectively refer to $\SelectPhrase$, $\ContainingPhrase$, and $\LZLength$.
In this paper we provide a data structure for dynamic $\LZss$, which we formalize as the follows.

\begin{problem}{Fully Dynamic \LZss}\label{prb:ub}
\textbf{Preprocess$(S)$:} preprocess a string $S$.
\medskip
\\
\textbf{Query:} an $\LZss$ query on $S$.
\medskip
\\
\textbf{Updates:} apply a given edit operation on $S$.
\end{problem}

\begin{theorem}\label{thm:lz77}
    There is a data structure solving \cref{prb:ub} with $\Otild(n)$ preprocessing time, $\Otild(1)$ time per query and $\Otild(n^{2/3})$ time per update, where $n$ is the current length of $S$.
\end{theorem}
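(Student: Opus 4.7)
The plan is to maintain the LZ77 factorization of $S$ via a balanced BST (or similar structure) keyed on the starting positions of phrases, with each phrase node carrying its length together with subtree aggregates (counts and total lengths). Standard augmentation then answers $\SelectPhrase_S$, $\ContainingPhrase_S$, and $\LZLength_S$ in $\Otild(1)$ time. Alongside this, I would maintain a dynamic string data structure supporting internal pattern matching queries ($\IPM$) and related substring pattern matching primitives in poly-logarithmic time per query and update; these are the workhorse used to find and verify longest previous factors after each edit.

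When an edit is applied at some position $p$, an $\Otild(n^{2/3})$ budget is allotted for updating the phrase structure. The key observation is that an edit only directly perturbs $\LPF$ values within a limited window, but the effect can cascade: lengthening or shortening a single phrase shifts all phrases to its right, each of which may now start at a different position with a different $\LPF$ value. To tame this cascade, I would pick a threshold $\tau$ (balanced so that both branches of the analysis yield $n^{2/3}$) and classify phrases into two regimes: \emph{short} phrases whose new lengths are recomputed individually via $\IPM$ queries, and \emph{long} phrases that are grouped into chains with periodic structure and manipulated as blocks via the $\MoveInterval$ and $\UpdateTreesByRange$ operations on the tree. Along the way, one proves a structural lemma bounding the number of phrases that are \emph{meaningfully} affected (not merely shifted) by a single edit, with the remaining shifts absorbed by chain-level bookkeeping.

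The main obstacle will be proving that the bulk of the cascade can in fact be compressed into chain-level operations. Concretely, one must show that when an edit causes a long suffix of the factorization to shift, the shifted phrases can be re-expressed, up to a boundary region of size $\Otild(n^{2/3})$, as translations of previously existing chains, so that the tree update reduces to a handful of $\MoveInterval$ operations plus local repair. This requires a careful charging argument (the $\ChargeTwo$ scheme hinted at by the macros) establishing that any phrase whose $\LPF$ pointer genuinely changes is paid for either by its proximity to the edit or by a corresponding change in the chain decomposition, both of which happen only $\Otild(n^{2/3})$ times per update. Once this structural claim is proved, the remainder is an orchestration of the dynamic string primitives with standard balanced-tree operations, and preprocessing simply initializes the structure in $\Otild(n)$ time using any static LZ77 algorithm followed by one build of the auxiliary trees.
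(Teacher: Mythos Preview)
Your proposal identifies the central difficulty---an edit can shift every subsequent phrase boundary---but does not actually resolve it, and the representation you chose makes this harder than necessary.

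The paper does \emph{not} store the phrases in a BST. It maintains the LPF-tree $\T_S$: each index $i\in[n]$ is a node whose parent is $i+\LPF'_S(i)$, and the LZ77 factorization is the $1$-to-root path, so $\SelectPhrase$, $\ContainingPhrase$, and $\LZLength$ reduce to $\findIAncestor$ and $\GetHeight$ on a dynamic tree. This representation eliminates your cascade outright: if $\LPF'_S(i)=\LPF'_{S'}(i)$ then node $i$'s parent is unchanged, \emph{regardless of whether the phrase containing $i$ has moved}. You therefore only need to re-parent the indices whose $\LPF'$ value actually changes, and contiguous blocks of such indices with a common new parent are handled by a single $\MoveInterval$. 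In your BST-of-phrases picture, by contrast, the cascade is real, and you offer no concrete mechanism for absorbing it; ``shifted phrases are translations of previously existing chains'' is not a statement one can prove about LZ77 in general.

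The substance of the proof is then (i) classifying the indices whose $\LPF'$ changes and (ii) bounding the number of $\MoveInterval$ calls. The paper's classification is not short vs.\ long phrases. With $M=n^{1/3}$ it shows that every such index is either within distance $M$ of the edit (``Super Light'', $O(M)$ indices), or the first occurrence after $z$ of some $S[z-a..z+b]$ with $a,b\in[0..M]$ (``Light'', $O(M^2)$ indices), or ``Heavy'': its previous factor in both $S$ and $S'$ already contains an occurrence of the length-$M$ flank $M_L$ or $M_R$. Your plan has no analogue of this case analysis, and a short/long dichotomy on phrase lengths does not yield such a bound.

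For the heavy indices, the paper proves the combinatorial bound that is the real technical core: for any index set $I$, $\sum_{i\in I}|\chain{I}{i}{S}|=O(|I|\log^2 n)$, where $\chain{I}{i}{S}$ is a specific Pareto sequence with respect to $(\LCP,\LCS)$; the proof is a heavy-path-decomposition charging argument on two tries (this is where $\ChargeTwo$ actually lives). Together with a reduction showing that the number of $\MoveInterval$ calls at each occurrence of $M_L$ or $M_R$ is bounded by such a chain length, this gives $\Otild(n/M)$ time for all heavy indices. Your allusion to a charging scheme points in this direction, but you have neither the statement nor the argument. Balancing $M^2$ against $n/M$ then gives the $\Otild(n^{2/3})$ update time.
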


In \cref{sec:lb}, we provide evidence that our data structure is optimal up to sub-polynomial factors.
In particular, we prove that even if only substitutions are allowed and the data structure only needs to report $|\LZss(S)|$ after each update, achieving this with polynomial preprocessing and an update time of $O(n^{2/3-\eps})$ is impossible, assuming the Strong Exponential Time Hypothesis (SETH)~\cite{IP01,IPZ01}.

\begin{theorem}\label{thm:intro-lb}
For every $c,\eps>0$, there is no data structure that solves \cref{prb:ub} with $O(n^c)$ preprocessing time, $O(n^{2/3-\eps})$ query time and $O(n^{2/3-\eps})$ update time, where $n$ is the current length of $S$, unless SETH is false.
\end{theorem}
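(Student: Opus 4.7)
The plan is to reduce from the Orthogonal Vectors problem (OV): given $A,B\subseteq \{0,1\}^d$ with $|A|=|B|=N$ and $d=\omega(\log N)$, decide whether some $a\in A$, $b\in B$ satisfy $\langle a,b\rangle = 0$; under SETH, OV requires $N^{2-o(1)}$ time.

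Given an OV instance $(A,B)$, I would build an initial string $S$ of length $n=\Theta(N^{3/2})$ consisting of a \emph{static region} that encodes all vectors of $B$ (interleaved with fresh separator symbols) followed by a long run of a distinguished padding character, together with a short \emph{query region} in which a single vector $a\in A$ can be installed using $O(d)=\operatorname{polylog}(N)$ substitutions. The gadget is engineered so that when $a$ is installed in the query region, the $\LZss$ factorization contains a phrase stretching across a full $b$-block if and only if $\langle a,b\rangle = 0$; consequently $|\LZss(S)|$ takes different values in the OV-YES and OV-NO cases for the currently installed $a$. Iterating over $a\in A$---$O(d)$ substitutions to install each $a$, followed by a single $\LZLength$ query---decides OV.

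Assuming a data structure with $O(n^c)$ preprocessing and $\Otild(n^{2/3-\eps})$ update/query time, the reduction's total cost is
\[
O(n^c) + N\cdot \Otild(n^{2/3-\eps}) \;=\; O(N^{3c/2}) + \Otild(N^{2-3\eps/2}).
\]
For every constant $c<4/3$ this is $O(N^{2-\Omega(\eps)})$, already contradicting SETH. To handle arbitrary constant $c$, I would amortize the preprocessing over $K=\Theta(N^{3c/2-2+\eps/2})$ independent OV sub-instances by reusing the same data structure: switch from one dictionary $B_j$ to the next using $O(Nd)$ substitutions (cost $\Otild(N^{2-3\eps/2})$), then rerun the outer loop. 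The amortized per-instance cost becomes $\Otild(N^{2-\Omega(\eps)})$, again contradicting SETH.

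The main obstacle is the \emph{gadget construction}. One must encode each $b\in B$ so that (i) installing $a$ in the query region creates an $\LZss$ phrase matching some $b$-block exactly when $\langle a,b\rangle = 0$; (ii) no other phrase in the factorization depends on the query content, so that $|\LZss(S)|$ is a clean indicator of the OV answer; and (iii) the $\Theta(N^{3/2})$ padding does not introduce spurious long matches that would shift the baseline phrase count. Achieving (ii) and (iii) simultaneously---so that exactly one phrase reflects the OV answer while the rest of the factorization is rigid under substitutions inside the query region---will require careful alphabet design and explicit control over phrase boundaries at every separator, and is the most delicate part of the argument.
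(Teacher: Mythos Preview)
Your high-level plan---reduce from OV with a string of length $\Theta(N^{3/2})$ so that $O(N)$ many $\operatorname{polylog}(N)$-sized updates plus one $\LZLength$ query per installed vector solve the instance---is the same as the paper's. But two steps of your outline do not work as written.

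\textbf{The amortization for arbitrary $c$ is not valid.} From SETH you only get that a \emph{single} OV instance of size $N$ needs $N^{2-o(1)}$ time; you do not get that $K$ \emph{independent} instances need $K\cdot N^{2-o(1)}$ total time, so dividing your total cost by $K$ proves nothing. The paper sidesteps this entirely by invoking the offline-OV lower bound of Abboud and Vassilevska~Williams: there is no algorithm that preprocesses one set $A$ in $O(n^{c'})$ time (for any constant $c'$) and then, given $B$, decides OV in $O(n^{2-\eps})$ time. Your own setup (build the data structure once on the string encoding $B$, then iterate over $a\in A$) already fits this framework with the roles of $A$ and $B$ swapped, so you should simply cite that lemma; the amortization is unnecessary and, as stated, incorrect.

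\textbf{The gadget sketch is underspecified and partly upside-down.} You place the static $B$-encoding first and the mutable query region after it, but LZ77 phrases only reference \emph{earlier} text, so the factorization of the $b$-blocks cannot depend on $a$. The paper puts the mutable part before the static one: $S(\ourvec{u})=\Dict\cdot \BVect(\ourvec{u})\cdot \Mat(A)$, and it is the factorization of $\Mat(A)$ that changes with $\ourvec{u}$. More importantly, the $\Theta(N^{3/2})$ length is \emph{not} obtained by padding: each vector $\ourvec{v}_i\in A$ is encoded in $\Theta(d\sqrt{N})$ characters via a base-$\sqrt{N}$ decomposition $i=i_1\sqrt{N}+i_2$, and a dictionary of ``skip'' gadgets is carefully designed so that the block $\AVect(i)$ is covered by exactly $d+1$ phrases when $\ourvec{u}\cdot\ourvec{v}_i=0$ and $d+2$ phrases otherwise. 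To read off the answer cleanly the paper maintains a second data structure on the prefix $\Dict\cdot\BVect(\ourvec{u})$ and compares the two phrase counts. Your proposal that ``exactly one phrase reflects the OV answer while the rest of the factorization is rigid'' is not what the paper does and would be hard to achieve: with only $O(d)$ query characters preceding $N$ blocks, a single indicator phrase cannot distinguish which of the $N$ vectors is orthogonal. This gadget is the substance of the proof, and you have not supplied it.
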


Exploiting  the tools we develop in the proofs of \cref{thm:lz77,thm:intro-lb}, we also obtain upper and lower bounds for \cref{prb:ub} as a function of  $|\LZss(S)|$.

\begin{theorem}\label{thm:weak}
        There is a data structure solving \cref{prb:ub} with $\Otild(n)$ preprocessing time, $\Otild(1)$ time per update and $\Otild(|\LZss(S)|)$ time per query, where $n$ is the current length of $S$.

        Moreover, for every $c,\eps>0$, there is no data structure that solves \cref{prb:ub} with $O(n^c)$ preprocessing time, $O(|\LZss(S)|^{1-\eps})$ query time and $O(|\LZss(S)|^{1-\eps})$ update time, unless SETH is false.
\end{theorem}

\subsection{Technical Overview}\label{sec:overview}
We provide a high-level discussion of the algorithm of \cref{thm:lz77} and the lower bound of \cref{thm:intro-lb}.
For this overview, we discuss only the task of reporting $|\LZss(S)|$ after each update, focusing on substitution updates.
Even for this restricted problem, all the machinery of our general algorithm is required (and this is also the setting of the lower bound).

An important concept in many $\LZss$ algorithms (e.g., \cite{FHSX03,CI08,CIS08,CIIKRW09,KKP13,BKKP16,CC19,ellert2023sublinear,BCR24,kempa2024lempel}) is the Longest Previous Factor ($\LPF$).
In a string $S$, the longest previous factor of an index $i$, denoted $\LPF_S(i)$, is the maximum $\ell$ such that $S[i..i+\ell)$ is not the leftmost occurrence of itself in $S$.
Formally,
\[
\LPF_S(i) = \max\{ \LCP_S(i,j) \mid j \in [1..i) \}
\]
where $\LCP_S(i,j)$ is the length of the longest common prefix of $S[i..n]$ and $S[j..n]$.
Notice that $r_i = i + \max\{\LPF_S(i),1\}$ is exactly the starting point of the $(x+1)$th phrase in $\LZss(S)$ if the $x$th phrase starts at index $i$ (and $S[i..r_i)$ is a phrase in the $\LZss$ factorization).
A related useful definition is $\LPFpos_S(i)$, which is the rightmost index $j \in [1..i)$ that is a previous occurrence of $S[i..r_i)$ (hence serving as a witness for $\LPF_S(i)$).
Following the approach of the semi-dynamic $\LZss$ algorithm~\cite{BCR24}, we employ the concept of the $\LPF$-tree.
The $\LPF$-tree of a string $S$, denoted $\T_S$, has the indices of $S$ as nodes, where the parent of $i$ in $\T_S$ is $r_i$.
It is easy to see that the depth of node $1$ in $\T_S$ is exactly $|\LZss(S)|$.
Thus, our task reduces to maintaining $\T_S$ while supporting queries for the depth of a node.
We achieve this in $\Otild(n^{2/3})$ time by carefully analyzing how the values of $r_i$ change when an update is applied to $S$.

We define four relevant types of indices in $S$: \emph{super-light}, \emph{light}, \emph{$L$-heavy}, and \emph{$R$-heavy}.
We show that any index $i$ that does not belong to any of these types necessarily has its $r_i$ unchanged as a result of the update, and therefore retains the same parent in $\T_S$ before and after the update.
We then provide separate algorithms for updating the parents of indices of each type, with a single algorithm handling both $L$-heavy and $R$-heavy indices.

\para{Characterizing indices whose $\LPF$ values change after an update.} In the following paragraphs, we introduce four types of indices that, together, cover all indices that change their $\LPF$ value as a result of the update. We emphasize that, for now, we are not concerned with the task of finding or even computing the new $\LPF$ values of these indices. Instead, we focus solely on proving that these sets indeed cover all possible indices $i$ such that $\LPF_S(i) \neq \LPF_{S'}(i)$.

First, let us assume that $S[z]$ is being substituted, resulting in a new string $S'$.
Let us fix an integer parameter $m$ (to be set later to $n^{1/3}$), and let $i$ be an index in $S$.
We focus on characterizing the set of indices whose $\LPF$ value is reduced as a result of the update, i.e., $\{i \mid \LPF_S(i) > \LPF_{S'}(i)\}$ (the characterization of the indices whose $\LPF$ value increased is symmetric).
The initial observation needed to characterize this set is that the update at $z$ must have disrupted the maximal equality that led to the $\LPF$ of $i$ in $S$.
Formally, let $j = \LPFpos_S(i)$.
The value $\LPF_S(i)$ arises from the equality $S[i..i+\LPF_S(i)) = S[j..j+\LPF_S(i))$. Since this equality cannot persist in $S'$, we must have $z \in [i..i+\LPF_S(i))$ or $z \in [j..j+\LPF_S(i))$.
The types we define correspond to these cases, and also roughly correspond to the distance between $i$ and $z$ (or $j$ and $z$, depending on the type).

Super light indices are simply defined as the indices $[z-m..z]$.
Let us consider a non-super light index $i$, with $z\in [i..i+\LPF_S(i))$.
Notice that since $i < z- m$, the string $S[i..r_i)$ contains $M_L=S[z-m..z)$.
In particular, this means that $S[j..j+\LPF_S(i))$ contains an occurrence of $M_L$.
Importantly, both occurrences of $M_L$ appear both before and after the update (i.e., in $S$ and $S'$).
We use the occurrences of $M_L$ as anchors to identify and update all such indices.
This gives rise to the following definition.
\begin{definition}[L-Heavy index]
    An index $i$ is $L$-Heavy if there is an occurrence of $M_L=S[z-m..z)$ at index $k$, and this occurrence is contained both in $S[i..i+ \LPF_S(i))$ and in $S'[i..i+\LPF_{S'}(i))$.
\end{definition}
\begin{figure}[h!]
        \centering
\includegraphics[width=0.7\linewidth]{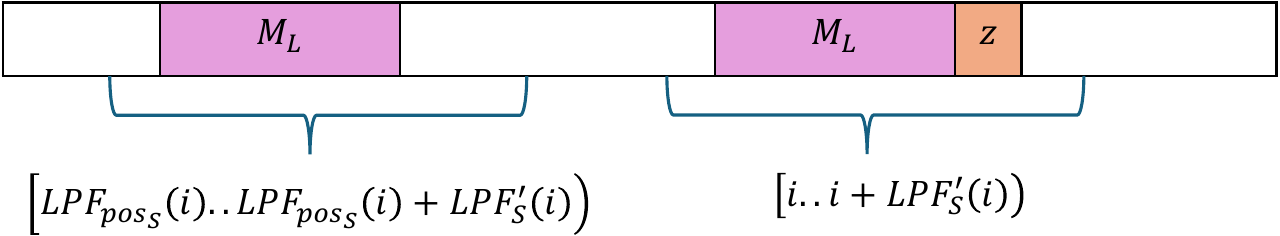}
\caption{Example of an $L$-heavy index. Notice that regardless of the update in $z$, the equality between $S[i..z-1]$ and $S[\LPFpos_S(i)..\LPFpos_S(i) + z-i-1]$ still persists in $S'$.
It follows that $M_L$ is contained in $S'[i..i+\LPF_{S'}(i)]$ as well.}
            \label{fig:case1bintro}
\end{figure}
The above discussion results in a characterization of all indices $i$ with $z\in [i..i+\LPF_S(i))$: all these indices are either super-light or $L$-heavy.

Providing a characterization for index $i$ with $z \in [\LPFpos_S(i)..\LPFpos_S(i)+\LPF_S(i))$ is a bit less direct.
We define light indices as follows.
We say that $i$ is a light index if it is the starting position of the leftmost occurrence, to the right of position $z+1$, of the substring $S[z-a..z+b]$, for some $a, b \in [m]$.
Light indices capture the scenario in which $z\in [\LPFpos_S(i)..\LPFpos_S(i)+\LPF_S(i))$, and $z$ is within a distance of $m$ from both $\LPFpos_S(i)$ and $\LPFpos_S(i) + \LPF_S(i)$.
If this happens, $S[i..i+\LPF_S(i))=S[z-a..z+b]$ for some $a,b\in [m]$.
It should be easy to see that $S[i..i+\LPF_S(i))$ is also the first occurrence of $S[z-a..z+b]$ after $z+1$.

    \begin{figure}[h!]
        \centering
\includegraphics[width=0.7\linewidth]{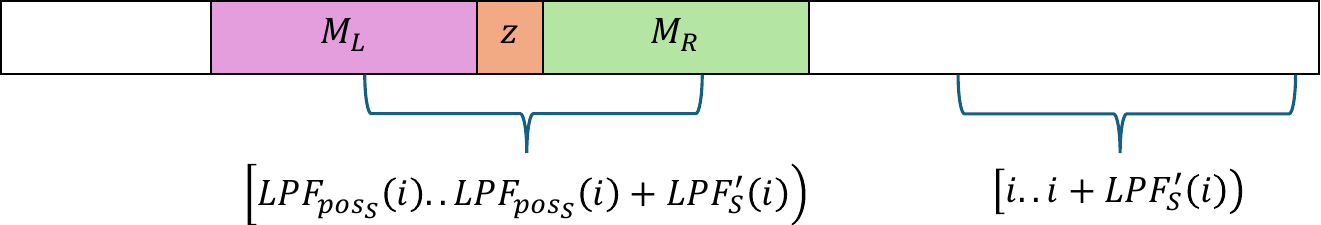}
        \caption{Example of a light index.}
        \label{fig:case2bintro}
    \end{figure}

We are left with the case of indices $z\in [\LPFpos_S(i)..\LPFpos_S(i)+\LPF_S(i))$ and $z$ is more than $m$ indices apart from one of the ends of this interval.
If $z \ge \LPFpos_S(i) + m$, we have that $S[\LPFpos_S(i).. z-1]$ contains $M_L = S[z-m..z-1]$.
Now, if $i$ does not fall within the previous case, i.e., $z\notin[i..i+\LPF_S(i)-1]$, it holds that $S'[\LPFpos_S(i) .. z-1] = S'[i..i +z-\LPFpos_S(i)-1]$.
This equality indicates that $\LPF_{S'}(i) \ge z - \LPFpos_S(i) -1$, and therefore the occurrence of $M_L$ at index $z-m$ is contained both in $S[i..i+\LPF_S(i))$ and in $S'[i..i+\LPF_{S'}(i))$.
It follows that $i$ is $L$-heavy.
\begin{center}

    \includegraphics[width=0.7\linewidth]{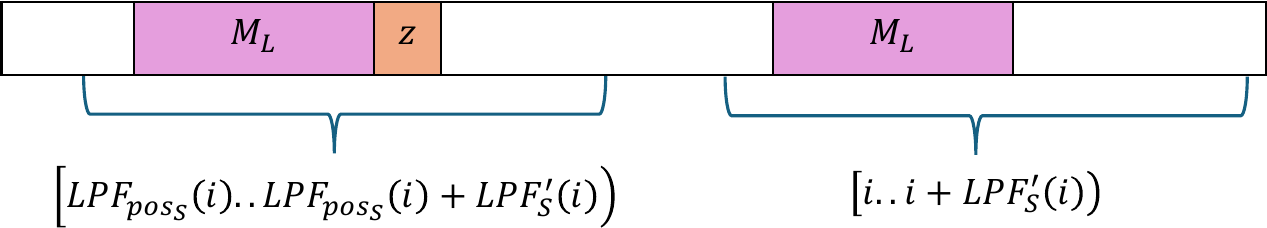}

    \captionof{figure}{An illustration of the case where $z\in [\LPFpos_S(i) .. \LPFpos_S(i) + \LPF_S(i))$, and $i$ is an $L$-heavy index }

\end{center}
Consider the complementary case in which $z< \LPFpos_S(i) + m$.
We can assume that $z < \LPFpos_S(i)+\LPF_S(i)-m$, as we have already covered the light case in which $z = \LPFpos_S(i)+\LPF_S(i)-b$ for some $b\in [m]$.
If the new $\LPF$ value is still large, i.e., $\LPFpos_S(i)+\LPF_{S'}(i) > z+m$, we have that both $S[i..i+\LPF_S(i))$ and $S[i..i+\LPF_{S'}(i))$ contain an occurrence of $M_R = S[z+1..m]$.
We call indices with this property $R$-heavy indices, and they are defined and handled in a similar manner to $L$-heavy indices.

\begin{figure}[h!]
    \centering
    \includegraphics[width=0.7\linewidth]{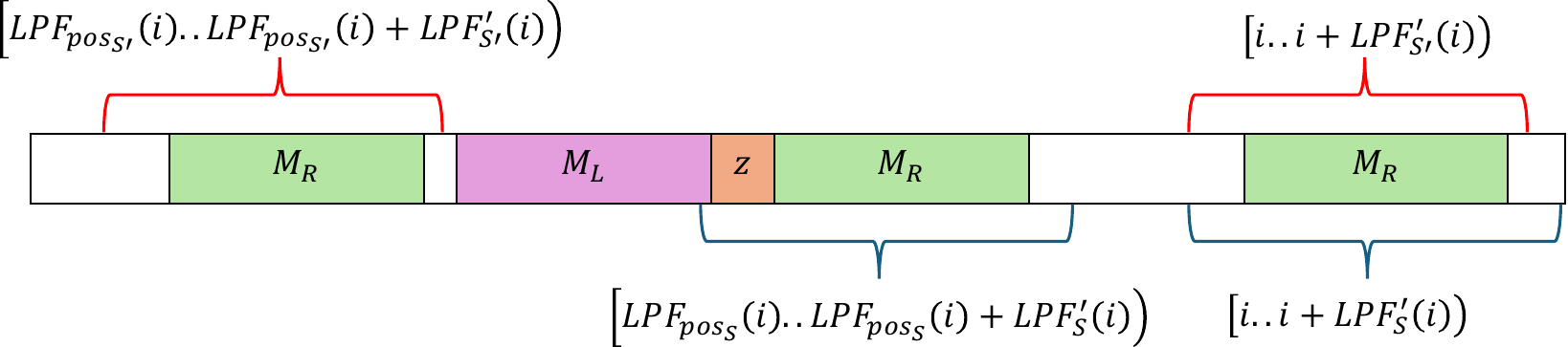}
      \caption{Example of $R$-heavy index. Notice that unlike $L$-heavy index, $\LPFpos$ is not the reason that $S'[i..i+\LPF_{S'}(i))$ contains an occurrence of $M_R$, as the equality between $i$ and $\LPFpos_{S}(i)$ in $S'$ 'breaks' at $z$, right before the start of $M_R$. }
        \label{fig:case2cintro}
\end{figure}

We remain with the case where $\LPF_{S'}(i) =a+b+1$ for some $b < m$ and $a=z-\LPFpos_S(i)$.
In this case, it must hold that $S'[i..i+\LPF_{S'}(i))$ is the leftmost occurrence of $S'[z-a..z+m]$ in $S'$ after $z+1$ (see Case \ref{case:complicated_shay} in the proof of \cref{lem:indextypes}).
We extend the definition of light indices to include these types of indices as well (i.e., first occurrences of $S'[z-a..z+b]$ after $z+1$ in $S'[z+1..n]$).
With that, we have finalized our characterization of the indices that change their $\LPF$ values.
 \begin{figure}[h!]
    \centering

    \includegraphics[width=0.7\linewidth]{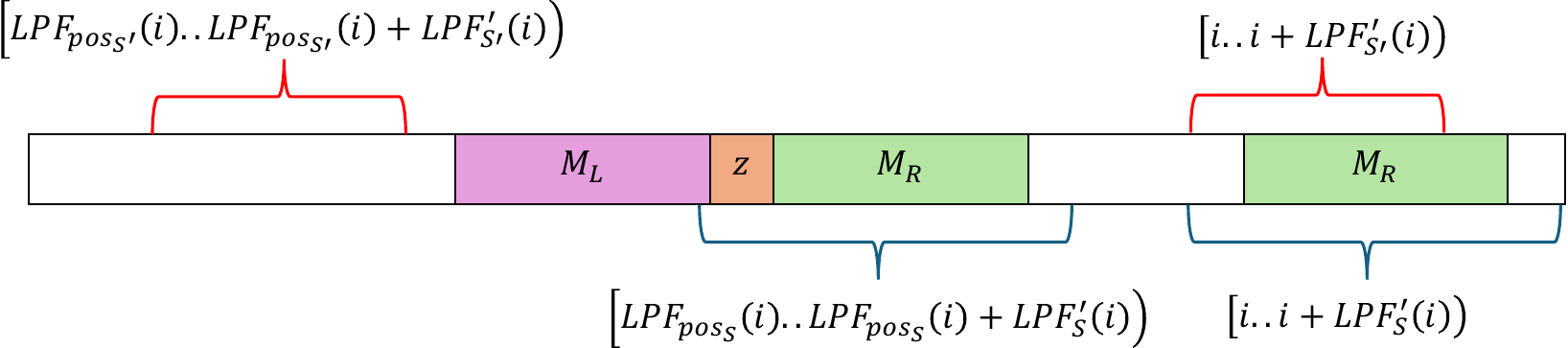}

    \caption{An example of light index of arising from $\LPF_{S'}(i) = a+b+1$.
    Here, since the new $\LPF$ value is small, it must hold that $i$ is the first occurrence after $z$ of $S[z-a..z+m]$.}
    \label{fig:case2c2intro}

\end{figure}

\para{Updating super-light and light indices.}
We proceed to show how to treat each type of indices, and properly update their parents in $\T_S$ efficiently.
We use top trees to represent $\T_S$ throughout the algorithm.
Similarly to \cite{BCR24}, we represent the $\LPF$-tree using dynamic trees in a manner that allows us to query for the depth of a node in $\Otild(1)$ time and apply link-cut updates in $\Otild(1)$ time.
This representation also allows for "interval" updates.
That is, given an interval of integers/nodes $[a..b]$ such that all nodes in $[a..b]$ have the same parent in $\T_S$, and a destination node $v$, set the parent of all nodes in $[a..b]$ to be $v$.
A key tool used in our algorithm is a data structure that given an index $i$, computes $\LPF_S(i)$ in a dynamic string $S$.
We show that there is an algorithm supporting this functionality with update and query time of $\Otild(1)$ (\cref{lem:subPMDS,lem:dynamicLPF})\footnote{Note that a dynamic $\LPF$ data structure immediately implies the upper bound of \cref{thm:weak}. Using such data structure, one can compute the $\LZss$ factorization of $S$ from scratch on query time in $\Otild(|\LZss(S)|)$ time.}.
Given such data structure, treating super-light indices is straight-forward.
The algorithm simply iterates over every index $i\in [z-m..z]$, and computes $\LPF_{S'}(i)$.
Then, the algorithm cuts $i$ from its parent and links it to $i+\LPF_{S'}(i)$.
This takes $\Otild(m)$ time.

To treat light indices, we develop a tailored tool that allows their identification by definition.
Specifically, we provide a data structure that, given substrings $P=S[i_P..j_P]$ and $T=S[i_T..j_T]$ of a dynamic string $S$, specified by their endpoints, returns the leftmost occurrence of $P$ in $T$.
The update and query time of this data structure are both $\Otild(1)$.
Using this tool, we can efficiently identify all light indices.
The algorithm iterates over every pair $(a,b) \in [m]\times [m]$ and finds $i_{a,b}$, the leftmost occurrence of $S[z-a..z+b]$ (and of $S'[z-a..z+b]$) in $S[z+1..n]$.
If $i_{a,b}$ exists, the algorithm proceeds as in the super-light case: it computes $\LPF_S(i_{a,b})$ and sets the parent of $i_{a,b}$ accordingly.
This takes $\Otild(m^2)$ time.

\para{Updating Heavy indices.}
We are left with updating $L$-heavy and $R$-heavy indices, which is the main technical challenge of the algorithm.
We discuss the manner in which we update all $L$-heavy indices, $ R$-heavy indices are treated in a completely symmetric manner.
Let us focus on a single occurrence of $M_L$ at index $i$ that exists both in $S$ and in $S'$, i.e., $S[i..j] = S'[i..j] = M_L$ for $j=i+m-1$.
As a subroutine, our algorithm finds all $L$-heavy indices $h$ such that $S[i..j]$ is a witness to them being $L$-heavy, i.e., $S[h..h+\min(\LPF_S(h),\LPF_{S'}(h)))$ contains $S[i..j]$.
We use the following helpful fact.
\begin{fact}[{cf. \cite[{Lemma 19}]{BCR24}}]\label{fact:techovmonotonic}
    For every $a<b \in [n]$, it holds that $a+\LPF_S(a)\le b + \LPF_S(b)$.
\end{fact}
It follows that there exists some $\ell_{\min}$ such that $h + \LPF_S(h) > j$ if and only if $h\ge \ell_{\min}$.
Therefore, using our data structure for $\LPF$ queries, we can binary search for the threshold value $\ell_{\min}$.
The indices $[\ell_{\min}..i]$ are exactly the set of indices $h$ such that $S[i..j]$ is contained in $S[h..h+\LPF_S(h))$.
We can apply the same binary search in $S'$ to obtain an interval $[\ell'_{\min}..i]$ that contains exactly all indices such that $S[h..h+\LPF_{S'}(h))$ contains $S[i..j]$.
By definition, $[\max(\ell_{\min},\ell'_{\min})..i]$ is exactly the set of $L$-heavy occurrences with $S[i..j]$ being a witness for their $L$-heaviness.
By doing the above for every occurrence of $M_L$, we can identify all $L$-heavy occurrences.

Two challenges emerge when advancing with this approach.
Firstly, $M_L$ may have a lot of occurrences (possibly $\Omega(n)$), and this approach clearly takes at least as much time as the number of $M_L$ occurrences in $S$.
Second, and more critical: unlike in the previous cases, there are possibly a lot ($\Omega(n)$) of $L$-heavy indices (even if there are only $O(1)$ occurrences of $M_L$).

As a warm-up, consider the case in which $M_L$ is aperiodic.
This immediately removes the first issue, as the number of occurrences of an aperiodic string of length $m$ in a string of length $n$ is bounded by $O(n/m)$.
Now, get an interval of indices for every one of the $O(n/m)$ occurrences of $M_L$, and the union of these intervals is exactly the set of all heavy indices.
Let us focus on one interval $[\ell..i]$ of $L$-heavy indices with the witness $M_L = S[i..j]$.
In order to properly set the parents of all of these indices, we need to gather more information regarding the $\LPF$ values of the indices in $[\ell..i]$ both in $S$ and in $S'$.
To do this, we use the notion of \textit{critical indices}.
An index $c\in [\ell..i]$ is critical (in $S'$) if it has $c + \LPF_{S'}(c) > (c-1) + \LPF_{S'}(c-1)$.
If we collect all critical indices $c_1,c_2, \ldots, c_t$ in $[\ell..i]$, we essentially have a partition of $[\ell..i]$ into $t+1$ intervals $I'_1,I'_2,\ldots I'_{t+1}$ which we denote as $\I_i^{S'}$.
These intervals have the property that, within each interval, all indices $h$ share the same value of $h + \LPF_{S'}(h)$; that is, they have the same parent in $\T_{S'}$.
The critical indices can be found in $\Otild(t)$ time as follows.
Let $x_k = c_k-1 + \LPF_{S'}(c_k-1)$ (initially, $x_0 = i + \LPF_{S'}(i)$).
The next critical index $c_{k+1}$ can be found by looking for the minimal $h\in [\ell .. i]$ such that $h + \LPF_{S'}(h) = x_k$.
This can be done using a binary search, employing the $\LPF$ data structure at every step.
Now, the next $x_{k+1}$ can be found using another $\LPF$ query.
This way, the algorithm finds $c_1,c_2,\ldots,c_t$ while spending $\Otild(1)$ time per critical index, for a total of $\Otild(t)$ running time.

The next natural step is to apply interval updates to $\T_S$.
Namely, for each interval $I'_k = [a..b]$ implied by the critical indices, set the parent of all nodes in $[a..b]$ to $a+\LPF_{S'}(a)$.
Notice that this update may be invalid: the interval update has a precondition, requiring that all vertices in the moved interval $[a..b]$ have the same parent in $\T_S$.
To comply with this condition, we also find critical indices in $S$ in the same manner, and obtain another partition $I_1,I_2,\ldots$ which we denote as $\I^S_i$.
Combining the partitions $\I^{S'}_i$ and $\I^S_i$ we obtain a single refined partition $\I_i$ such that for every interval $[a..b]\in\I_i$ in the refined partition, all indices $h$ have both $h+\LPF_S(h) = a +\LPF_S(a)$ and $h+\LPF_{S'}(h) =a + \LPF_{S'}(a)$.
Therefore, we can apply a move interval operation on every interval in $\I_i$ and it would be both correct (due to all indices in the interval having the same parent in $\T_{S'}$) and valid (due to all indices in the interval having the same parent in $\T_S$).
With this, we have updated all heavy occurrences that correspond to the occurrence of $M_L$ in $S[i..j]$ in $\Otild(|\I_i|)$.
With this, the update of the parents of all $L$-heavy indices is concluded.

\para{Bounding the running time for the heavy indices.}
We are left with the task of analyzing the running time of the algorithm.
It is relatively easy to show that $|\I_i| \in O(n/m)$.
On a high level, each critical index $c_x$  implies some equality $S[c_x..c_x + \LPF_S(c_x)) = S[\LPFpos_S(c_x) .. \LPFpos_S(c_x) + \LPF_S(c_x))$.
In particular, since $S[c_x ..c_x+\LPF_S(c_x))$ contains $S[i..j]=M_L$, there is some occurrence of $M_L$ aligned with $S[i..j]$ in $S[\LPFpos_S(c_x)..\LPFpos_S(c_x) + \LPF_S(c_x))$.
Denote this occurrence as $o_x$, we claim that there is no pair of two critical indices $c_x,c_y$ that have $o_x=o_y=o$.
Intuitively, if both $c_x$ and $c_y$ align $S[i..j]$ with $S[o..o+m-1]$, the length of the remainder of their right agreements with their corresponding $\LPFpos$ past $j$ should be the same, as it depends solely on $\LCP_S(i,o)$.
This is a contradiction to the fact that $c_x + \LPF_S(c_x) \neq c_y + \LPF_S(c_y)$ for every two critical indices.

Since there are $O(n/m)$ occurrences of $M_L$ (recall that we consider now the case where $M_L$ is aperiodic), and each critical index in $\I^S_i$ corresponds to a unique occurrence, we have that $|\I^S_i| \in O(n/m)$.
Due to the same arguments, we also have $|\I^{S'}_i| \in O(n/m)$ and therefore $|\I_i|\in O(n/m)$.
The running time of the algorithm for updating $L$-heavy indices is $\sum_{S[i..j]=S'[i..j]=M_L}\I_i$.
According to the bound above this sums up to $O(\sum_{S[i..j]=S'[i..j]=M_L} O(n/m)) = O((n/m)^2)$.
It follows from this analysis that
the total running time for updating the parents of all indices is $\Otild((n/m)^2+m^2+m) = O((n/m)^2+m^2)$.
Sadly, this expression is balanced when $m=\sqrt{n}$ and yields an $\Otild(n)$ running time.
This is not an improvement over simply computing $\LZss(S)$ from scratch after every update.
Our final running time follows from a tighter, and rather surprising, bound on the sum of all critical indices (see \cref{sec:chain}).
That is, we show that $\sum_{S[i..j]}|\I_i| \in O(|\Occurrences{S}{M_L}|\log^2n)= O((n/m) \log^2 n)$ (where $\Occurrences{S}{M_L}$ is the set of occurrences of $M_L$ in $S$).
The proof of this bound relies on a heavy-path decomposition of suffix trees and carefully crafted charging arguments.
This immediately improves the running time algorithm to $\Otild(n/m+m^2)$, which yields the desired $\Otild(n^{2/3})$ running time by selecting $m = n^{1/3}$.

\para{The periodic case.}
In the case where $M_L$ is periodic, the above analysis is not sufficient, as $|\Occurrences{S}{M_L}|$ can not be bounded by $O(n/m)$.
Via a novel analysis of the behavior of $\LPF$ and $\LPFpos$, we manage to show that it is always sufficient to run the above algorithm on a subset $A$ of $O(n/m)$ occurrences of $M_L$, and it is still guaranteed that all $L$-heavy indices are properly updated.
These occurrences of $A$ have the property that $\sum_{i \in A} |\I_i| \in O(n/m)$, which allows us to recreate the running time of $\Otild(n^{2/3})$ in the general case.

\subsubsection*{Lower Bound}
We provide an evidence for the optimality of our algorithm by showing a reduction from the Orthogonal Vectors problem to dynamic $\LZss$ maintenance.

\para{Orthogonal vectors.}
We consider the following variant of the Orthogonal Vectors problem.
Initially, a set $A$ of $d$-dimensional Boolean vectors (for some $d$ which is sub-polynomial in $n$) is given for preprocessing, and then a set $B$ is queried.
The goal is to decide if there are vectors $\ourvec{v}\in A$ and $\ourvec{u}\in B$ such that $\ourvec{v}\cdot \ourvec{u} = 0$, where $\ourvec{v} \cdot \ourvec{u}= \bigvee_{j=1}^d(\ourvec{v}[j] \wedge \ourvec{u}[j]) $ denotes the boolean inner product of $\ourvec{u}$ and $\ourvec{v}$.
It was proved by \cite{AVW21} that no algorithm can solve this problem with polynomial preprocessing time and $O(n^{2-\varepsilon})$ query time, where $n = |A|=|B|$ and $\varepsilon > 0$, unless SETH is false.

\para{Reduction from orthogonality to $|\LZss(S)|$.}
In this overview, we draw a connection between the size of $\LZss(S)$ and vector orthogonality.
The complete reduction appears in \cref{sec:lb}.
Consider a string $S$ that consists of three parts, $S=D \#  G_B(\ourvec{u})  \# G_A(\ourvec{v})$ where every $\#$ represents a fresh symbol.
We construct the components of $S$ using strings over alphabet that contains $\{0_i,1_i \mid i\in [d+1]\}$.
Additionally, the alphabet contains unique symbols, each occurring at most once in $S$ and represented as $\#$.
For a vector $\ourvec{v}\in A$, we define
\[G_A(\ourvec{v}) = \bigodot^d_{i=1}\begin{cases}
    0_i0_i& \ourvec{v}[i]=0\\
    1_i1_i& \ourvec{v}[i]=1
\end{cases}\,,\]
i.e., each coordinate $i$ is encoded by two consecutive occurrences of $\ourvec{v}[i]$ using the alphabet letter dedicated to this coordinate.
The first part of $S$, called the \emph{dictionary}, consists of \[D=\bigodot^d_{i=1}0_i0_i\,\,\#\,\,0_{i}0_{i+1}\,\,\#\,\,0_i1_{i+1}\,\,\#\,\,1_{i}0_{i+1}\,\,\#\,\,1_i1_{i+1}\,\,\#.\]
In words, $D$ contains a set of two $0$s for every possible coordinate, and for every $(x,y)\in \{ 0,1\}^2$, and coordinate number $i\in [d]$, $x_i$ followed  $y_{i+1}$.
Notice that these sequences naturally occur in $G_A(\ourvec{v})$ on transitions between coordinates.

\para{Intuition for the dictionary.}
Let us provide some intuition for the definition of $D$.
The length we are about to analyze is the number of $\LZss$-phrases in $\LZss(S)$ that cover $G_A(\ourvec{v})$, denoted as $L_{\ourvec{v}}$.
One functionality of $D$ is that it forces $L_{\ourvec{v}}$ to be at most $d+1$.
In particular, if $S=D \# G_A(\ourvec{v})$ the phrases covering $G_A(\ourvec{v})$ have the following structure.
Let $i' \in [d]$ be the first coordinate such that $\ourvec{v}[i'] = 1$.
The first $i'-1$ phrases would be $0_10_1, 0_20_2, \ldots , 0_{i'-1}0_{i'-1}$ (each is a substring occurring in $D$).
Once the $i'$ coordinate is met, there is no substring $1_{i'}1_{i'}$ anywhere in $D$ or elsewhere in $G_A(\ourvec{v})$, so the next phrase would have to be $1_{i'}$ from $D$ (as a partial word of $1_{i'}0_{i'+1}$, for instance).
This leaves one trailing $1_i$ symbol to be covered by the next phrase.
From this point on, there will always be one trailing symbol that is not covered by the previous phrase.
Specifically, the rest of the phrases would be of the form $\ourvec{v}[i]_i\ourvec{v}[i+1]_{i+1}$, covering the trailing letter of the $i$th coordinate from the previous phrase and failing to cover the next $(i+1)$th coordinate completely within the same phrase.
It can be easily verified that no phrase longer than the ones discussed here are available at any point throughout the $\LZss$ scan of $G_A(\ourvec{v})$.
This leads to $d+1$ phrases for a non-zero vector $\ourvec v$.
Specifically, there are $i'-1$ phrases before the $i'$th coordinate, followed by the phrase $1_{i'}$, then, $d-i'$ phrases that cover the trailing symbol of the previous phrase and the first symbol of the next coordinate, and finally a last phrase $\ourvec{v}[d]_d$ to cover the trailing symbol of the last coordinate.
We have shown that the $L_{\ourvec{v}} = d+1$ if $S=D \# G_A(\ourvec{v})$.
Notice that $D$ also allows the $\LZss$ factorization to 'skip' over $0$ coordinates in one step, and that the factorization of $G_{A}(\ourvec{v})$ transitions is forced to use phrases with trailing symbols only once it encounters a $1$ coordinate.
Clearly, introducing $G_B(\ourvec{u})$ between $D$ and $G_A(\ourvec{v})$ (regardless of its exact content) can only decrease the number of phrases covering $G_A(\ourvec{v})$, as $L_{\ourvec{v}}$ could use substrings of $G_B(\ourvec{u})$ in addition to substrings of $D$ (and itself).

\para{$B$'s gadget.}
We define $G_B(\ourvec{u})$ as follows,
\[G_B(\ourvec{u}) = \bigodot^d_{i=1}\begin{cases}
    0_i0_i\#& \ourvec{u}[i]=1\\
    1_i1_i\#& \ourvec{u}[i]=0
\end{cases}\]
In particular, $G(\ourvec{u})$ contains the substring $1_i1_i$ if and only if $\ourvec{u}[i]=0$.
Now, we claim that in $S=D \# G_B(\ourvec{u}) \# G_A(\ourvec{v})$, the value of $L_{\ourvec{v}}$ is $d$ if  $\ourvec{v} \cdot \ourvec{u}=0$ and $L_{\ourvec{v}} = d+1$ otherwise.
It should be easy to see that if $\ourvec{v} \cdot \ourvec{u}=0$, then $L_{\ourvec{v}} = d$.
The $0_i0_i$ sub-words in $D$ allows skipping over a coordinate $i$ in $G_A(\ourvec{v})$ with $\ourvec{v}[i]=0$ with a single phrase, and for every $i$ such that $\ourvec{v}[i]=1$, orthogonality implies that $\ourvec{u}[i] = 0$, and therefore the substring $1_i1_i$ is in $G_B(\ourvec{u})$, which again allows skipping over a coordinate gadget in one step for a total of $d$ steps to cover all of $G_A(\ourvec{v})$.
For the other direction, if there is some first index $i'$ such that $\ourvec{u}[i']=\ourvec{v}[i']=1$, the phrases preceding the $i'$th coordinate will be able to skip a coordinate each, but when the $i'$th coordinate would be reached, there is no occurrence of $1_{i'}1_{i'}$ to copy, and the remaining phrases would have to carry a trailing symbol as discussed before.

\para{Construction for the set $A$.}
Let us describe a natural, yet \textbf{faulty} construction, that builds upon the previous discussion.
Given sets $A$ and $B$ of vectors, build $S = D \# G_B(\ourvec{1}) \# \bigodot_{\ourvec{v}\in A}G_A(\ourvec{v})$.
Initialize dynamic data structures for reporting $|\LZss(S)|$ and $|\LZss(D \# G_B(\ourvec{1})\#)|$.
Notice that $\sum_{\ourvec{v}\in A}L_{\ourvec{v}}$ is exactly the difference between the length of the $\LZss$ factorization of these two strings.
(We highlight that this is an abuse of notation, as $L_{\ourvec{v}}$ is the number of phrases in $\LZss(S)$ covering $G_A(\ourvec{v})$, and here $L_{\ourvec{v}}$ is in a wider context where multiple $G_A(\cdot)$ gadgets are present in $S$.)
Now, for every vector $\ourvec{u}\in B$, modify $G_B(\ourvec{1})$ to be $G_B(\ourvec{u})$ using $O(d)$ substitutions.
It follows from our previous analysis that if $\ourvec{u}$ is not orthogonal with any vector of $A$, $\sum_{\ourvec{v}\in A}L_{\ourvec{v}} = |A|(d+1)$, and otherwise the sum would be strictly smaller.
Let $|A| = |B| = n$, and to avoid clutter, let us assume that $d \in \Otild(1)$.
The above discussion concludes in an algorithm that reports if there is a pair of orthogonal vectors in $A$ and $B$ using a dynamic $\LZss$ data structure on a string of length $\Otild(n)$ by applying $\Otild(n)$ updates.
It follows from \cite{AVW21}  that assuming SETH, the running time of each update can not be $O(n^{1-\varepsilon})$ for any $\varepsilon > 0$.

It should be clear that the above reasoning is problematic, as it contradicts \cref{thm:lz77}.
The main problem is that the intricate interaction between $G_B(\ourvec{u})$ and $G_A(\ourvec{v})$ is overshadowed by the interactions between the many copies of $G_A(\ourvec{v})$ included in this reduction.
In particular, if $A$ contains a vector with $\ourvec{v}[i] = 1$ for some coordinate $i$, all vectors $\ourvec{w}$ that have their gadget $G(\ourvec{w})$ appear to the right of $G(\ourvec{v})$ in $S$ can 'skip' the $i$th coordinate by copying from $\ourvec{v}$, regardless of $G_B(\ourvec{u})$.
To address this issue, we need to make sure that any gadget $G_A(\ourvec{v})$ can not use the gadgets preceding it in $S$ other than $G_B(\ourvec{u})$ and $D$ in any meaningful way.
To achieve this, on a high level, we have to introduce a larger version of $G_A(\ourvec{v})$ on each subsequent concatenation, which results in $|S| \in \Otild(n^{1.5})$.
Now, \cite{AVW21} suggests that it is impossible that all of the $\Otild(n)$ updates have a running time of $O(n^{1-1.5\varepsilon}) = O(|S|^{2/3-\varepsilon})$, which proves \cref{thm:intro-lb}.

\subsection{Open Problems}
We note that the data structures of \cref{thm:lz77,thm:weak} use $\tilde{O}(n)$ words of space.
A major open problem for future research is whether one can design a data structure with the same runtime guarantees as in \cref{thm:lz77}, or at least with sublinear update and query times, while using only $\tilde{O}(|\LZss(S)|)$ space.
The main obstacle to improving the space usage lies in our reliance on the $\LPF$-tree (see below), which inherently requires $\Theta(n)$ space.
Any positive resolution of this problem would therefore require the development of an alternative set of tools.

While we focused on developing a data structure with polylogarithmic query time, it may be interesting to explore different trade-offs between query and update times for a dynamic Lempel-Ziv data structure.
In particular, one may consider the opposite end of the query–update trade-off spectrum, aiming to design an algorithm with polylogarithmic update time.
While \cref{thm:weak} states that $\Otild(1)$ update time and $\Otild(|\LZss(S)|)$ is possible, it remains an open problem to develop a data structure with $\Otild(1)$ update time and guaranteed $O(n^{1-\eps})$ query time (even for highly non-compressible strings).

Another interesting direction is to study other compression schemes in dynamic settings. How efficiently can other variants of Lempel-Ziv (LZ78~\cite{LZ78}, LZ-END~\cite{kreft2010navarro}, LZ-SE \cite{SNYI25}, or LZ without overlaps~\cite{KKNPS17}) be maintained dynamically?
Similar questions can be asked about entirely different compressibility measures, such as the run-length encoding of the Burrows-Wheeler Transform~\cite{BW94} and substring complexity~\cite{RBKNP21}.

\renewcommand{\LPF}{\mathsf{LPF}'}
\section{Preliminaries}
\para{Integers notations.}
We use range notation to denote consecutive ranges of integers.
For integers $i,j$ denote $[i..j]=\{i,i+1,\ldots, j\}$ and $[i..j)=[i..j-1]$ (if $j<i$, $[i..j]=\emptyset$).
Denote $[i]=[1..i]$.
We sometimes deal with arithmetic progressions of integers.
We represent a set $A= \{a,a+d,a+2d,\ldots ,b=a+d\cdot (|A|-1) \}$ by the triplet $(a,b,d)$.
We call $d$ the \emph{difference} of the arithmetic progression $(a,b,d)$.

\para{Strings.}
A string $S$ of length $|S|=n$ over an alphabet $\Sigma$ is a sequence of symbols, $S=S[1]S[2]S[3]\ldots S[n]\in\Sigma^n$.
We denote a substring $S[i]S[i+1]\ldots S[j]$ by $S[i..j]$.
We say that $S[i..j]$ is a substring that starts at index $i$ and ends at index $j$.
We call a substring that starts at index $1$ a prefix of $S$, and a substring that ends at index $n$ a suffix of $S$.
We denote the reversed string of $S$ as $S^R=S[n]S[n-1]\ldots S[1]$.
Let $\bigodot$ denote
concatenation (in increasing order of the running index).

For a pattern $P$, we denote $\Occurrences{S}{P}=\{i\mid S[i..i+|P|-1]=P\}$.
We call an index $i\in \Occurrences{S}{P}$ an occurrence of $P$ in $S$, and we say that $P$ occurs in $S$ if $\Occurrences{S}{P} \neq \emptyset$.

For two strings $S$ and $T$, the length of longest common prefix of $S$ and $T$ is denoted by $\LCP(S,T)=\max \{\ell \mid S[1..\ell] = T[1..\ell]\}$.
The length of the longest common suffix of $S$ and $T$ is denoted by $\LCS(S,T)=\LCP(S^R,T^R)$.

For a string $S$ of length $n$ and two indices $i,j \in [n]$, we denote $\LCP_S(i,j) = \LCP(S[i..n],S[j..n])$ and $\LCS_S(i,j) = \LCS(S[1..i],S[1..j])$.

We say that an integer $p\ge 1$ is a period of $S$ if $S[i]=S[i+p]$ for every $i\in [1..n-p]$.
We say that $p$ is the period of $S$, denoted as $\per(S)$ if it is the minimal period of $S$.
We say that $S$ is periodic if $\per(S) \le \frac{n}{2}$.
Otherwise, we say that $S$ is aperiodic.
A run in a string $S$ is a maximal periodic substring.
Formally, $R= S[i..j]$ is a run if $R$ is periodic with period $p$ and $S[i-1] \ne S[i-1+p]$, and $S[j+1] \ne S[j+1-p]$ (if one of the indices is undefined, the inequality is considered satisfied).
We use the following fact regarding runs.

\begin{fact}[{\cite[{Lemma 8}]{BGS24}}]\label{fact:index_coverd_two_pruns}
Let $S$ be a string, and $p\in[|S|]$. Every index $i$ in $S$ intersects with at most two runs with period $p$.
\end{fact}

The following is a well known fact.
For completeness, we prove it in \cref{sec:missing_proofs}.
\begin{fact}\label{fact:break_period}
    Let $S[1..n]$ be a periodic string with period $p$.
    Then for $x\ne S[n-p+1]$ the string $Sx$ is aperiodic.
    Symmetrically, for $x \ne S[p]$, the string $xS$ is aperiodic.
\end{fact}

The following is a well-known fact regarding strings and periodicity.
It follows directly from the periodicity lemma (\cite{FiWi65}).
\begin{fact}\label{fact:aperfarapart}
    Let $T[1..n]$ and $P[1..m]$ be two strings, let $p=\per(P)$, and let $i<j$ be two consecutive occurrences of $P$ in $T$.
    It must hold that either $j-i=p$, or $j-i \ge \frac{m}{2}$.
    In particular, if $P$ is aperiodic then $j-i \ge \frac{m}{2}$.
\end{fact}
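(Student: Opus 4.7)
The plan is to split on the size of $j-i$ relative to $m/2$. The case $j-i\ge m/2$ is already the desired conclusion, so all the work is in showing that $j-i<m/2$ forces $j-i=p$. The tools will be an elementary overlap argument, the Fine--Wilf periodicity lemma, and the fact that $i$ and $j$ are \emph{consecutive} occurrences of $P$ in $T$.

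Assume $j-i<m/2$. Then the two occurrences of $P$ at positions $i$ and $j$ overlap in $T$, and matching $T[j..i+m-1]$ both as a suffix of the copy at $i$ and as a prefix of the copy at $j$ yields $P[\ell]=P[\ell+(j-i)]$ for every $\ell\in[1..m-(j-i)]$. Hence $j-i$ is itself a period of $P$. Since $p$ is also a period and $p+(j-i)<m$, Fine--Wilf gives that $\gcd(p,j-i)$ is a period of $P$; by the minimality of $p$ this common period must equal $p$, so $p\mid (j-i)$. Writing $j-i=kp$ with $k\ge 1$, it remains only to rule out $k\ge 2$.

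To rule out $k\ge 2$, the plan is to exhibit an occurrence of $P$ at position $i+p$, which lies strictly between $i$ and $j$ and therefore contradicts the consecutiveness of $i,j$. The character $T[i+p+\ell-1]$ is read off in two regimes: when $\ell\le m-p$ the index stays inside the occurrence at $i$, and $p$ being a period of $P$ directly gives $T[i+p+\ell-1]=P[p+\ell]=P[\ell]$; when $\ell\ge m-p+1$ the index $i+p+\ell-1$ instead lies inside the occurrence at $j$ (which is possible precisely because $kp\le m$), yielding $T[i+p+\ell-1]=P[\ell-(k-1)p]$, and then iterating the period-$p$ relation $k-1$ times returns $P[\ell]$.

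The only delicate point in this last step is keeping each intermediate index inside $[1..m]$ so that every use of $P[\cdot]=P[\cdot+p]$ is legal; this reduces to the inequality $\ell-(k-1)p\ge 1$, which follows from $\ell\ge m-p+1$ together with $kp\le m$. Once $i+p$ is established as an occurrence of $P$ in $T$, consecutiveness of $i$ and $j$ is violated, forcing $k=1$ and hence $j-i=p$. The ``in particular'' clause then drops out immediately: if $P$ is aperiodic then $p>m/2$, so both alternatives $j-i=p$ and $j-i\ge m/2$ yield $j-i\ge m/2$.
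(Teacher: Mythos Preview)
Your argument is correct and is exactly the Fine--Wilf-based derivation the paper alludes to (the paper gives no proof beyond citing the periodicity lemma). One small point to make explicit: the hypothesis $p+(j-i)<m$ you feed into Fine--Wilf is not justified as stated; it holds because $j-i$ is already a period of $P$, whence $p\le j-i<m/2$ and thus $p+(j-i)\le 2(j-i)<m$.
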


In particular, \cref{fact:aperfarapart} suggests the following structure of occurrences of a string.
\begin{fact}\label{fact:per_structure}
    For a string $P[1..m]$ with period $p$ and a string $T[1..n]$, the occurrences of $P$ in $T$ can be partitioned into $O(n/m)$ non-overlapping arithmetic progressions.
    Each arithmetic progression has a difference $p$, and the occurrences implied by the progression are exactly the occurrences of $P$ contained in the same run of $S$ with period $p$.
    In particular, if $(a,b,p)$ is an arithmetic progression then $a-p,b+p\notin\Occurrences{T}{P}$.
\end{fact}

For a string $P$, we denote the set of arithmetic progressions representing the occurrences of $P$ in a string $T$ that arises from \cref{fact:per_structure} as $\Clusters_T(P)$, and every arithmetic progression that contains all the occurrences of $P$ in some run of $T$ with difference $p$ is called a cluster.

\para{Trees.}
We assume basic familiarity with trees and tree-based data structures.
For a node $v$ in a rooted tree $T$, we denote by $\depth_T(v)$ the number of edges in the unique simple path from $v$ to the root of $T$.
Given a node $v$ and an integer $k$, the $k$-th ancestor of $v$, denoted by $\LA_T(v, k)$, is the ancestor of $v$ whose depth is exactly $\depth_T(v) - k$.
For two nodes $u,v\in T$, we denote by $\LCA_T(u,v)$ the lowest common ancestor of $u$ and $v$ in $T$.

\para{Tries.}
An edge-labeled tree (over an alphabet $\Sigma$) is a tree with a root node $r$ such that every edge is assigned a label $\sigma\in \Sigma$.
Each node $v$ in an edge-labeled tree is associated with a string $L(v)$, also called the label of $v$.
The root $r$ is associated with the empty string $L(r) = \varepsilon$, and every non-root node $v$ has $L(v) = L(u) \cdot \sigma$ such that $u$ is the parent of $v$ and $\sigma$ is the label of the edge $(u,v)$.
Notice that for every two nodes $u,v$ in an edged-labeled tree with $\LCA(u,v)=z$, it holds that $L(z) = L(u)[1..\LCP(L(u),L(v)]]$.

A \emph{trie} is an edge-labeled tree that is derived from a set of strings as follows.
We define the trie of a set $\mathcal{S}$ of strings recursively.
The trie of an empty set $\mathcal{S}= \emptyset$ is a single root vertex.
Let $T'$ be the trie of a set $\mathcal{S}$ of $t\ge 0$ strings.
For a string $S$, let $A= \argmax_{A \in \mathcal S }\LCP(S,A)$.
Let $v$ be the node in $T'$ with $L(v) = S[1..\LCP(S,A)]$.
The trie of $\mathcal{S} \cup \{ S\}$ is obtained by adding a path from $v$ of length $|S| - \LCP(S,A)$ with the labels $S[\LCP(S,A)+1],S[\LCP(S,A)+2],\dots,S[|S|]$ on the edges.
Notice that the label of the node at the end of this path is $S$.
Also notice that if $\LCP(S,A) = |S|$, nothing is added to $T'$.
In words, the trie $T$ of $\mathcal S$ is the minimal edge labeled tree such that for every string $S$ in $\mathcal S$ there is a vertex $v\in T$ with $L(v) = S$.

For two strings $S,S'\in\mathcal{S}$
let $v$ and $v'$ be the nodes in $T$ such that $L(v)=S$ and $L(v')=S'$.
It holds that $\LCP(S,S')=\depth_T(\LCA_T(v,v'))$.

\para{$\LZss$.} The Lempel-Ziv \cite{LZ77} factorization is obtained by the following algorithm.
Given a string $S$, output a sequence of  \emph{phrases}, denoted $\LZss(S)$.
The algorithm starts by initializing a left to right scan with $i=1$ and an empty sequence $P$ of phrases.

At every iteration, if $i = n+1$, the algorithm halts and returns the sequence $P$.
Otherwise, the algorithm adds a new phrase as follows.
If $S[i]$ is a new character, i.e., $\Occurrences{S[1..i-1]}{S[i]}= \emptyset$, add $S[i]$ as a phrase to $P$ and continue with $i \leftarrow i+1$.
Otherwise, let $j$ be an index in $[i-1]$ that maximizes $\LCP_S(i,j)$.
The algorithm appends $(j,\LCP_S(i,j))$ to $P$ as a new phrase and continues with $i \leftarrow i+\LCP_S(i,j)$.

We denote the final sequence of phrases obtained from applying the above procedure on $S$ as $\LZss(S)$.

\subsection{Dynamic Strings}
\para{Edit operations.}
In this paper, we develop and analyze algorithms over a dynamic string $S$ over alphabet $\Sigma$ that undergoes \textit{edit operations}.
An edit operation on $S[1..n]$ can be one of three options listed below, each resulting in a new string $S'$.
\begin{enumerate}
    \item \textit{Substitution:} Represented as a pair $(i,\sigma)\in [n]\times \Sigma$.
    Results in $S'=S[1..i-1] \cdot \sigma \cdot S[i+1..n]$.
    \item \textit{Deletion:} Represented as an index $i\in [n]$.
    Results in $S'=S[1..i-1]  \cdot S[i+1..n]$.
    \item \textit{Insertion:} Represented as a pair $(i,\sigma)\in [n+1]\times \Sigma$.
    Results in $S'=S[1..i-1] \cdot \sigma \cdot S[i..n]$.
\end{enumerate}
A dynamic algorithm receives as inputs a sequence of operations, each represented by the type of the edit operation (substitution, deletion, or insertion) and an index or a pair representing the operation.
When describing our algorithms, we abstain from providing a direct implementation for the substitution update, as it can be implemented using a pair of deletion and insertion.

\para{Dynamic indices.}
Even though edit operations are very local, a single operation may 'shift' a large number of indices.
For instance, when applying a deletion at index $i$, every index $j\ge i$ in $S$ becomes the index $j-1$ in $S'$.
This phenomenon is problematic for any data structure that stores indices of the text, as every index $j>i$ stored implicitly in the data structure should be modified.
These 'shifts' also introduce clutter when making statements about strings with dynamic indices.

Therefore, we represent the indices of $S$ not as explicit numbers but as nodes in a dynamic tree.
Specifically, we store a balanced search tree $T_I$ with $n$ nodes, such that every node $v\in T_I$ stores the size of the sub-tree rooted at $v$ as auxiliary information.
The $i$th node in the in-order traversal of $T_I$ \textit{is} the $i$th index of $S$.
Note that due to the auxiliary information, one can find the $i$th node in $O(\log n)$ time given $i$, and also, given a node, one can find the index $i$ corresponding to it in $O(\log n)$ time.
When an index is deleted (resp.\ inserted) from $S$, we delete (resp.\ insert) the corresponding index (node) from $T_I$, which automatically shifts all other indices.

With this framework, when discussing an index $i$ of a dynamic string, we actually refer to the node in $T_I$ representing $i$ rather than the actual numeric value $i$.
This introduces an additional multiplicative factor of $O(\log n)$ whenever our algorithm accesses an index of the dynamic text.
To avoid clutter, we ignore this dynamic index implementation when describing algorithms and instead assume that indices are stored explicitly.

\subsection{Stringology Tools}

First, we make use of a data structure with the following functionality.

\begin{lemma}[{\cite[{Section 8}]{kempa2022dynamic}}]\label{lem:dympillar}
There is a data structure that maintains a dynamic string $S$ that undergoes edit operations and supports the following queries:
\begin{enumerate}
    \item $\LCP$ query: Given two indices $i,j$, report $\LCP_S(i,j)$.
    \item $\LCS$ query: Given two indices $i,j$, report $\LCS_S(i,j)$.
    \item $\IPM_S(P,T)$ query: Given a pattern $P=S[i_P..j_P]$ and a text $T=S[i_T..j_T]$, both substrings of $S$ represented using their endpoints such that $|T| \le 2|P|$, return $\Occurrences{T}{P}$ represented as at most two arithmetic progressions with difference $ \per(P)$.
\end{enumerate}
All updates and queries take $\Otild(1)$ time.
\end{lemma}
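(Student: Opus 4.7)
The plan is to reproduce the construction of Kempa and Kociumaka by layering three standard components on top of the dynamic-index tree $T_I$ already described. First, I would maintain \emph{dynamic Karp--Rabin fingerprints} of $S$. Concretely, augment $T_I$ (a weight-balanced BST over the characters of $S$) so that each node $v$ stores, together with the subtree size, the Karp--Rabin hash $\phi(L_v)$ of the substring $L_v$ spelled out by the in-order traversal of the subtree rooted at $v$, together with the length $|L_v|$. Since the KR hash is compositional under concatenation, a single insertion, deletion, or substitution only changes $\phi$ along the $O(\log n)$ ancestors of the affected leaf, so every edit costs $\Otild(1)$ time. Given two indices $i\le j$, the fingerprint of $S[i..j]$ is assembled from $O(\log n)$ subtree fingerprints collected on the two root-to-leaf paths to the $i$th and $j$th leaves, again in $\Otild(1)$ time, and equality of two substrings is decided (w.h.p.) by comparing fingerprints.

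Second, I would support $\LCP_S(i,j)$ and $\LCS_S(i,j)$ by doing a doubling / binary search on the fingerprint: the largest $\ell$ with $\phi(S[i..i+\ell-1])=\phi(S[j..j+\ell-1])$ can be found with $O(\log n)$ substring-equality probes, each costing $\Otild(1)$, for a total of $\Otild(1)$ time (the polylogarithmic factor is absorbed by the $\Otild$ notation). The $\LCS$ case is symmetric, either by running the same search on $S^R$ or, more cheaply, by running the doubling search to the left. Error probability is controlled by picking the KR modulus polynomially large in $n$ and, if needed, maintaining two independent fingerprint families.

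Third, for the IPM query I would exploit the periodicity structure, following \cref{fact:aperfarapart}. Given $P=S[i_P..j_P]$ and $T=S[i_T..j_T]$ with $|T|\le 2|P|$, first compute $p=\per(P)$: using two $\LCP$ queries one can determine whether a candidate $p'$ is a period of $P$, and $p$ itself is obtained from the longest border of $P$, which in turn can be recovered by $O(\log n)$ $\LCP$ calls to $P$ against its own shifts (this is the standard pillar-model routine). If $P$ is aperiodic, \cref{fact:aperfarapart} says that any two occurrences of $P$ in $T$ lie at distance at least $|P|/2$, so $T$ contains only $O(1)$ candidate positions; each is verified by an $\LCP$ query against $i_P$ in $\Otild(1)$ time. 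If $P$ is periodic with period $p$, then within any window of length $|P|/2$ the occurrences form an arithmetic progression of difference $p$; since $|T|\le 2|P|$, they split into at most two such progressions. I would find the leftmost occurrence in each progression by extending the maximal run of $S$ with period $p$ that overlaps the relevant portion of $T$, using $\LCP/\LCS$ queries from the endpoints $i_T$ and $j_T$ to locate the boundaries of the run, and then computing the first and last positions in the run where $P$ fits. Each progression is then returned as a triple $(a,b,p)$ in $\Otild(1)$ total time.

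The main obstacle is the IPM step: naively, ``find the leftmost occurrence of $P$ in $T$'' is a full pattern-matching subproblem. The trick that makes it $\Otild(1)$ is that, once $p=\per(P)$ is known, the problem reduces to locating the single maximal $p$-periodic run of $S$ that contains each candidate occurrence, and this run is identified by $O(1)$ $\LCP$/$\LCS$ calls from the boundary of $T$. Everything else (fingerprints, edits, substring equality, LCE via doubling) is standard and contributes only polylogarithmic factors, all absorbed into $\Otild(1)$.
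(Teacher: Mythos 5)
You should first note that the paper does not prove this lemma at all: it is imported verbatim from \cite{kempa2022dynamic} (Section~8), where the $\IPM$ functionality is supported by the heavy dynamic-strings machinery (balanced recompression/grammar representation and the dynamic suffix-array framework), not by LCE queries plus periodicity. So your proposal is a from-scratch reconstruction. Its first two components are fine in outline: a balanced BST augmented with compositional Karp--Rabin fingerprints gives $\Otild(1)$ edits and substring equality, and binary search on top gives $\LCP_S$ and $\LCS_S$ in $\Otild(1)$ time (with the caveat that this is Monte Carlo, a weaker guarantee than the cited structure, which you should flag explicitly).

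The $\IPM$ part, however, has a genuine gap. First, your route is circular: you propose to compute $\per(P)$ by ``recovering the longest border with $O(\log n)$ $\LCP$ calls,'' but a single $\LCP$ query only \emph{tests} one candidate period, and the minimum over $\Theta(|P|)$ candidates admits no monotone binary search; in the literature, period queries are answered \emph{using} internal pattern matching (locating the occurrences of $P[1..\lceil|P|/2\rceil]$ inside $P$), i.e.\ exactly the primitive you are trying to build. Second, in the aperiodic case you write that $T$ ``contains only $O(1)$ candidate positions'' --- this conflates occurrences with candidates: \cref{fact:aperfarapart} bounds the number of \emph{occurrences} by $O(1)$, but there are still $\Theta(|P|)$ candidate starting positions in $T$, and you give no way to determine which of them are occurrences short of $\Theta(|P|)$ LCE verifications; locating these few occurrences in polylogarithmic time is precisely the hard core of IPM, and it is solved in \cite{kempa2022dynamic} with the grammar/suffix-array machinery, not with LCE queries alone. (Indeed, the paper's own \cref{lem:subPMDS} resorts to dynamic suffix arrays even for the easier existence version of this query.) Third, in the periodic case the maximal $p$-periodic run containing an occurrence need not touch $i_T$ or $j_T$, so $\LCP/\LCS$ queries anchored at the endpoints of $T$ can miss it entirely; anchoring at a position such as $i_T+|P|-1$, which every occurrence must cover, is the standard repair, but as written the step fails. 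As it stands, the clean fix is to cite the lemma as the paper does, or to genuinely invoke the Kempa--Kociumaka machinery for the $\IPM$ query rather than deriving it from LCE and periodicity.
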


It is well known that the following can be obtained from \cref{lem:dympillar}.
For the sake of cmpletness, we provide the proof in \cref{sec:missing_proofs}
\begin{lemma}\label{lem:ipmalllengths}
There is a data structure that maintains a dynamic string $S$ that undergoes edit operations and supports the following query:

Given a pattern $P=S[i_P..j_P]$ and a text $T=S[i_T..j_T]$, both substrings of $S$ represented using their endpoints, return $\Occurrences{T}{P}$ represented by $\Clusters_T(P)$.
The query time is $\Otild({|T|}/{|P|})$ and the update time of the data structure is $\Otild(1)$.
\end{lemma}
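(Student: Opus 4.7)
The plan is to apply a standard windowing reduction over \cref{lem:dympillar}: I reduce a query on a long text $T$ to $O(|T|/|P|)$ invocations of \cref{lem:dympillar}'s $\IPM$ primitive, each on a substring of $S$ of length at most $2|P|$.

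Concretely, I would partition $T$ into $k=\lceil |T|/|P| \rceil$ consecutive blocks $B_1,\ldots,B_k$ of length $|P|$ (with $B_k$ possibly shorter), each stored by its endpoints in $S$. For every $\ell \in [1..k-1]$, the substring $B_\ell \cdot B_{\ell+1}$ has length at most $2|P|$, so I can invoke the $\IPM$ query from \cref{lem:dympillar} on the pair $(P, B_\ell \cdot B_{\ell+1})$ in $\Otild(1)$ time and obtain at most two arithmetic progressions of difference $\per(P)$. To make sure every occurrence of $P$ in $T$ is reported exactly once, I restrict the output of the $\ell$-th invocation to occurrences whose starting position lies inside $B_\ell$: since an occurrence of $P$ has length exactly $|P|$, any occurrence starting in $B_\ell$ is entirely contained in $B_\ell \cdot B_{\ell+1}$ and is therefore captured by the $\ell$-th invocation. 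The restriction itself amounts to intersecting each returned progression with the integer interval spanned by $B_\ell$, which is again an arithmetic progression of the same difference and is computable in $O(1)$ time.

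Summing across the $k-1$ invocations yields $O(|T|/|P|)$ arithmetic progressions of difference $\per(P)$ in total query time $\Otild(|T|/|P|)$, while the update time of the structure is inherited verbatim from \cref{lem:dympillar}. I do not foresee any substantial obstacle; the only point worth flagging is the restriction-by-interval step, which must simultaneously avoid double-counting and preserve the common difference $\per(P)$ --- both are automatic for intersections of arithmetic progressions with intervals.
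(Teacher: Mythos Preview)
Your proposal is correct and follows exactly the same windowing reduction as the paper, which covers $T$ by overlapping length-$2|P|$ windows and invokes the $\IPM$ primitive of \cref{lem:dympillar} on each; the paper does not even bother to de-duplicate, since overlapping progressions are still an $O(|T|/|P|)$-size representation of $\Occurrences{T}{P}$. One small boundary slip to fix: your loop ranges over $\ell\in[1..k-1]$ and restricts the $\ell$-th output to starts inside $B_\ell$, so an occurrence starting at the first index of $B_k$ (possible when $|P|$ divides $|T|$, and also the case $k=1$) is never reported---either drop the restriction on the last window or, as the paper does, forgo de-duplication entirely.
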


\para{Substrings pattern matching.}
In \cref{sec:DynSPM}, we develop the following tool, which is crucial for our algorithm and may be of independent interest.
\begin{restatable}{lemma}{subPMDS}
\label{lem:subPMDS}
    There is a data structure that maintains a dynamic string that undergoes edit operations and can answer the following query:
    Given four indices $i_T,j_T,i_P,j_P$ representing two substrings $T=S[i_T..j_T]$ and $P = S[i_P .. j_P]$, return TRUE if there is an occurrence of $P$ in $T$, and FALSE otherwise.
    The query and update of the data structure take $\Otild(1)$ time.
\end{restatable}

We further exploit \cref{lem:subPMDS} to obtain the following data structure.
\begin{lemma}\label{lem:first_occ}
    There is a data structure that maintains a dynamic string $S$ that undergoes edit operations and given four indices $i_T$, $j_T$, $i_P$, and $j_P$, can answer the following  queries:
    \begin{enumerate}
        \item Find $\min(\Occurrences{S[i_T..j_T]}{S[i_P..j_P]})$.
        \item Find $\max(\Occurrences{S[i_T..j_T]}{S[i_P..j_P]})$.
    \end{enumerate}
    Each update and query takes $\Otild(1)$ time.
\end{lemma}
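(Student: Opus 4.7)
The plan is to reduce both queries to binary search on top of the $\SubPM$ data structure from \cref{lem:subPMDS}. Let $m = j_P - i_P + 1$ denote $|P|$. Observe that the leftmost occurrence of $P$ in $T = S[i_T..j_T]$ starts at the smallest index $k$ for which $P$ occurs in the prefix $S[i_T..k+m-1]$ of $T$; equivalently, it is the smallest $r \in [i_T+m-1..j_T]$ such that $\SubPM(i_T, r, i_P, j_P)$ returns TRUE, and then the leftmost occurrence is $r - m + 1$.

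I would first maintain the data structure of \cref{lem:subPMDS} under the edit operations, which already gives $\Otild(1)$ update time. To answer the minimum-occurrence query, perform binary search on $r \in [i_T + m - 1..j_T]$: at each step test whether $\SubPM(i_T, r, i_P, j_P)$ is TRUE. The predicate is monotone in $r$ (if $P$ occurs in a prefix, it occurs in every longer prefix), so standard binary search finds the smallest such $r$ in $O(\log n)$ rounds, each costing $\Otild(1)$, for a total of $\Otild(1)$. A symmetric binary search on the left endpoint $\ell \in [i_T..j_T - m + 1]$, using the monotone predicate $\SubPM(\ell, j_T, i_P, j_P)$, yields the maximum occurrence $\ell$ in the same time bound. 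If the initial call $\SubPM(i_T, j_T, i_P, j_P)$ returns FALSE we simply report that no occurrence exists.

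There is essentially no obstacle here beyond verifying the boundary conditions of the binary search and handling degenerate cases (e.g., $|T| < |P|$, or $P$ not occurring in $T$ at all), which are handled by the initial feasibility check. Correctness follows from the monotonicity just noted, and the complexity is $O(\log n)$ invocations of the $\Otild(1)$-time $\SubPM$ oracle, i.e., $\Otild(1)$ per query. The update time is inherited directly from \cref{lem:subPMDS}.
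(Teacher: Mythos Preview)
Your proposal is correct and follows essentially the same approach as the paper: maintain the $\SubPM$ data structure of \cref{lem:subPMDS}, first test feasibility, and then binary search on the right endpoint (respectively, symmetrically on the left endpoint) using the monotone predicate to locate the minimal (respectively, maximal) occurrence in $\Otild(1)$ time. Your write-up is in fact slightly more explicit than the paper's about the monotonicity justification and the symmetric case.
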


\section{LPF and LPF-Trees}\label{sec:LPFandLPFTree}

In this section we introduce several notations and data structures that are useful for maintaining the $\LZss$ factorization.
We note that some of the notations and data structures already appeared in \cite{BCR24}.

Let $S[1..n]$ be a string and let $i\in [n]$.
The \emph{Longest Previous Factor} of $i$ in $S$ (denoted as $\mathsf{LPF}_S(i)$) is the length of the longest prefix of $S[i..n]$ that occurs strictly before $i$.
Formally,
\[
\mathsf{LPF}_S(i) = \max_{j<i}(\LCP_S(i,j)).
\]

We also denote $\LPF_S(i)=\max(\mathsf{LPF}_S(i),1)$.
Additionally, let $\LPFpos_S(i)$ be the rightmost position that is a witness for $\mathsf{LPF}_S(i)$.
Formally,
\[
\LPFpos_S(i) = \max \{ j < i \mid \LCP_S(i,j)=\mathsf{LPF}_S(i)\}.
\]
It is straightforward that if the $i$th phrase in $\LZss(S)$ is $S[a..b)$  for $b\ne n+1$ then the $(i+1)$th phrase is $S[b..b+\LPF_S(b))$.

In \cite{BCR24}, the authors prove that the values $i+\LPF_S(i)$  are monotone, as stated formally below.

\begin{lemma}[{\cite[{Proof of Lemma 19}]{BCR24}}]\label{lem:19}
    For every text $S$ and $i\in [|S|-1]$, it holds that $i+\LPF_S(i)\le i+1+\LPF_S(i+1)$.
\end{lemma}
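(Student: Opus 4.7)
The statement $i + \LPF'_S(i) \le i + 1 + \LPF'_S(i+1)$ is equivalent to the one-step drop bound $\LPF'_S(i) \le \LPF'_S(i+1) + 1$, so I would phrase the proof as bounding the decrease of $\LPF'$ by one when shifting the index by one. The core idea is a ``shift by one'' argument on a witness occurrence: if a prefix of $S[i..n]$ of length $\ell$ matches a substring starting at some $j<i$, then, by dropping the first character of both copies, a prefix of $S[i+1..n]$ of length $\ell-1$ matches a substring starting at $j+1<i+1$.

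The plan is to split into two cases according to the size of $\LPF_S(i)$. \textbf{Case 1:} $\LPF_S(i) \le 1$. Then $\LPF'_S(i) = \max(\LPF_S(i),1) = 1$, and since $\LPF'_S(i+1) \ge 1$ by definition, the desired inequality $\LPF'_S(i) \le \LPF'_S(i+1)+1$ becomes $1 \le \LPF'_S(i+1)+1$, which holds trivially. \textbf{Case 2:} $\LPF_S(i) = \ell \ge 2$. By definition of $\LPF$, there exists $j \in [1..i-1]$ with $\LCP_S(i,j) \ge \ell$, meaning $S[j..j+\ell-1] = S[i..i+\ell-1]$. Stripping the first character from each side yields $S[j+1..j+\ell-1] = S[i+1..i+\ell-1]$, so $\LCP_S(i+1, j+1) \ge \ell-1$. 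Since $j+1 < i+1$ and $j+1 \ge 2 \ge 1$, the index $j+1$ is a valid witness, giving $\LPF_S(i+1) \ge \ell-1 \ge 1$. Therefore $\LPF'_S(i+1) \ge \ell - 1 = \LPF'_S(i) - 1$, as required.

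There is no real obstacle here; the statement is essentially a direct consequence of the definition of $\LPF$ once one notices that any witness occurrence at position $i$ can be shifted by one. The only point deserving care is the $\LPF' = \max(\LPF,1)$ correction: it matters precisely in the regime $\LPF_S(i) \in \{0,1\}$, which is exactly why Case 1 must be handled separately rather than mechanically writing $\LPF_S(i+1) \ge \LPF_S(i) - 1$ (an inequality that is meaningful only when $\LPF_S(i) \ge 1$). Once that case distinction is acknowledged, the proof is a two-line verification and requires no further tools beyond the definitions recalled at the start of this section.
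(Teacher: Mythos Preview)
Your proof is correct; the shift-by-one argument on a witness occurrence is exactly the standard route, and your case split handling the $\LPF'=\max(\LPF,1)$ correction is the right way to deal with the boundary $\LPF_S(i)\le 1$. The paper itself does not reproduce a proof of this lemma but simply attributes it to \cite{BCR24}, so there is nothing further to compare.
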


In the following lemma we show how to compute $\LPF_S(i)$ in $\Otild(1)$ time in dynamic settings.
\begin{lemma}\label{lem:dynamicLPF}
    There is a data structure that maintains a dynamic string $S$ that undergoes edit operations and that given an index $i$, returns $\LPF_S(i)$ in $\Otild (1)$ time.
    The update time of the data structure is $\Otild(1)$.
\end{lemma}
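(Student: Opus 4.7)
The plan is to reduce computing $\LPF'(i)$ to $O(\log n)$ $\SubPM$ queries on the data structure from \cref{lem:subPMDS}, via a binary search on the length $\ell$ of the candidate prefix of $S[i..n]$.

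First, I would observe that, for every $\ell \ge 1$,
\[
\LPF_S(i) \ge \ell \iff \text{$S[i..i+\ell-1]$ occurs in $S$ at some position $j < i$}.
\]
The key reformulation is that the latter event is equivalent to the (pure) substring-matching statement ``$P_\ell := S[i..i+\ell-1]$ occurs inside $T_\ell := S[1..i+\ell-2]$''. Indeed, any occurrence of $P_\ell$ at some $j<i$ ends at index $j+\ell-1 \le i+\ell-2$, so it fits in $T_\ell$; conversely, any occurrence of $P_\ell$ in $T_\ell$ must start at a position at most $(i+\ell-2)-\ell+1 = i-1$, i.e.\ strictly before $i$. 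This is the small technical point that needs care, since naively one might worry about the ``trivial'' occurrence of $P_\ell$ at position $i$ itself, but the choice of right endpoint $i+\ell-2$ in $T_\ell$ rules it out. I expect this reformulation to be the only non-mechanical step; everything else is bookkeeping.

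Next, I would note monotonicity: if $P_\ell$ occurs at position $j<i$, then any prefix $P_{\ell'}$ with $\ell' \le \ell$ also occurs at $j$, and $j+\ell'-1 \le i+\ell'-2$, so $P_{\ell'}$ occurs in $T_{\ell'}$. Hence the predicate ``$\SubPM(1,\, i+\ell-2,\, i,\, i+\ell-1) = \textsf{TRUE}$'' is monotonically non-increasing in $\ell$, so the largest $\ell \in [1..n-i+1]$ for which it holds can be found by binary search.

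Finally, the algorithm maintains the data structure of \cref{lem:subPMDS} under edits in $\Otild(1)$ time per update. On a query $i$, it performs the binary search described above, issuing $O(\log n)$ $\SubPM$ queries, each costing $\Otild(1)$, for a total query time of $\Otild(1)$. The value of $\ell$ returned by the binary search (or $0$ if the predicate fails even for $\ell=1$, which happens precisely when $S[i]$ is a new character) equals $\LPF_S(i)$, and we output $\max(\LPF_S(i),1)=\LPF'_S(i)$. Boundary cases ($i=1$, or $i=n$, or $T_\ell$ being empty) are handled by initializing the search with $\LPF'(i)=1$ before attempting any query.
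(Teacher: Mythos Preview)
Your proposal is correct and follows essentially the same approach as the paper: reduce $\LPF_S(i)$ to the largest $\ell$ for which $S[i..i+\ell-1]$ occurs in $S[1..i+\ell-2]$, and binary-search on $\ell$ using $\SubPM$ queries from \cref{lem:subPMDS}. Your explicit argument for monotonicity in $\ell$ is in fact cleaner than the paper's, which somewhat misleadingly cites \cref{lem:19} (a statement about monotonicity in $i$, not in $\ell$) to justify the binary search.
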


\begin{proof}
    We use the data structure of \cref{lem:subPMDS}.
    We demonstrate the method for calculating $\mathsf{LPF}_S(i)$, noting that the query $\LPF_S(i)$ simply provides $\max(\mathsf{LPF}_S(i),1)$.
    Observe that $\mathsf{LPF}_S(i) = \max\{\ell \mid \Occurrences{S[1..i+\ell-1)}{S[i..i+\ell)}\neq \emptyset \}$.
    In particular, for every $x \in [1..n]$ we have that $\Occurrences{S[1..i+x-1)}{S[i..i+x)}\neq \emptyset$ if and only if $x \le \mathsf{LPF}_S(i)$.
    The algorithm determines the correct value of $\mathsf{LPF}_S(i)$ by performing a binary search on the range $[0..n-1]$, checking if the current value $x$ is larger or smaller than $\mathsf{LPF}_S(i)$ by querying the data structure of \cref{lem:subPMDS}.
    Each step of the binary search is executed in $\Otild(1)$ time, so the overall time complexity of computing $\mathsf{LPF}_S(i)$ is $\Otild(1)$.
\end{proof}

Exploiting \cref{lem:dynamicLPF}, we establish the proof of the upper bound of \cref{thm:weak}.

\begin{proof}[Proof of \cref{thm:weak} (upper bound)]
Given a string $S$, the algorithm builds the dynamic data structure described in \cref{lem:dynamicLPF} that supports $\LPF$ queries.
Therefore, the algorithm takes $\Otild(1)$ time per update.
Upon $\LZss$ query, the algorithm computes the complete $\LZss(S)$ factorization using $|\LZss(S)|$ subsequent $\LPF$ queries.
Having obtained the $\LZss(S)$ factorization, answering all types of $\LZss$ queries is straightforward.
\end{proof}

\para{The $\LPF$-tree data structure.}
We adapt the framework of \cite{BCR24} to the fully dynamic settings.
Namely, \cite{BCR24} define $\LPF$-tree and show how to maintain it in the semi-dynamic settings.
We will show how to maintain the same structure in the fully dynamic settings.
The $\LPF$-tree of a string $S$ of length $n$, denoted as $\T_S$ is a tree $\T_S=(V,E)$ with nodes $V=[n+1]$ representing the indices of $S$ and an additional node $n+1$, and $E=\{ (i,j)\mid i\in [n], j= i+\LPF_S(i) \}.$\footnote{In~\cite{BCR24}, a label is also defined for every edge in the tree.
Our algorithm does not use these labels, hence we omit them.}
This defines a tree whose root is the node $n+1$.
For a node $i\in \T_S$, we denote $\depth_S(i)= \depth_{\T_S}(i)$.
The following connection between $\T_S$ and $\LZss(S)$ was observed by \cite{BCR24}.
\begin{observation}[cf. {\cite[{Observation 18}]{BCR24}}]\label{obs:lpflzconnection}
Let $\pi$ be the path from $1$ to the root of $\T_S$. Then, the number of edges in $\pi$ is $|\pi|= \depth_S(1)=|\LZss(S)|$.
Additionally, for $k \le |\pi|$, if the $k$th edge on $\pi$ is $(i,i+\LPF_S(i))$, then the $k$th phrase of $\LZss(S)$ is equal to $S[i..i+\LPF_S(i))$.
\end{observation}

\subsection{Updating an LPF-tree}\label{sec:lpftreeds}

In this section, we introduce the data structure representing the $\LPF$-tree.
The data structure maintains a rooted tree $U=(V,E)$ that represents $\mathcal{T}_S$.
In our representation, every node maintains a pointer to its parent in the tree.
Specifically, when an update modifies $S$ to be $S'$ the algorithm modifies $U$ from representing $\T_S$ to represent $\T_{S'}$.

\para{Supported updates.}
Our algorithm manipulates and queries $U$ via the following interface.
\begin{enumerate}
    \item $\Insert(v)$ - adds an isolated vertex $v$ (may temporarily make $U$ a forest).
    \item $\Delete(v)$ - removes a leaf $v$ and the edge outgoing from $v$.
    \item $\Link(v,u)$ - adds root $v$ as a child of node $u$.
    \item $\MoveInterval(v,[i..j])$ - gets an interval of vertices $[i..j]$ such that there is a vertex $u\in V$ with all the vertices in $[i..j]$ being children of $u$.
    Moves all the vertices in $[i..j]$ to be children of $v$ instead.
    We define $\MoveInterval(v,i)=\MoveInterval(v,[i..i])$.
    \item $\GetDepth(v)$ - returns $\depth_U(v)$.
    \item $\findIAncestor(v,i)$ - returns the $i$th ancestor of $v$.
\end{enumerate}
The authors of \cite{BCR24} show how to support this functionality using link-cut trees, achieving amortized $\Otild(1)$ time per operation.
By employing top trees \cite{AHLT05} instead of link-cut trees, this functionality can be obtained with worst-case $\Otild(1)$ update time.
We further explain this implementation in \cref{sec:top_trees}.
We call the data structure with the above functionality a \textit{dynamic tree}\footnote{The name "dynamic tree" is often used to describe data structures with link-cut functionality. We name the specialized data structure used in this paper "dynamic tree" as it implemented via such a classical dynamic tree.}.

\newcommand{\ND}{\mathsf{ND}}
\section{Dominant Extension Points}\label{sec:chain}
In this section, we present the concept of Dominant Extension points.
The combinatorial properties of dominant extension points proven in this section play a key role in the runtime analysis of our algorithm.

Let $S$ be a string and let $I\subseteq[n]$ be a set of indices in $S$.
For $i\in I$, let $I_i=I\cap[1..i-1]$ be all the indices of $I$ that are smaller than $i$.
For every $j\in I_i$ let $p_{i,j,S}=(\LCS_S(i,j),\LCP_S(i,j))$  and let $\ext_{I,i,S}=\{p_{i,j,S}\mid j\in I_i\}$.
Let $\NDExt{I}{i}{S}=\ND(\ext_{I,i,S})$ be the set of non-dominated points in $\ext_{I,i}$.
Formally, for a set of $2$-dimensional points $P$, the set of non-dominated points of $P$ is $\ND(P)=\{(x,y)\in P\mid \nexists_{(x',y')\in P\setminus \{(x,y)\}} x'\ge x \textit{ and } y' \ge y \}$.
When the string is clear from context we omit it from the notation.

Notice that $|\NDExt{I}{i}{S}|\le|\ext_{I,i,S}|\le |I_i|$ and therefore the sum $\sum_{i\in I} |\NDExt{I}{i}{S}|$ is bounded by $O(|I|^2)$.
In the next lemma we prove a tighter upper bound on this sum.

\begin{lemma}\label{lem:chain}$\sum_{i\in I} |\NDExt{I}{i}{S}|=O(|I|\log^2n)$
\end{lemma}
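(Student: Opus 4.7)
The plan is to bound $\Sigma_I$ by a charging argument based on heavy-path decompositions of two suffix trees. With the prescribed tie-breaking, $\chain{I}{i}{S}$ is precisely the ``upper staircase'' of the 2D points $\{(\LCP_S(i,j),\LCS_S(i,j))\mid j\in I_i\}$: a chain element $j$ is a Pareto-optimal index of $I_i$, and consecutive chain elements satisfy strictly decreasing $\LCP_S(i,\cdot)$ and strictly increasing $\LCS_S(i,\cdot)$. The two coordinates are tree distances: if $\mathsf{ST}$ denotes the suffix tree of $S$ and $\mathsf{ST}^R$ the suffix tree of $S^R$, then $\LCP_S(i,j)$ equals the string depth of $\LCA_{\mathsf{ST}}(i,j)$ and $\LCS_S(i,j)$ equals the string depth of $\LCA_{\mathsf{ST}^R}(i,j)$ (after mapping each position $i$ to its mirror in $S^R$).

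I would then compute a heavy-path decomposition of $\mathsf{ST}$ and, independently, of $\mathsf{ST}^R$, so that the root-to-leaf path of any index visits $O(\log n)$ heavy paths in each tree. For each chain pair $(i,j)$ I charge the pair to the triple $(i,H,H')$, where $H$ is the heavy path of $\mathsf{ST}$ containing $\LCA_{\mathsf{ST}}(i,j)$ and $H'$ is the heavy path of $\mathsf{ST}^R$ containing $\LCA_{\mathsf{ST}^R}(i,j)$. Since the ancestors of $i$ in each tree lie on only $O(\log n)$ heavy paths, there are at most $O(\log^2 n)$ distinct triples per $i\in I$.

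The key technical step---and the main obstacle---is to prove that each triple $(i,H,H')$ receives only $O(1)$ chain pairs. Given two chain elements $j_1\prec j_2$ in the chain of $i$ that share a triple, their $\mathsf{ST}$-LCAs $v_1,v_2$ lie on $H$ with $v_1$ a strict descendant of $v_2$ (because LCP strictly decreases along the chain), while the $\mathsf{ST}^R$-LCAs $w_1,w_2$ lie on $H'$ with $w_1$ a strict ancestor of $w_2$. I plan to combine the heavy-child majority property (the heavy subtree at any node contains at least half of its ancestor's leaves, so the ``off-path'' leaves where successive $j_l$ live are geometrically restricted) with the Pareto-optimality forced by chain membership, aiming to exhibit a witness $j^{\ast}\in I_i$ that dominates $j_1$ in both coordinates as soon as three such chain elements coexist. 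Fact~\ref{fact:aperfarapart} on occurrences of aperiodic patterns is likely helpful for ruling out dense clusters of candidate chain elements whose LCP/LCS profiles are nearly identical and force a periodic structure. Granted the $O(1)$-per-triple bound, $|\chain{I}{i}{S}|=O(\log^2 n)$, and summing over $i\in I$ yields $\Sigma_I=O(|I|\log^2 n)$.
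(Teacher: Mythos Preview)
Your high-level setup (two tries/suffix trees plus heavy-path decompositions, and a charging argument over pairs of heavy paths) is exactly the right framework and matches the paper. However, your key technical claim is false: it is \emph{not} true that a triple $(i,H,H')$ receives only $O(1)$ chain elements, and consequently it is \emph{not} true that every chain has length $O(\log^2 n)$. Here is a simple counterexample. Take $I=\{i,j_1,\ldots,j_m\}$ with gadgets arranged so that $\LCP_S(i,j_l)=m-l+1$ and $\LCS_S(i,j_l)=l$; then all $j_l$ are Pareto-optimal and $\chain{I}{i}{S}=j_1,\ldots,j_m$ has length $m=|I|-1$. In the trie $T_R$ of the suffixes $\{S[k..n]:k\in I\}$, each $j_l$ branches off the root-to-$i_R$ path at a node $v_l$ of depth $m-l+1$, and the $i$-side of $v_l$ contains $l$ leaves while the $j_l$-side contains one; hence the heavy path through $i_R$ contains \emph{all} the $v_l$. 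The same happens in $T_L$. Thus a single triple $(i,H,H')$ is hit $m$ times. Neither the heavy-child majority property nor periodicity considerations (\cref{fact:aperfarapart} is irrelevant here) can rescue an $O(1)$ bound in this direction.

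The paper's proof uses the same two decompositions but charges in the \emph{opposite} direction. For each $j\in\chain{I}{i}{S}$, if the LCA $u_{i,j}$ is the deepest LCA on its heavy path in either tree (``type~1''), charge it to $i$; there are only $O(\log n)$ such $j$ per $i$. Otherwise (``type~2''), charge it to $j$. The crucial lemma is then: for a fixed $j$, over all $k$ such that $j$ is a type-2 element of $\chain{I}{k}{S}$, the pair of heavy paths $(h^L_{j,k},h^R_{j,k})$ is distinct. The reason is that if two such $k$ shared the pair, then the type-2 condition forces both LCAs with $j$ to sit strictly above a common node on each heavy path, which in turn forces $\LCP(j,k)$ and $\LCS(j,k)$ to both strictly exceed $\LCP(i,k)$ and $\LCS(i,k)$ for one of the two $k$'s, contradicting $j\in\chain{I}{k}{S}$. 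So each $j$ absorbs at most $|H^j_L|\cdot|H^j_R|=O(\log^2 n)$ type-2 charges, and summing gives $\Sigma_I=O(|I|\log^2 n)$. The missing idea in your plan is precisely this role reversal: you must bound, for each \emph{element} $j$, how many chains it participates in, not the length of each chain.
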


\begin{proof}
Let $T_R$ be the trie over the suffixes $\mathcal{S}_R= \{S[i..n] \mid i \in I \}$.
Let $T_L$ be the trie over the prefixes $\mathcal{S}_L = \{ S[1..i]^R \mid i \in I \}$.
In $T_R$ (resp.\ in $T_L$) the node corresponding to $S[i..n]$ (resp.\ to $S[1..i]^R$) is called $i_R$ (resp.\ $i_L$).
The size of each string in $\mathcal{S}_R$ or in $\mathcal{S}_L$ is at most $n$, so the total size of $T_L$ and $T_R$ is $O(n^2)$ and in particular $\log |T_L| , \log|T_R|\in O(\log n)$.

For a direction $d\in \{ L,R\}$, let  $\LCA^d_{i}= \{\LCA_{T_d}(i_d,j_d) \mid p_{i,j}\in \NDExt{I}{i}{S}\}$.
Recall that $\depth_{T_R}(\LCA_{T_R}(i_R,j_R))=\LCP_S(i,j)$ and $\depth_{T_L}(\LCA_{T_L}(i_L,j_L))=\LCS_S(i,j)$.
Also, notice that by definition of non-dominated points, there are no two points $p_{i,j},p_{i,j'}\in\NDExt{I}{i}{S}$ with $\LCP_S(i,j) = \LCP_{S}(i,j')$ or with $\LCS_S(i,j) = \LCS_S(i,j')$.
This implies that the depths of elements $\LCA^L_i$ are distinct and the depths of elements in $\LCA^R_i$ are distinct.
This means that there are no two points $p_{i,j},p_{i,j'}\in \NDExt{I}{i}{S}$ with $\LCA_{T_L}(i_L,j_L) = \LCA_{T_L}(i_L,j'_L)$ or $\LCA_{T_R}(i_R,j_R) = \LCA_{T_R}(i_R,j'_R)$.

We make use of the following well-known fact.
\begin{fact}[Heavy path decomposition]\label{fact:heavypathdecompos}
    A rooted tree with $n$ nodes can be decomposed into simple paths, called \emph{heavy paths}, such that every node-to-root path intersects $O(\log n)$ heavy paths.
\end{fact}

We assume that $T_L$ and $T_R$ are partitioned into heavy paths.
For every $i\in I$ we denote by $\pi^L_i$ and $\pi^R_i$ the paths from $i_L$ to the root of $T_L$ and from $i_R$ to the root of $T_R$, respectively.
Denote by $H^L_i$ and $H^R_i$ the heavy paths that intersect with $\pi^L_i$ and with $\pi^R_i$, respectively.
Then $|H^L_i|+|H^R_i|\in O(\log n)$.

We make the following charging argument.
Let $p_{i,j}\in \NDExt{I}{i}{S}$.
Let $h^R_{i,j}$ and $h^L_{i,j}$ be the heavy paths containing $\LCA_{T_R}(i_R,j_R)$ and $\LCA_{T_L}(i_L,j_L)$, respectively.
We say that $p_{i,j}$ is a type 1 point of $\NDExt{I}{i}{S}$ if $\LCA_{T_R}(i_R,j_R)$ has the maximal depth among the vertices of $\LCA^R_i \cap h^R_{i,j}$ or if $\LCA_{T_L}(i_L,j_L)$ has the maximal depth in $\LCA^L_{i} \cap h^L_{i,j}$.
Otherwise, $p_{i,j}$ is a type 2 point of $\NDExt{I}{i}{S}$.
If $p_{i,j}$ is a type 1 element, we charge $p_{i,j}$ on $i$.
Otherwise, we charge $p_{i,j}$ on $j$.

Since both $\pi^L_i$ and $\pi^R_i$ visit $O(\log n)$ heavy paths, and since the depths of elements in $\LCA^L_i$ are distinct and the depths of elements in $\LCA^R_i$ are distinct, every $i\in I$ is charged on $O(\log n)$ points of type 1.

We next prove that every $j\in I$ is charged on $O(\log^2 n)$ points of type 2.
Let $\ChargeTwo(j)$ be the set of indices $i$ such that $p_{i,j}$ is a type 2 point in $\NDExt{I}{i}{S}$.
We claim that there are no two indices $i,i'\in \ChargeTwo(j)$ such that $(h^L_{i,j},h^R_{i,j})= (h^L_{i',j},h^R_{i',j})$.
Notice that since for every $j\in I$ it holds for every $k\in I$ that $(h^L_{k,j},h^R_{k,j}) \in H^{L}_j \times H^R_j$, the above claim directly implies $|\ChargeTwo(j)|=O(\log^2 n)$.
This would conclude the proof, as it shows that every index in $I$ is charged at most $O(\log^2 n + \log n)$ times, leading to $\sum_{i\in I} |\NDExt{I}{i}{S}| \in O(|I| \log^2 n)$.

Assume by contradiction that there are two elements $i<i'\in \ChargeTwo(j)$ such that $(h^L_{i,j},h^R_{i,j})=(h^L_{i',j},h^R_{i',j})$.
For $d\in \{L,R\}$, let $v_d$ be the lowest node in $h^d_{i,j}$ and let $z_d$ be the child of $\LCA_{T_d}(v_d,j_d)$ in $h^d_{i,k}$ (see \cref{fig:chain}).
Since $p_{i,j}$ is a point of type 2 in $\NDExt{I}{i}{S}$, there is a node in $\pi^d_i\cap h_{i,j}^d$ strictly bellow $\LCA_{T_d}(v_d,j_d)$.
Therefore, $z_d$ must exist, and it holds that $z_d\in \pi^d_i$.
By symmetric reasoning $z_d\in \pi^d_i\cap \pi^d_{i'}$ which implies
\begin{equation}\label{eq:depzd}
    \depth_{T_d}(\LCA_{T_d}(i_d,i'_d))\ge \depth_{T_d}(z_d).
\end{equation}
In addition,
\begin{equation}\label{eq:jisthetop}
\LCA_{T_d}(v_d,j_d)=\LCA_{T_d}(i_d,j_d)=\LCA_{T_d}(i'_d,j_d).
\end{equation}

\begin{figure}[h]
    \centering
    \includegraphics[width=0.6\textwidth]{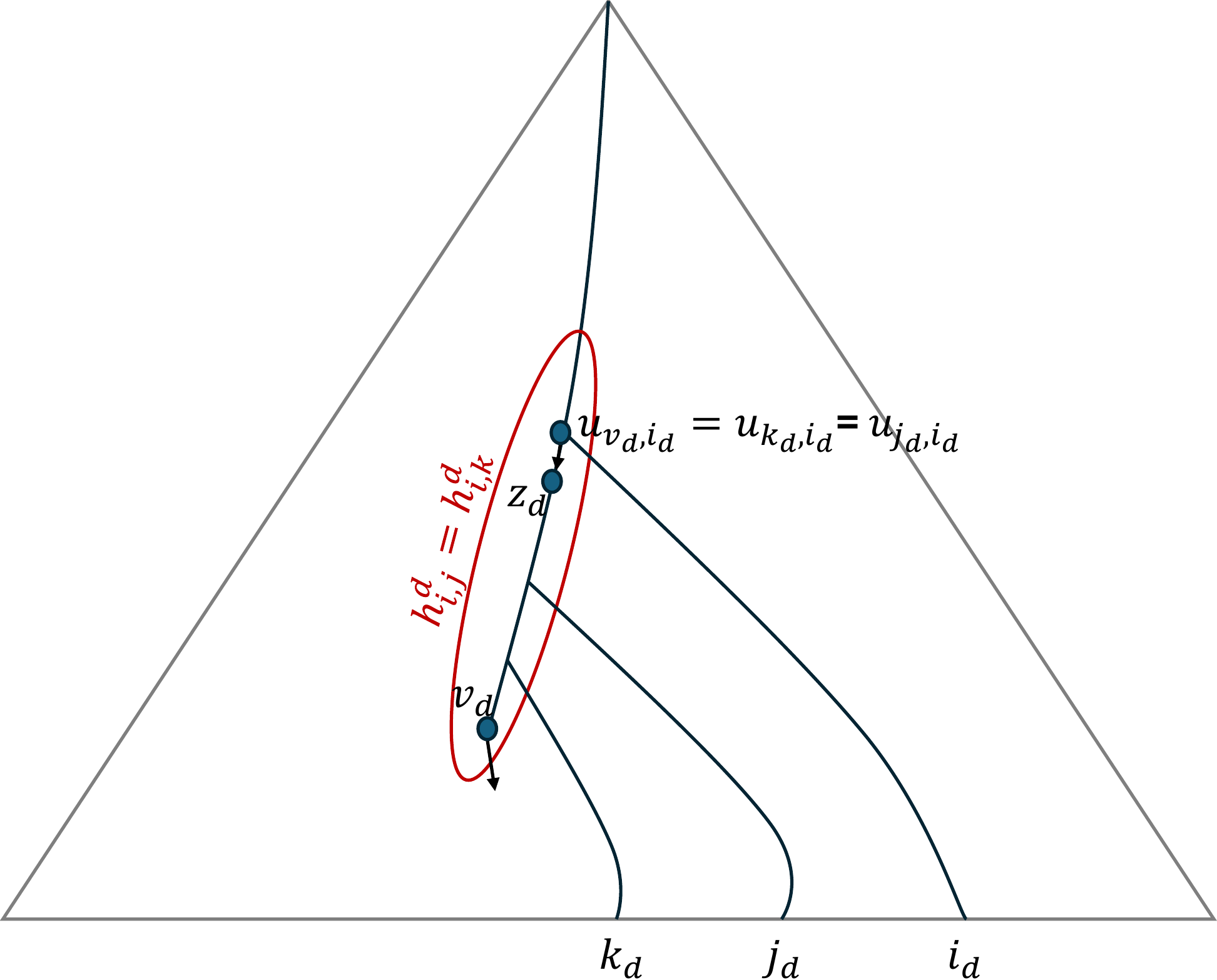}
    \caption{A schematic figure of $\T_d$.}
    \label{fig:chain}
\end{figure}

\noindent
Therefore,
\begin{align*}
\LCP_S(i',i)&= \depth_{T_R}(\LCA_{T_R}(i'_R,i_R))\stackrel{(\ref{eq:depzd})}{\ge} \depth_{T_R}(z_R)\\&>\depth_{T_R}(\LCA_{T_R}(v_R,j_R))
\stackrel{(\ref{eq:jisthetop})}{=}\depth_{T_R}(\LCA_{T_R}(i'_R,j_R))=
\LCP_S(i',j)
\end{align*}
By a symmetric argument $\LCS_S(i',i)>\LCS_S(i',j)$.
Therefore $p_{i',i}\in\ext_{I,i'}$ dominates on $p_{i',j}$ (i.e. is strictly larger in both coordinates), contradicting the assumption $p_{i',j}\in\NDExt{I}{i'}{S}$.
\end{proof}

\section{Types of Indices}\label{sec:alg}

Our data structure maintains an instance $U$ of a dynamic tree described in \cref{sec:LPFandLPFTree} that represents the $\LPF$ tree $\T_S$ throughout the dynamic sequence of updates.
This section is dedicated to describing how to properly update $U$ to represent $\T_S$ when an edit operation is applied to $S$.
That is, we assume that $U$ currently represents $\T_S$ for the dynamic string $S$, and that an edit operation is applied to $S$ at index $z$, resulting in $S'$.
We describe an algorithm that modifies $U$ to represent $\T_{S'}$.
We fix the notation of $S$,$S'$, and $z$.
Also, we refer to the update as $(S,S')$.

Let $m$ be an integer to be fixed later.
Let $M_L$ be the substring of length $m$ to the left of $z$, i.e., $M_L = S[z-m..z-1]$, and let $M_R$ be the substring of length $m$ to the right of $z$, i.e., $M_R = S[z+1..z+m]$.\footnote{If $z \le m$ or $z \ge n-m+1$, $M_L$ and $M_R$ are truncated at the edges of $S$, we ignore that in future discussion to avoid clutter.} See \cref{fig:string}.

\begin{figure}[h!]
    \centering
\begin{center}
\begin{tikzpicture}
    \draw (1,0) rectangle (10,0.5);

    \draw[fill=purple!40] (3,0) rectangle (5,0.5) node[pos=.5] {\Large $M_L$};

    \draw[fill=orange!50] (5,0) rectangle (5.5,0.5) node[pos=.5] {};

    \draw[fill=green!30] (5.5,0) rectangle (7.5,0.5) node[pos=.5] {\Large $M_R$};

    \node[below] at (5.25,0) {$z$};
\end{tikzpicture}
\end{center}
    \caption{An illustration of $M_L$, $z$, $M_R$.}
    \label{fig:string}
\end{figure}

An index $i$ is called \emph{active} if $i$'s parent in the $\LPF$-tree changed as a result of the update, i.e., $\LPF_S(i) \neq \LPF_{S'}(i)$.
Clearly, only active indices need to be found and updated by our algorithm.
Let $i$ be an active index.
We consider several classes of indices.
Firstly, \textit{heavy} indices are indices that have very large $\LPF$ values both before and after the update.
\textit{Super light} and \textit{light} are two other classes: super light indices are indices with $i$ close to $z$, and light indices are indices $i$ with either $\LPFpos_S(i)$ or $\LPFpos_{S'}(i)$ close to $z$.
We provide an algorithm for identifying all of the indices of each type, and update its parent in $U$ to its possibly new parent in $\T_{S'}$.
We also show that every active index $i$ must belong to one of these classes.

We formally define heavy indices as follows.
\begin{definition}[L-Heavy index]\label{def:heavy}
    We say that index $i\in [|S'|]$ is an \emph{$L$-heavy index} (resp.\ \emph{$R$-heavy index}) for the update $(S,S')$ if $\LPF_{S}(i) \neq \LPF_{S'}(i)$ and there is an index $k \in \Occurrences{S}{M_L} \cap \Occurrences{S'}{M_L}$ (resp.\ $M_R$) such that $k\in [i.. i+ \min (\LPF_{S}(i),\LPF_{S'}(i))-m]$.
\end{definition}

We also formally define that an index $i$ is super light if $i\in [z-m..z]$ and that an index $i$ is light if for some $T\in\{S,S'\}$  there are integers $a,b\in[0..m]$ such that $i=\min(\Occurrences{T[z+1..n]}{T[z-a..z+b]})$.
In the following lemma, we show that all active indices belong to (at least) one of the defined classes.
\begin{lemma}\label{lem:indextypes}
    Every active index $i\in [|S'|]$ is either $L$-heavy, $R$-heavy, super light, or light.
\end{lemma}

\begin{proof}
Let $i \in [|S'|]$ be an active index.
In the following we assume that $\LPF_S(i)>\LPF_{S'}(i)$.
The case where $\LPF_{S'}(i) > \LPF_{S}(i)$ can be proved in a similar manner, switching the roles of $S$ and $S'$.
Notice that due to our assumption that $\LPF_S(i) > \LPF_{S'}(i)\ge 1$ we have in particular $\LPF_S(i) > 1$ and therefore $\LPF_S(i) = \mathsf{LPF}_S(i)$.

\begin{claim}\label{clm:cases}
    $z\in[i..i+\LPF_{ S}(i))\cup[\LPFpos_{S}(i)..\LPFpos_{{S}}(i)+\LPF_{{S}}(i))$
\end{claim}

\begin{claimproof}
    Assume to the contrary that $z\not\in[i..i+\LPF_{S}(i))$, and $z\not\in  [\LPFpos_{{S}}(i)..\LPFpos_{{S}}(i)+\LPF_{{S}}(i))$.
    We therefore have that $S'[i..i+\LPF_{S}(i))= S[i..i+\LPF_{S}(i))= S[\LPFpos_S(i)..\LPFpos_S(i)+\LPF_{S}(i))= S'[\LPFpos_S(i)..\LPFpos_S(i)+\LPF_{S}(i))$ (the second equality follows from the definition of $\LPFpos$).
    In particular, $\LPF_{S'}(i) \ge \LPF_S(i)$, a contradiction.
\end{claimproof}

We consider the two cases arising from \cref{clm:cases} regarding the position of $z$ (see \cref{fig:cases}).

\begin{figure}[h]
    \centering
    \begin{subfigure}{\textwidth}
        \centering
        \begin{minipage}{0.65\textwidth}          \includegraphics[width=\linewidth]{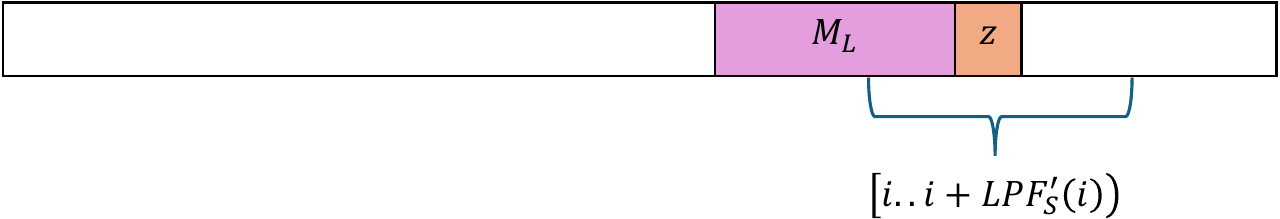}
        \end{minipage}
        \hspace{15pt}
        \begin{minipage}{0.25\textwidth}
            \caption{Example of Case 1, super light index.}
            \label{fig:case1a}
        \end{minipage}
    \end{subfigure}

    \vspace{10pt}

    \begin{subfigure}{\textwidth}
        \centering
        \begin{minipage}{0.65\textwidth}          \includegraphics[width=\linewidth]{case1b.pdf}
        \end{minipage}
        \hspace{15pt}
        \begin{minipage}{0.25\textwidth}
            \caption{Example of Case 1, $L$-heavy index.}
            \label{fig:case1b}
        \end{minipage}
    \end{subfigure}

    \vspace{10pt}

    \begin{subfigure}{\textwidth}
        \centering
        \begin{minipage}{0.65\textwidth}
            \includegraphics[width=\linewidth]{case2a.pdf}
        \end{minipage}
        \hspace{15pt}
        \begin{minipage}{0.25\textwidth}
            \caption{Example of Case 2a, $L$-heavy index.}
            \label{fig:case2a}
        \end{minipage}
    \end{subfigure}

    \vspace{10pt}

    \begin{subfigure}{\textwidth}
        \centering
        \begin{minipage}{0.65\textwidth}
            \includegraphics[width=\linewidth]{case2b.pdf}
        \end{minipage}
        \hspace{15pt}
        \begin{minipage}{0.25\textwidth}
            \caption{Example of Case 2b - light index.}
            \label{fig:case2b}
        \end{minipage}
    \end{subfigure}

    \vspace{10pt}

    \begin{subfigure}{\textwidth}
        \centering
        \begin{minipage}{0.65\textwidth}
            \includegraphics[width=\linewidth]{case2c1.pdf}
        \end{minipage}
        \hspace{15pt}
        \begin{minipage}{0.25\textwidth}
            \caption{example of Case 2c, $R$-heavy case.}
            \label{fig:case2c}
        \end{minipage}
    \end{subfigure}

    \begin{subfigure}{\textwidth}
        \centering
        \begin{minipage}{0.65\textwidth}
            \includegraphics[width=\linewidth]{case2c2.pdf}
        \end{minipage}
        \hspace{15pt}
        \begin{minipage}{0.25\textwidth}
            \caption{Example of Case 2c, light index.}
            \label{fig:case2c2}
        \end{minipage}
    \end{subfigure}

    \caption{An illustration of the cases in the proof of \cref{lem:indextypes}.
    In each of the illustrations above, the blue parenthesis below the string marks the equality $S[i..i+\LPF_S(i)) = S[\LPFpos_S(i) ..\LPFpos_S(i) + \LPF_S(i))$.
    The red parenthesis above the string demonstrate the equivalent equality in $S'$.\label{fig:cases}}
\end{figure}

\begin{enumerate}
    \item
\underline{Case 1}: $z\in[i..i+\LPF_{ S}(i))$.
    If $z-i \le m$, we have that $i$ is a super light index (see \cref{fig:case1a}).
    Otherwise, we prove that $i$ is an $L$-heavy index (see \cref{fig:case1b}).
    Let $d=(z-1)-i < \LPF_{ S}(i)$, be the difference between $i$ and the end of $M_L$.
    Since the update does not change the content of $S$ in indices smaller than $z$, we have $S[i..i+d] = S'[i..i+d]$.
    From the definition of $\LPFpos_{ S}(i)$ and from $d < \LPF_{S}(i)$ we have $S [i..i+d] =  S[\LPFpos_{ S}(i).. \LPFpos_{S}(i) + d]$.
    Since $\LPFpos_{S}(i) < i$, we have that $\LPFpos_{ S}(i) + d < z-1$.
    Again, since the update does not affect indices smaller than $z$, we have
    $S'[\LPFpos_{S}(i)..\LPFpos_{S}(i) + d]=S[\LPFpos_{S}(i)..\LPFpos_{S}(i) + d]$.
    From transitivity, we have $S'[i..i+d] = S'[\LPFpos_{S}(i).. \LPFpos_S(i) + d]$.
    This implies that $\min(\LPF_S(i),\LPF_{S'}(i)) = \LPF_{S'}(i)  \ge d+1=z-i> m$.
    It follows that $i$ is an $L$-heavy index since for $k=z-m$ we have that $k\in[i..i+\min(\LPF_S(i),\LPF_{S'}(i))-m]$ and $k$ is an occurrence of $M_L$ both in $S$ and in $S'$.

\item    \underline{Case 2}: $z\in [\LPFpos_{{S}}(i)..\LPFpos_{{S}}(i)+\LPF_{{S}}(i)) \setminus [i..i+\LPF_S(i))$.
Let $\lpos={S}[\LPFpos_{{S}}(i)..z)$ and $\rpos={S}[z..\LPFpos_{{S}}(i)+\LPF_{{S}}(i))$.
There are three sub-cases:
\begin{enumerate}
    \item  $|\lpos|\ge m$ (see \cref{fig:case2a}).
    Since $\lpos$ occurs before $i$ both in $S$ and in $S'$, and since $S[i..i+\LPF_S(i)) = S'[i..i+\LPF_S(i))$ due to $z\notin [i..i+\LPF_S(i))$, we have that $\min(\LPF_S(i),\LPF_{S'}(i))\ge|\lpos|$.
    It follows that $\LPFpos_{S}(i) + \min(\LPF_{S}(i), \LPF_{S'}(i)) \ge z$.
    From $S[i..i+\LPF_S(i))=S[\LPFpos_S(i)..\LPFpos_S(i)+\LPF_S(i))$ we have, in particular, $S[i+|\lpos|-m.. i+|\lpos|) = S[\LPFpos_{S}(i) + |\lpos|-m..\LPFpos_{S}(i) + |\lpos|) = S[z-m..z)=M_L$.
    We have shown that there is an occurrence of $M_L$ in $S$ at index $i+|\lpos|-m$.
    That is an occurrence in $S'$ as well, as $S[z-m..z]$ is not affected by the update.
    We have also shown that $\min(\LPF_S(i), \LPF_{S'}(i)) \ge |\lpos|$, which together indicates that $i$ is an $L$-heavy index.

    \item  $|\lpos|< m$ and $|\rpos|\le m$ (see \cref{fig:case2b}).
    In this case, we have $z-m \le \LPFpos_{S}(i) < i$.
    If $i\in [z-m,z]$ then $i$ is super light.
    Otherwise, $i\ge z+1$.
    Notice that in this case, $S[i..i+\LPF_{S}(i))= S[\LPFpos_{S}(i) .. \LPFpos_{S}(i) + \LPF_{S}(i))=S[z-a..z+b]$ for  $a=|\lpos|<m,b=|\rpos|-1<m$.
    Recall that among all occurrences of $S[i..i+\LPF_{S}(i))$ to the left of $i$, the occurrence at $\LPFpos_{S}(i)$ is the rightmost one.
    Therefore, $i$ is a light index as the first occurrence of ${S}[z-a..z+b]$ after $z$.

    \item \label{case:complicated_shay}$|\lpos|< m$ and $|\rpos|> m$ (see \cref{fig:case2c,fig:case2c2}).
    Note that in this case $\LPFpos_{S}(i) = z-a$ for some $a\in [m]$.
    As in the previous case, $i > z-a \ge z-m$ and therefore if $i\le z$ then $i$ is a super light index.
    We assume $i \ge z+1$.
    Let $j=\min(\Occurrences{{S}[z+1..n]}{{S}[z-a..z+m]})$.
    Notice that $j\le i$, also notice that $j=\min(\Occurrences{{S'}[z+1..n]}{S[z-a..z+m]})$ since the content of $S$ after index $z$ was not changed.

    If $j=i$, then $i$ is a light index.
    Otherwise, we claim that $i$ is $R$-heavy.
    Notice that there is an occurrence of $M_R= S[z+1..z+m]$ at index $k=i+a+1$ both in $S$ and in $S'$.
    Additionally, due to the indices after $z+1$ not being affected by the update we have $\min(\LCP_S(i,j),\LCP_{S'}(i,j)) \ge a+m+1$.
    It follows from the maximality of $\LPF$ that $\min(\LPF_S(i),\LPF_{S'}(i)) \ge a+m+1$ and therefore  $k=i+a+1\in[i..i+(a+m+1)-m]\subseteq [i.. i+\min(\LPF_S(i),\LPF_{S'}(i))-m]$, as required.\qedhere
\end{enumerate}
\end{enumerate}
    \end{proof}

In the rest of the section we describe how to modify $U$, which  initially represents $\T_S$, to represent $\T_{S'}$.
We present three separate algorithms, one for updating each type of indices.
By updating a type of indices, we mean that after the algorithm for this type is applied, every index $i$ of this type has its parent in $U$ properly set to $i+\LPF_{S'}(i)$.
Notice that non-active indices do not need to be treated.
The algorithms for the light and for the super light indices are straightforwardly implemented following the definition of the types.
The $L$-heavy and $R$-heavy indices require a more intricate care.

\para{Super light indices.}
To update all super light Indices, the algorithm applies the following procedure.
For every $i\in [z-m-1..z]$, compute $\LPF_{S'}(i)$ on $S'$ using \cref{lem:dynamicLPF} and apply $\MoveInterval(i+\LPF_{S'}(i),i)$ to $U$.
It is easy to see that all super light indices have their parents properly set after this procedure is applied.
The following directly follows.
\begin{observation}\label{obs:superlighttime}
    There is an algorithm that sets the parent of every super light index $i$ of the update to $i+\LPF_{S'}(i)$ in $\Otild(m)$ time.
\end{observation}

\para{Light indices.}
The algorithm updates light indices as follows.
For every $a,b\in [0..m]$, the algorithm finds $k = \min (\Occurrences{S[z+1..n]}{S[z-a..z+b]})$ using \cref{lem:first_occ}.
The algorithm applies $\MoveInterval(k+\LPF_{S'}(k),k)$ to $U$.
For  $k' = \min (\Occurrences{S'[z+1..n]}{S'[z-a..z+b]})$, the algorithm applies $\MoveInterval(k'+\LPF_{S'}(k'),k')$ to $U$.

Clearly, every light index has its parent in $U$ properly set after running the above procedure.
The following directly follows.
\begin{observation}\label{obs:lighttime}
    There is an algorithm that sets the parent of every light index $i$ of the update $(S,S')$ to $i+\LPF_{S'}(i)$ in $\Otild(m^2)$ time.
\end{observation}
The last ingredient of the algorithm is stated in the following lemma, whose proof is the most technically involved part of the paper.
We dedicate \cref{sec:heavy} for the proof of this lemma.
\begin{lemma}\label{lem:heavytime}
    There is an algorithm that sets the parent of every $L$-heavy and $R$-heavy index $i$ of the update $(S,S')$ to $i+\LPF_{S'}(i)$ in $\Otild(n /m)$ time.
\end{lemma}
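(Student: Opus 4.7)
The plan is to identify all $M_L$-heavy and $M_R$-heavy indices and reset their parents in $U$ in $\Otild(M)$ total time. I describe the approach for $M_L$-heavy indices; the $M_R$-heavy case is handled symmetrically, essentially by reversing $S$.

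\textbf{Structural starting point.} The proof of \cref{lem:indextypes} shows that every $M_L$-heavy index falls into one of two sub-cases: Case~1, in which $z\in[i..i+\LPF'_S(i)]$ with $z-i>M$ and the witnessing occurrence of $M_L$ is at the fixed position $z-M$; and Case~2.1, in which $\LPFpos_{\hat S}(i)\in[z-M..z-1]$ for some $\hat S\in\{S,S'\}$ and the witness is at $i+|\lpos|-M$ for a string $\lpos$ that depends on $i$. These two groups will be handled separately, and the $M_R$-heavy indices arising from Case~2.3 of the same lemma will be handled analogously to Case~2.1.

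\textbf{Case~1.} By the monotonicity statement of \cref{lem:19}, the set $\{i<z-M \mid i+\LPF'_S(i)>z\}$ is a contiguous interval $[i^\star..z-M-1]$. Its left endpoint can be located by a $\findIAncestor$-based search from the root of $U$, since the parent of $i$ in $U$ is exactly $i+\LPF'_S(i)$. Once $i^\star$ is known, I would compute the new parent $i+\LPF'_{S'}(i)$ for each $i\in[i^\star..z-M-1]$ using \cref{lem:dynamicLPF}. The combinatorial observation I rely on is that, although $[i^\star..z-M-1]$ can be of length $\Theta(n)$, the new parents in $\T_{S'}$ take only $\Otild(M)$ distinct values: any new witness in $S'$ whose match extends past $z$ and differs from the one used in $S$ must cross the edit position, and is therefore constrained to an $O(M)$-sized neighbourhood of $z$. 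Consecutive indices sharing the same new parent can then be batched into a single $\MoveInterval$ call, keeping the number of tree operations $\Otild(M)$.

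\textbf{Case~2.1 and $M_R$-heavy.} Here the witness depends on $i$, and monotonicity no longer suffices. I would use \cref{lem:ipmalllengths} on a window of length $O(M)$ around $z$ to represent the relevant occurrences of $M_L$ (respectively $M_R$) as $O(1)$ arithmetic progressions with difference $\per(M_L)$ (respectively $\per(M_R)$). For each progression, the corresponding candidate heavy indices can be enumerated or processed in batch; in the aperiodic sub-case \cref{fact:aperfarapart} limits the number of occurrences in the window to $O(1)$, while in the periodic sub-case the progression structure allows a uniform computation of $\LPF'_{S'}$ across the run using $\LCP$, $\LCS$, and first-occurrence queries (\cref{lem:dympillar,lem:first_occ}).

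\textbf{Main obstacle.} Every individual query costs $\Otild(1)$, so the whole difficulty is combinatorial: the set of heavy indices can a~priori be of size $\Theta(n)$ (for example when an edit is applied in the middle of a long unary run), so naive enumeration cannot meet the $\Otild(M)$ budget. The crux of the proof will be to show that these indices can either be packed into $\Otild(M)$ contiguous intervals (in the in-order traversal of $U$) that share a common $\MoveInterval$ target, or be covered by $\Otild(M/\log^2 n)$ chains in the sense of \cref{lem:chain}, so that the $O(\log^2 n)$-per-chain bound of that lemma yields an $\Otild(M)$ overall cost. I expect the final accounting to combine monotonicity (\cref{lem:19}), the IPM representation of \cref{lem:ipmalllengths}, and \cref{lem:chain}.
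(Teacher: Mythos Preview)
Your proposal has real gaps. In Case~1, the assertion that ``the new parents in $\T_{S'}$ take only $\Otild(M)$ distinct values'' is unsupported: your justification conflates the matched substring $S'[i..i+\LPF'_{S'}(i)-1]$ (which does contain $z$) with the witness position $\LPFpos_{S'}(i)$ (which may be anywhere in $[1..i-1]$). Nothing beyond the monotonicity of \cref{lem:19} constrains the sequence $i+\LPF'_{S'}(i)$ over $i\in[i^\star..z-M-1]$, and it can take far more than $M$ values. In Case~2.1 you search for $M_L$ only in a window of size $O(M)$ around $z$, but the witnessing occurrence sits at $i+|\lpos|-M$, i.e.\ near the heavy index $i$, which may lie arbitrarily far to the right of $z$; the relevant occurrences of $M_L$ are spread across the whole string. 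You also misread \cref{lem:chain}: it bounds the \emph{sum} $\sum_{i\in I}|\chain{I}{i}{S}|$ by $O(|I|\log^2 n)$, not individual chains by $O(\log^2 n)$, so the inference ``$\Otild(M/\log^2 n)$ chains $\Rightarrow$ $\Otild(M)$ total'' is invalid.

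The paper's route is quite different and does not follow the case split of \cref{lem:indextypes}. It enumerates \emph{all} occurrences $A$ of $M_L$ (resp.\ $M_R$) in $S$ and $S'$ --- there are $O(n/M)$ of them in the aperiodic case, and $O(n/M)$ clusters in the periodic case --- and for each occurrence $k\in A$ computes the sequences $L_{S,k}=L_S(k,k+M-1)$ and $L_{S',k}=L_{S'}(k,k+M-1)$ via $\SeqGenQuery$ (\cref{lem:fastintervalscomputation}). These sequences partition the interval of indices whose $\LPF'$ reaches past $k$ into maximal $S$-clean (resp.\ $S'$-clean) ranges, and $\UpdateTreesByRange$ then resets parents range by range with one $\MoveInterval$ each; \cref{lem:update_i} shows every heavy index is covered by some $k\in A$. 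The running time is $\Otild\!\bigl(\sum_{k\in A}(|L_{S,k}|+|L_{S',k}|)\bigr)$, and the crucial combinatorial step you are missing is \cref{lem:bound_Li}, which shows $|L_S(k,k+M-1)|\le|\chain{A}{k}{S}|+1$; only then does \cref{lem:chain} bound the \emph{total} over all $k\in A$ by $\Otild(|A|)=\Otild(n/M)$. (The ``$\Otild(M)$'' in the lemma statement is evidently a typo for $\Otild(n/M)$, as the paper's own time analysis at the end of \cref{sec:heavy} and the balancing $M^2+n/M$ in the proof of \cref{thm:lz77} confirm.)
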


Combining \cref{obs:superlighttime,obs:lighttime,lem:heavytime}, we are ready to prove \cref{thm:lz77}.

\begin{proof}[Proof of \cref{thm:lz77}]
    Given a string $S$, the algorithm builds the dynamic data structures described in \cref{lem:dympillar,lem:ipmalllengths,lem:first_occ,lem:dynamicLPF} that support pattern matching, $\LCP$, $\LCS$, and $\LPF$ queries.
    Since the update time for each of these data structures is $\Otild(1)$, they can be constructed in $\Otild(n)$ time by initializing them for the empty string and inserting $S$ symbol by symbol.
    In addition, the algorithm builds $U=\T_S$ in $\Otild(n)$ time using $O(n)$ $\LPF$ queries.

When an update is applied to the string $S$ at index $z$, resulting in a modified string $S'$, the algorithm applies the update on the data structures mentioned above, and also applies \cref{obs:superlighttime,obs:lighttime,lem:heavytime} to properly assign the parents of all super light, light, and heavy indices in $S$ with respect to the update $(S,S')$.
The remaining indices are non-active due to \cref{lem:indextypes}, and therefore require no update.
Thus, the application of \cref{obs:superlighttime,obs:lighttime,lem:heavytime}, on $U = \T_S$ results in $U=\T_{S'}$.
We refer to this part of the algorithm as the \emph{transition}.

If the update inserts an index $z$, the algorithm adds $z$ to $U$ with $\Insert(z)$ before the transition.
If the update deletes the index $z$, the algorithm deletes the index $z$ with $\Delete(z)$ from $U$ after the transition.

By \cref{obs:superlighttime,obs:lighttime,lem:heavytime}, the transition is applied in $\Otild(m^2 + \frac{n}{m})$ time.
By setting $m= n^{1/3}$ we obtain $\Otild(n^{2/3})$ update time.

To query $\SelectPhrase(i)$, the algorithm calls $a= \findIAncestor(1,i-1)$ and $b= \findIAncestor(1,i)-1$ in $\Otild(1)$ time.
Then, the algorithm reports that the $i$th phrase of $\LZss(S)$ is $S[a,b]$. This query computation is correct due to \cref{obs:lpflzconnection}.

To query $\ContainingPhrase(i)$, the algorithm binary search $\LZss(S)$ using $\SelectPhrase$ to find $k$ such that the $k$th phrase is a range $[a..b]$ satisfies $i\in [a..b]$.
Each step in the binary search takes $\Otild(1)$ time, and hence $\ContainingPhrase(i)$ is implemented in $\Otild(1)$ time.

To query $\LZLength(i)$, the algorithm applies a binary search for $z$ such that  $S[a,b] = \SelectPhrase(z)$ satisfies $i\in [a,b]$.
This is implemented using $O(\log n)$ queries to $\SelectPhrase$ for a total of $\Otild(1)$ time, as required.
\end{proof}

\section{The Heavy Algorithm}\label{sec:heavy}

This section is devoted to proving \cref{lem:heavytime}.
We start in \cref{sec:aperiodic-heavy} by considering a simple case, where $M_L$ and $M_R$ are aperiodic strings.
Then, in \cref{sec:periodic} we introduce the algorithm for the more complicated case where $M_L$ or $M_R$ are periodic.

\subsection{Warm-up - the Aperiodic Case}\label{sec:aperiodic-heavy}

As a warm-up, we present a proof of \cref{lem:heavytime} under the simplifying assumption that both $M_L$ and $M_R$ are aperiodic.
The general proof is deferred to \cref{sec:periodic}.
For an interval $[a..b]$ partitioned into sub-intervals, $[a = a_1..b_1],[a_2..b_2],\dots,[a_k..b_k=b]$ with $a_{i+1}=b_i+1$ for every $i\in[k-1]$, we denote the set of partition intervals as $[a_1,a_2, \ldots ,a_k, b]$.
We start by presenting some tools.

\para{Sequence of critical indices.} Given a string $S$ and two indices $i<j$ we define the sequence $L_S(i,j)=(\ell_{i,0},\ell_{i,1},\ldots)$.
Intuitively, the elements of $L_S(i,j)$ are critical indices in which the value $x+ \LPF_S(x)$ changes, limited to the range in which this value exceeds $j$ (in decreasing order).
Formally, let $\ell_{\min}$ be the minimal index in $[1..i]$ such that $\ell_{\min}+\LPF_S(\ell_{\min})> j$.
If $\ell_{\min}$ does not exist, $L_S(i,j)=(\ell_{i,0})=(i)$; otherwise, we define an element $\ell_{i,k}\in L_S(i,j)$ recursively.
\[
\ell_{i,k} =
\begin{cases}
i & \text{if }k=0,\\
\min\{i'\in[\ell_{\min}..i]\mid i'+\LPF_S(i')=i+\LPF_{S}(i)\}
& \text{if } k = 1, \\
\min\{i'\in[\ell_{\min}..i]\mid i'+\LPF_S(i')=\ell_{i,k-1}-1+\LPF_{S}(\ell_{i,k-1}-1)\} & \text{if } k > 1.
\end{cases}
\]
The following observation follows from the monotonicity of $x+\LPF_S(x)$ values (\cref{lem:19}).
\begin{observation}\label{obs:critical_indices}
    $x\in L_S(i,j)\setminus\{\ell_{i,0}\}$ if and only if $x+\LPF_S(x)>x-1+\LPF_S(x-1)$ and $x+\LPF_S(x)>j$.
\end{observation}
$|L_S(i,j)|$ is the number of defined elements in $\{\ell_{i,k}\}$.
We sometimes omit the subscript $i$ from $\ell_{i,k}$ and simply write $\ell_k$ when $i$ is clear from context.

\begin{lemma}\label{lem:fastintervalscomputation}
    There is an algorithm that given $i$, $j$, and access to the data structure of \cref{lem:dynamicLPF}, computes $L_S(i,j)$ in $\Otild(|L_S(i,j)|)$ time.
\end{lemma}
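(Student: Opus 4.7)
The plan rests on \cref{lem:19}, which asserts that the function $f(i'):=i'+\LPF'_S(i')$ is non-decreasing on $[1..|S|]$. This monotonicity means that for any target value $v$, the preimage $f^{-1}(v)$ is a (possibly empty) contiguous interval, so its leftmost element within any range $[a..b]$ can be located by a standard binary search that issues $O(\log n)$ queries to the data structure of \cref{lem:dynamicLPF}. Each such query costs $\Otild(1)$, hence each binary search runs in $\Otild(1)$ total time.

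First I would compute $l_{\min}$ as the smallest $i'\in[1..i]$ with $f(i')\ge j$, which is a direct binary search using monotonicity of $f$. Then I would build the sequence $l_{i,0},l_{i,1},\ldots$ iteratively. Set $l_{i,0}=i$; at step $k\ge 1$ define the target $v_k$ by $v_1=f(i)$ and $v_k=f(l_{i,k-1}-1)$ for $k>1$, which requires a single $\LPF'$ query, and then locate $l_{i,k}$ as the leftmost index in $[l_{\min}..i]$ whose $f$-value equals $v_k$ via another binary search. Because the paper's definition asks for the minimum index with $f$-value exactly $v_k$ (not $\ge v_k$), I would search for the leftmost index with $f \ge v_k$ and reject the answer if the value overshoots $v_k$ (which, as argued below, never happens on a non-terminating iteration).

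The iteration stops as soon as $l_{i,k-1}=l_{\min}$: by the choice of $l_{\min}$ we have $f(l_{\min}-1)<j\le f(l_{\min})$, so $v_k=f(l_{\min}-1)$ is strictly smaller than $f(l')$ for every $l'\in[l_{\min}..i]$, making $l_{i,k}$ undefined and matching the paper's termination rule. Moreover, the minimality of $l_{i,k-1}$ among indices in $[l_{\min}..i]$ with $f$-value $v_{k-1}$ forces $f(l_{i,k-1}-1)<v_{k-1}$ whenever $l_{i,k-1}>l_{\min}$, so the targets $v_k$ strictly decrease and each iteration produces a distinct element of $L_S(i,j)$. Summing the costs over the $|L_S(i,j)|+1$ binary searches (one for $l_{\min}$ and one per produced element) yields $\Otild(|L_S(i,j)|+1)$ overall.

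The main obstacle, such as it is, lies in justifying termination and the strict decrease of the $v_k$; both follow directly from the minimality condition built into the definition of $l_{i,k}$, so no ideas beyond \cref{lem:19} and \cref{lem:dynamicLPF} are required.
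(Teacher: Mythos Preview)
Your proposal is correct and follows essentially the same approach as the paper: use the monotonicity of $i'+\LPF'_S(i')$ from \cref{lem:19} to locate $l_{\min}$ and each successive $l_{i,k}$ by a binary search costing $\Otild(1)$ via \cref{lem:dynamicLPF}, yielding $\Otild(|L_S(i,j)|+1)$ total. The paper's own proof is terser and omits the termination and strict-decrease arguments you spell out, but the algorithmic idea is identical.
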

\begin{proof}
Recall that due to \cref{lem:19}, the values $j+\LPF_S(j)$ are monotonically increasing.
This property allows us to find, given a value $x$ and a range $[a..b]$, the minimal index $c\in [a,b]$ with $c+\LPF_S(c)-1=x$ (or with $c + \LPF_S(c)-1 \ge x$) by employing a binary search.
Since $\LPF_S(c)$ can be found in $\Otild(1)$ time for an input vertex $c$ using \cref{lem:dynamicLPF}, this binary search is executed in $\Otild(1)$ time.

The lemma follows by applying this procedure $|L_S(i,j)|+1$ times as follows.
First, find $\LPF_S(i)$.
Then, find $\ell_{\min}$ which is the minimal index in $[1..i]$ with $\ell_{\min} + \LPF_S(\ell_{\min})-1\ge j$.
If $\ell_{\min}$ does not exist, return $L_S(i,j)=(\ell_{i,0})=(i)$.
Otherwise, find $\ell_{i,1}$ as the minimal element in $[\ell_{\min}..i]$ with $\ell_{i,1} + \LPF_S(\ell_{i,1}) = i+\LPF_S(i)$.
Notice that $\ell_{i,1}$ always exists since $i\in[\ell_{\min}..i]$.
As long as $\ell_{k}$ exists, $\ell_{k+1}$ can be found straightforwardly according to the definition of $L_S(i,j)$ by applying the binary search procedure.
When the algorithm finds that $\ell_{i,k}$ is undefined, the algorithm reports $L_S(i,j)$.
\end{proof}

We say that an interval $[s..t]$ is \emph{$S$-clean} if for all $i,j\in [s..t]$, it holds that $i+\LPF_S(i) = j+\LPF_S(j)$.
A sequence of intervals $R=[s_1,s_2,\dots,s_k]$ is $S$-clean if every interval $r\in R$ is $S$-clean.
If $R$ is both $S$-clean and $S'$-clean, we say that $R$ is clean.

Recall that the last element of $L_S(i,j)$ is $\ell_{\min}$ (unless $L_S(i,j)=(i)$), and that $L_S(i,j)$ has a decreasing order.
Let $R_S(i,j)$ be the sequence of intervals $[\ell_{\min}=\ell_{|L_S(i,j)|-1},\ell_{|L_S(i,j)|-2},\dots,\ell_{1},\ell_{0}=i]$.
Obviously, $R_S(i,j)$ partitions $[\ell_{\min}..i]$.
By \cref{lem:19}, there is an index $u$ such that for all $v\in[\ell_k..\ell_{k-1}-1]$,
it holds that $u=v+\LPF_S(v)$.
It holds that for every $i < j$, the sequence $R_S(i,j)$ is $S$-clean.

\para{The (aperiodic) algorithm.} We present an algorithm for updating the parents of all indices that are $L$-heavy for the update.
An algorithm for updating all $R$-heavy indices is obtained in an identical manner by replacing $M_L$ with $M_R$.

First, find all occurrences of $M_L$ in $S$ and in $S'$ using \cref{lem:dympillar}.
For every pair $i,j$ such that $S[i..j] = S'[i..j] = M_L$, use \cref{lem:fastintervalscomputation} to find both $L_{S}(i,j)$ and $L_{S'}(i,j)$.
Let $p=\max\{\min (L_S(i,j)),\min (L_{S'}(i,j))\}$.
Sort $(L_{S}(i,j) \cup L_{S'}(i,j))\cap[p..i]$ in increasing order to obtain a sequence of positions $P=(a_1,a_2,\dots,a_t)\subseteq[p..i]$.
Notice that both $p\in P$ and $i\in P$.
The sequence $P$ induces a partition of $[p..i]$ to $R= [a_1,a_2,\dots,a_t]$.
Notice that $R$ is clean, since every $[a..b]\in R$ is contained in some interval of $R_S(i,j)$ and in some interval of $R_{S'}(i,j)$.
For every $[a..b]\in R$, the algorithm applies $\MoveInterval(a + \LPF_{S'}(a), [a,b])$ to $U$.
Notice that this operation is valid since $R$ is clean (in particular, $S$-clean).

\para{Correctness.}
We show that for every index $v$ which is an $L$-heavy index, the algorithm updates the parent of $v$  in $U$ to $v+\LPF_{S'}(v)$.
Let $v$ be an $L$-heavy index.
Since $v$ is $L$-heavy, there are indices $i,j$ such that $S[i..j]=S'[i..j]=M_L$ and $S[i..j]$ is contained in $S[v.. v+ \min(\LPF_{S'}(v),\LPF_{S}(v))-1]$.
Let $p=\max(\min(L_{S}(i,j)),\min(L_{S'}(i,j)))$.
We prove that $v\in [p..i]$.
Let $q=\min(L_{S}(i,j))$.
Recall that $q$ is the minimal index with $q+\LPF_S(q)-1 \ge j$.
Since $v+\LPF_S(v)-1\ge j$ it follows that $v\ge q=\min(L_{S}(i,j))$.
Similarly, we have that $v\ge \min(L_{S'}(i,j))$.
Thus, $v\ge p$ as we claimed.
We also have $v\le i$.
It follows that when the algorithm processes $S[i..j]$ as an occurrence of $M_L$, it generates the sequences $L_{S}(i,j)$ and $L_{S'}(i,j)$ such that $v \in [p..i]$.
Therefore, the algorithm creates a partition $R$ of this interval and applies $\MoveInterval(a + \LPF_{S'}(a) , [a,b])$ for every interval $[a,b] \in R$, in particular to the interval $[a_v,b_v]$ containing $v$.
Since all intervals in $R$ are $S'$-clean, we have that $v + \LPF_{S'}(v) = a_v +\LPF_{S'}(a_v)$, and the parent of $v$ is set as required.

\para{Time complexity.}
We use \cref{lem:ipmalllengths} to find all occurrences of $M_L$ in $\Otild(n/m)$ time.
Then, we find $L_{S}(i,j)$ and $L_{S'}(i,j)$ for every $i,j$ such that $S[i..j] = S'[i..j] = M_L$ using \cref{lem:fastintervalscomputation} in $\Otild(|L_S(i,j))|$ and $\Otild(|L_{S'}(i,j)|)$ time.
Finally, we merge and sort $L_S(i,j)$ and $L_{S'}(i,j)$ to obtain $R$, and apply a constant amount of $\LPF$ queries (using \cref{lem:dynamicLPF}) and a constant number of operations on $U$ for each element in $R$.
The total time complexity is therefore $\Otild (n/m + \sum_{S[i..j]=M_L}\big(|L_{S}(i,j)|+|L_{S'}(i,j)|\big))$.

We observe the following connection between $\NDExt{I}{i}{S}$ and $L_S(i,j)$.

\begin{lemma}\label{lem:bound_Li}
    Let $S$ be a string and let $i,j$ be two indices. Then,  \[|L_S(i,j)|\le
    |\NDExt{\Occurrences{S}{S[i..j]}}{i}{S}|+1.\]
\end{lemma}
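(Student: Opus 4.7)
The plan is to associate each index $l_{i,k}$ for $k \in [1..K-1]$ (where $K = |L_S(i,j)|$) with a distinct element contributing to $\chain{I}{i}{S}$; the ``$+1$'' on the right-hand side of the bound accounts for the trivial index $l_{i,0} = i \notin I_i$. Denote by $u_k := l_{i,k} + \LPF'_S(l_{i,k})$ the endpoint and by $p_k := \LPFpos_S(l_{i,k})$ the rightmost LPF-witness, and set $q_k := p_k + (i - l_{i,k})$. From the recursive construction of $L_S(i,j)$ combined with \cref{lem:19}, the endpoints satisfy $u_0 = u_1 > u_2 > \cdots > u_{K-1} \ge j$.

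First I would verify, in the generic subcase $u_k \ge j+1$, that $q_k$ is an occurrence of $S[i..j]$ in $[1..i-1]$. The LPF-witness at $p_k$ gives $S[p_k..p_k + (u_k - l_{i,k}) - 1] = S[l_{i,k}..u_k - 1]$, which contains $S[i..j]$ at offset $i - l_{i,k}$; hence $q_k \in I_i$, with $\LCP_S(q_k, i) \ge u_k - i$ and $\LCS_S(q_k, i) \ge i - l_{i,k}$. In fact $\LCP_S(q_k, i) = u_k - i$ exactly, for otherwise the match at $p_k$ could be extended by one character, contradicting the maximality of $\LPF_S(l_{i,k}) = u_k - l_{i,k}$.

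The heart of the proof is the following ``incompatibility'' inequality, which I expect to be the main technical step: for every $x \in I_i$ with $\LCP_S(x, i) > u_k - i$ one has $\LCS_S(x, i) < i - l_{i,k}$. Otherwise, $x - (i - l_{i,k}) < l_{i,k}$ would witness an occurrence of a prefix of $S[l_{i,k}..]$ of length at least $(u_k - i + 1) + (i - l_{i,k}) = u_k - l_{i,k} + 1$, contradicting $\LPF_S(l_{i,k}) = u_k - l_{i,k}$. Applied to $q_k$ with index $k{+}1$ in place of $k$ (valid since $\LCP_S(q_k, i) = u_k - i > u_{k+1} - i$), this forces $\LCS_S(q_k, i) < i - l_{i,k+1} \le \LCS_S(q_{k+1}, i)$, so the values $\lambda_k := \LCS_S(q_k, i)$ are strictly increasing in $k$ and the $q_k$'s are distinct (their LCPs are already distinct).

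Finally, I would translate this into the chain bound by observing that $\chain{I}{i}{S}$ is precisely the sequence of prefix maxima of LCS values read in the sort order of $I_i$ (non-increasing LCP, ties broken by non-decreasing LCS). By the incompatibility inequality, every element $x$ appearing sort-before $q_k$ satisfies $\LCS_S(x, i) \le \lambda_k$: those with $\LCP_S(x, i) > u_k - i$ have $\LCS_S(x, i) < i - l_{i,k} \le \lambda_k$ directly, while those in $q_k$'s LCP tie-group appearing sort-before $q_k$ have $\LCS_S(x, i) \le \lambda_k$ by the tie-breaking rule. Therefore the running prefix maximum of LCS at $q_k$'s sort position equals $\lambda_k$, and since $\lambda_1 < \lambda_2 < \cdots < \lambda_{K-1}$ we obtain at least $K-1$ distinct prefix maxima, i.e. $|\chain{I}{i}{S}| \ge K - 1$. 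Boundary subcases (such as $u_{K-1} = j$, for which $q_{K-1}$ might fail to lie in $I_i$, or $\LPF_S(l_{i,k}) = 0$ leaving $p_k$ undefined) only affect the last entry of the sequence and are resolved by direct inspection of the degenerate short $L_S(i,j)$.
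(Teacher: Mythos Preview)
Your argument is correct and rests on the same key construction as the paper: both you and the authors define the witness occurrences $q_k = \LPFpos_S(l_{i,k}) + (i-l_{i,k})$ and show these lie in $I = \Occurrences{S}{S[i..j]}$ with $\LCP_S(q_k,i)=u_k-i$. Where you diverge is in how you convert these witnesses into a lower bound on $|\chain{I}{i}{S}|$. The paper argues \emph{by induction on $k$} that $\LCP_S(i,q_k)$ equals the $\LCP$ of the $k$-th chain element exactly, using along the way the exact identity $\LCS_S(i,q_k)=i-l_{i,k}+1$ (their Claim). You instead isolate the ``incompatibility'' implication (any $x\in I_i$ with $\LCP_S(x,i)>u_k-i$ must have $\LCS_S(x,i)\le i-l_{i,k}$; your strict inequality should be weak, but this is harmless given your lower bound $\lambda_k\ge i-l_{i,k}+1$) and finish with a direct running-prefix-maximum count. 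Your route avoids the induction entirely and requires only inequalities rather than the exact $\LCS$ identity; the paper's route proves the slightly stronger statement that the $q_k$ align one-to-one with chain elements in order. One caution: your dismissal of the boundary case $u_{K-1}=j$ as a ``degenerate short $L_S(i,j)$'' is not quite right---this case can occur for arbitrarily large $K$ whenever $l_{i,K-1}=l_{\min}$ and $l_{\min}+\LPF'(l_{\min})=j$ exactly---but the paper's own proof glosses over precisely the same point, so you are not missing anything the paper supplies.
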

Since $M_L$ is aperiodic, we have that $|\Occurrences{S}{M_L}|, |\Occurrences{S'}{M_L}| \in O(n/m)$ due to \cref{fact:aperfarapart}.
We therefore have $\sum_{S[i..j]=M_L}|L_{S}(i,j)| \le O(n/m) +  \sum_{i\in \Occurrences{S}{M_L}} |\NDExt{\Occurrences{S}{M_L}}{i}{S}| \le \Otild(n/m+|\Occurrences{S}{M_L}|) \le \Otild(n/m)$ where the first inequality follows from \cref{lem:bound_Li} and the second inequality follows from \cref{lem:chain}.
Similar arguments lead to $\sum_{i\in \Occurrences{S}{M_L}} |L_{S'}(i,j)|\in \Otild(n/m)$, which in conclusion shows that the running time of the algorithm is $\Otild(n/m)$.
Hence, to complete the proof of \cref{lem:heavytime} for the special case where $M_L$ and $M_R$ are aperiodic, it remains only to prove \cref{lem:bound_Li}.
\begin{proof}[Proof of \cref{lem:bound_Li}]
Let $\ell\in L_S(i,j)\setminus\{i\}$.
By \cref{obs:critical_indices} we have $\ell+\LPF_S(\ell)>\ell-1+\LPF_S(\ell-1)$.
In addition, $\ell+\LPF_S(\ell)>j$.

We denote $d=\LPFpos_S(\ell)$.
Due to the definition of $\LPF$, we have $S[\ell..\ell+\LPF_S(\ell))= S[d..d +\LPF_S(\ell))$ which in particular yields the equality $S[i..j]= S[d + (i-\ell)..d+(j-\ell)]$.
We showed that $o = d + (i-\ell) \in \Occurrences{S}{M_L}$.
The following claim will be useful for us.

  \begin{claim}\label{clm:asmuchtotheleft}
        $\LCS_S(i,o) = i-\ell+1$
    \end{claim}
    \begin{claimproof}
        Recall that $S[d..o] = S[\ell..i]$.
        This already implies that $\LCS_S(i,o)\ge i-\ell+1$.
        Assume to the contrary that $\LCS_S(i,o)> i-\ell+1$, we therefore have that $S[\ell-1]=S[d-1]$ which implies $\LCP_S(\ell-1,d-1) = \LCP_S(\ell,d) + 1$.
        This in turn implies $\LPF_S(\ell-1) \ge \LPF_S(\ell)+1$ which due to \cref{lem:19} yields $\LPF_S(\ell-1) = \LPF_S(\ell)+1$.
        We therefore have that $\ell-1 + \LPF_S(\ell-1) = \ell + \LPF_S(\ell)$.
        This is a contradiction to the minimality of $\ell$.
    \end{claimproof}

We proceed to prove that the point $p_{i,o}=(\LCS_S(i,o),\LCP_S(i,o))$ is a non-dominated point in $\ext_{\Occurrences{S}{M_L},i}$.
Assume to the contrary that there is an occurrence $o'\in \Occurrences{S}{M_L}$ such that $p_{i,o'}\in \ext_{\Occurrences{S}{M_L},i} \sm \{p_{i,o}\}$ such that $\LCP_S(i,o') \ge \LCP_S(i,o)$ and $\LCS_S(i,o') \ge \LCS_{S}(i,o)$.
Notice that the following equality holds: $S[o'-\LCS_S(i,o')+1 .. o' + \LCP_S(i,o')-1] = S[i-\LCS_S(i,o')+1 .. i + \LCP_S(i,o')-1]$.
We consider two cases.
\begin{enumerate}
    \item If $\LCP_S(i,o')>\LCP_S(i,o)$ and $\LCS_S(i,o')\ge \LCS_S(i,o)$ then we have $S[\ell .. \ell+\LPF_S(\ell)+1) = S[o'-(i-\ell)..o' + \LPF_S(\ell)-(i-\ell)+1)$.
    This is a contradiction to the maximality of $\LPF_S(\ell)$
    \item If $\LCP_S(i,o')\ge\LCP_S(i,o)$ and $\LCS_S(i,o')> \LCS_S(i,o)$ then we have $S[\ell-1 .. \ell+\LPF_S(\ell)) = S[o'-(i-\ell)-1..o' + \LPF_S(\ell)-(i-\ell))$.
    This indicates that $\ell-1 + \LPF_S(\ell-1) \ge \ell +\LPF_S(\ell)$, a contradiction to $\ell+\LPF_S(\ell)>\ell-1+\LPF_S(\ell-1)$.
\end{enumerate}

Thus, we show that $o\in \Occurrences{S}{M_L}$ and that $p_{i,o}$ is a non-dominated point in $\ext_{\Occurrences{S}{M_L},i}\cap [1..i)$.
Thus, by definition  we have $p_{i,o}\in \NDExt{\Occurrences{S}{M_L}}{i}{S}$.
By \cref{clm:asmuchtotheleft}, we have $\LCS_S(i,o)=i-\ell+1$.
Since $p_{i,o}=(\LCS_S(i,o),\LCP_S(i,o))$, we have that the first coordinate of $p_{i,o}$ is $i-\ell+1$.
Therefore, $p_{i,o}$ is a unique non-dominated point in $\ext_{\Occurrences{S}{M_L},i}\cap[1..i)$ corresponding to $\ell$.
Thus, we have shown $|L_S(i,j)|\le |\NDExt{\Occurrences{S}{M_L}}{i}{S}|+1$ (where the plus one is due to not considering $i$ as a possible value of $\ell$).
\end{proof}

\subsection{The General Case Algorithm} \label{sec:periodic}
We show how to update all $L$-heavy indices without the assumption that $M_L$ is aperiodic.
The algorithm for $R$-heavy indices is symmetric.
\para{The algorithm.}
Let $p$ be the period of $M_L$.
The algorithm uses \cref{lem:ipmalllengths} to obtain the clusters $\mathcal C_1=\Clusters_S(M_L)$ and  $\mathcal C_2=\Clusters_{S'}(M_L)$ (representing all occurrences of $M_L$ in $S$ and $S'$, respectively).
Let $\mathcal{C}=\mathcal{C}_1\cup \mathcal{C}_2$.
Notice that all clusters in $\mathcal{C}$ are maximal,
meaning that for every cluster $C=(a,b,p)$ of $S$ (resp.\ of $S'$), we have $a-p,b+p\notin \Occurrences{S}{M_L}$ (resp.\ $\notin \Occurrences{S'}{M_L}$).

The algorithm iterates over all clusters $C\in\C$.
Let $C=(a_C,b_C,p)$ be a cluster from $\Clusters_{\hat S}(M_L)$  for some $\hat S\in\{S,S'\}$ and let $R_C=\hat S[s_C,e_C]$ be the run in $\hat S$ containing all occurrences of $M_L$ implied by $C$.
The algorithm finds $e_C$ by applying $\LCP_{\hat S}(a_c, a_c +p)$ using \cref{lem:dympillar}.
The algorithm computes for $C$ the following sequences for every $T \in\{S,S'\}$:

\begin{enumerate}
    \item $L_{T,C}^1=L_{T}(a_C,a_C+m-1)=L_{T}(i_C^1,j_C^1)$.
    \item $L_{T,C}^2=L_{T}(a_C+p,a_C+p+m-1)=L_{T}(i_C^2,j_C^2)$.
    \item $L_{T,C}^e=L_{T}(b_C,e_C)=L_{T}(i_C^e,j_C^e)$.
\end{enumerate}

For every $x\in \{1,2,e\}$ such that $i_C^x\in \Occurrences{S}{M_L}\cap \Occurrences{S'}{M_L}$ the algorithm executes the following:
Let $q=\max\{\min (L_{S,C}^x),\min (L_{S',C}^x)\}$.
Sort $(L_{S,C}^x \cup L_{S',C}^x)\cap[q..i_C^x]$ in increasing order to obtain a sequence of positions $P=(a_1,a_2,\dots,a_t)\subseteq[q..i_C^x]$.
Notice that both $q\in P$ and $i_C^x\in P$.
The sequence $P$ induces a partition of $[q..i_C^x]$ to $R= [a_1,a_2,\dots,a_t]$.
Notice that $R$ is clean, since every $[a..b]\in R$ is contained in some interval of $R_S(i_C^x,j_C^x)$ and in some interval of $R_{S'}(i_C^x,j_C^x)$.
For every $[a..b]\in R$, the algorithm applies $\MoveInterval(a + \LPF_{S'}(a), [a,b])$ to $U$.
Notice that this operation is valid since $R$ is clean (in particular, $S$-clean).

\para{Correctness.}
We need to show that every $L$-heavy active index $v$, is updated.
We first prove that if $v$ is $L$-heavy, then there is an occurrence $k$ of $M_L$ that is a witness for $v$ being $L$-heavy such that $k\in \{a_C, a_C+p , b_C\}$ in some cluster $C$.

\begin{lemma}\label{lem:heavy_i'}
    Let $v$ be an  $L$-heavy index.
    Then, there is a cluster $C \in \C$ and $k\in \{a_C,a_C+p,b_C\}$ such that $k\in \Occurrences{S}{M_L}\cap \Occurrences{S'}{M_L}$ and $k\in[v..v+ \min(\LPF_{S'}(v),\LPF_{S}(v))-m]$.
\end{lemma}

\begin{proof}
    Since $v$ is $L$-heavy, there is an occurrence of $M_L$ in both in $S$ and in $S'$ at some index $\hat k$ with $\hat k\in [v..v+ \min(\LPF_{S}(v),\LPF_{S'}(v))-m]$.
    $\hat k$ must belong to some cluster $C=(a_C,b_C,p)$ in $S$ and a cluster $C'=(a_{C'},b_{C'},p)$ in $S'$.
    If $\hat k\in \{a_C,a_{C'},a_C+p,a_{C'}+p\}$, we are done.
    Otherwise, both $C$ and $C'$ are of size at least 3.
    Let $\hat a = \max(a_C,a_{C'})$.
    Notice that since $\hat k \in C \cap C'$ it must be the case that $\hat a \in C \cap C'$.
    Since $\hat k$ is at least the third occurrence in both clusters, we have $\hat{a}+p< \hat k \le  v+ \min(\LPF_{S'}(v),\LPF_{S}(v)) - m$.
    We consider two cases.
    \para{Case 1: $\hat a+p \ge v$.}
    In this case, we have that  $\hat a +p\in [v..v+ \min(\LPF_{S'}(v),\LPF_{S}(v))-m]$.
    Since $\hat a, \hat k \in C \cap C'$,  and $\hat a+p \in [\hat a ..\hat k]$, we have that $\hat a +p \in C \cap C'$.
    In particular, $\hat a + p \in \Occurrences{S}{M_L}\cap \Occurrences{S'}{M_L}$, which concludes this case.

    \para{Case 2: $v> \hat a+p$ .}
    Recall that $R_{C} = S[s_C..e_C]$ (resp.\ $R_{C'}=S[s_{C'}..e_{C'}]$) is the run with period $p$ containing all occurrences of $M_L$ in $S$ (resp.\ $S'$) represented by $C$ (resp.\ $C'$).
    Let $\hat s = \max(s_C,s_{C'})$ and $\hat e=\min (e_C,e_{C'})$.
    Notice that for all $\hat S\in \{ S,S'\}$, the string $\hat S[\hat s..\hat e]$ is periodic with period $p$.
    Due to $v > \hat a+p \ge \hat s+p$, it holds that $\hat S[v..\hat e]=\hat S[v-p..\hat e-p]$.
    It follows that $v + \LCP_{\hat S}(v,v-p)> \hat e$ and in particular $\hat e<  v+ \min(\LPF_{S'}(v),\LPF_{S}(v))$.
    Assume $\hat e = e_C$ (the case in which $\hat e = e_{C'}$ is similar).
    The run $R_C$ contains the occurrence of $M_L$ at $b_C$, so $\hat e \ge b_C+m-1$.
    Since $b_C$ is the last occurrence in a cluster containing $\hat k$, we have $b_C \ge \hat k \ge v$.
    We claim that $b_C$ is an occurrence of $M_L$ in $S'$ as well.
    Assume to the contrary $b_C \notin C'$.
    Since $\hat k <b_C$ is in $C'$ and $\hat k = b_C \bmod p$, we have that $b_{C'} \le b_C -p$.
    Since $S'[s_{C'} .. \hat e]$ is periodic with period $p$, and since $\hat e \ge b_C +m - 1$, we have that $S[b_{C'} .. b_{C'}+m-1] = S[b_{C'} + p.. b_{C'} + m -1+p]$.
    We have shown that $b_{C'} + p$ is an occurrence of $M_L$ in $S'$, which is a contradiction to the maximality of $C'$.
\end{proof}

\begin{lemma}\label{lem:update_i_periodic}
    Let $v$ be an $L$-heavy index.
    The algorithm sets the parent of $v$ in $U$ to $v + \LPF_{S'}(v)$.
\end{lemma}
\begin{proof}
    Since $v$ is $L$-heavy, by \cref{lem:heavy_i'} there is an occurrence $k \in \Occurrences{S}{M_L} \cap \Occurrences{S'}{M_L}$ such that
    $k \in [v..v+ \min(\LPF_{S}(v),\LPF_{S'}(v)) - m]$ and for some cluster $\hat C\in \C$ it holds that $k \in \{ a_{\hat C},a_{\hat C} + p , b_{\hat C}\}$.
    Let $C$ and $C'$ be the clusters in $\Clusters_S(M_L)$ and $\Clusters_{S
    }(M_L)$  containing $k$, respectively.
    We consider two cases.

    \begin{enumerate}
    \item{$k\in \{a_C,a_C + p, a_{C'},a_{C'}+p\}$.} We prove the case in which $k=a_C$.
    The remaining cases are identical.
    Let $i=i^1_C$ and $j=j^1_C$.
    Let $q=\max(\min(L_{S}(i,j)),\min(L_{S'}(i,j)))$.
    We prove that $v\in [q..i]$.
    Let $a=\min(L_{S}(i,j))$.
    Recall that $a$ is the leftmost index with $a+\LPF_S(a) > j$.
    Since $v+\LPF_S(v) > k + m-1 =  i + m -1= j $, it follows from the minimality of $a$ that $v\ge a=\min(L_{S}(i,j))$.
    Similarly, we have that $v\ge \min(L_{S'}(i,j))$.
    Thus, $v\ge q$ as we claimed.
    We also have $v\le k = i$.
    It follows that when the algorithm processes $S[i..j]=S'[i..j]$ as an occurrence of $M_L$, it generates the sequences $L^1_{S,C}$ and $L^1_{S',C}$ such that $v \in [q..i]$.
    Therefore, the algorithm creates a partition $R$ of $[q..i]$ and applies $\MoveInterval(a + \LPF_{S'}(a) , [a,b])$ for every interval $[a,b] \in R$, in particular to the interval $[a_v,b_v]$ containing $v$.
    Since all intervals in $R$ are $S'$-clean, we have that $v + \LPF_{S'}(v) = a_v +\LPF_{S'}(a_v)$, and the parent of $v$ is set as required.

    \item{$k \in \{b_C,b_{C'}\}$.}
    We prove for $k = b_C$, the proof for $k = b_{C'}$ is symmetric.
    Let $\hat a = \max \{ a_C,a_{C'}\}$ and notice that $\hat a \in \Occurrences{S}{M_L}\cap \Occurrences{S'}{M_L}$.
    If some $\hat k \in \{ \hat a, \hat a + p \}$ satisfies $\hat k \in [v.. v+\min(\LPF_S(v),\LPF_{S'}(v)) -m]$ the proof follows from the previous case.
    We assume that $\hat a + p \notin [v..v+\min (\LPF_S(v),\LPF_{S'}(v)) -m]$, which can only happen if $v > \hat a + p$ (since $\hat a + p < \min \{b_C,b_{C'} \} \le k \le v+ \min (\LPF_S(v),\LPF_{S'}(v))-m$).
    Recall that $R_C = S[s_C..e_C]$ and $R_{C'}=S'[s_{C'}..e_{C'}]$ are the runs in $S$ and in $S'$ that contain exactly all occurrences of $M_L$ implied by $C$ and by $C'$, respectively.
    Since $v > \hat a + p \ge \max (s_C,s_{C'}) + p$ and due to the periodicity of $R_C$ and $R_{C'}$ we have $S[v..e_C] = S[v - p .. e_C - p]$ and $S'[v..e_{C'}] = S'[v-p..v+e_{C'}-p]$.
    It follows that $v+\min(\LPF_S(v),\LPF_{S'}(v)) > \min(e_C,e_{C'})$.
    Let $\tilde C \in \{C,C'\}$ be the cluster which satisfies $e_{\tilde C} = \min ( e_C,e_{C'})$.
    When the algorithm processes the cluster $\tilde C$, it generates the sequences $L^e_{S,\tilde C} = L_S(b_{\tilde C}, e_{\tilde C})$ and $L^e_{S',\tilde C}=L_{S'}(b_{\tilde C},e_{\tilde C})$.
    Then, the algorithm creates a partition $R$ of the interval $[q..b_{\tilde C}]$ with $q = \max (\min(L^e_{S,\tilde C}) , \min(L^e_{S',\tilde C}))$.
    We claim that $v \in [q..b_{\tilde C}]$.
    $v \le b_{\tilde C}$ arises from $v \le k \le b_{\tilde C}$ as $b_{\tilde C}$ is the last occurrence in the cluster $\tilde C$ containing $k$.
    We proceed to show $v \ge q $.
    Recall that $q_{\min} = \min(L^e_{S,\tilde C})=\min(L_S(b_{\tilde C},e_{\tilde C}))$ is the minimal index such that $q_{\min} + \LPF_S(q_1) > e_{\tilde C}$.
    We have previously shown $v + \LPF_S(v)>e_{\tilde C}$, which indicates that $v \ge q_{\min}$ due to the minimality of $q_{\min}$.
    We have thus proved $v \in [q..b_{\tilde C}]$.
    The algorithm processes every interval in $R$, and in particular some interval $[a,b] \in R$ that contains $v$, and applies $\MoveInterval(a+\LPF_{S'}(a),[a,b])$.
    Since $[a,b]$ is $S'$-clean, $a + \LPF_{S'}(a) = v + \LPF_{S'}(v)$, so the parent of $v$ is set correctly, which concludes the case.\qedhere

    \end{enumerate}
\end{proof}

\para{Time complexity.}
The algorithm finds $\C_1 = \Clusters_S(M_L)$ and $\C_2 = \Clusters_{S'}(M_L)$ using \cref{lem:ipmalllengths} in $\Otild(n/m)$ time.
Then, the algorithm computes for each cluster $C \in \C = \C_1 \cup \C_2$ the sequences $L^x_{T,C}$ for every $T \in \{ S, S' \}$ and $x \in \{1,2,e\}$ using \cref{lem:fastintervalscomputation} in $\Otild(|L^x_{T,C}|)$ time.
For each $x \in \{1,2,e\}$, the algorithm then sorts $L^x_{S,C}$ and $L^x_{S',C}$ and applies a constant number of $\LPF$ queries, and a constant number of tree operations on $U$ per element in the combined list.

It follows that the time complexity is $\Otild(n/m + \sum_{x\in \{1,2,e\}} \sum _{T \in \{ S,S'\}} \sum _{C\in \C}|L^x_{T,C}|)$.
We show that the second expression is bounded by $\Otild(n/m)$.
We start by bounding $\sum_{C\in \mathcal C}|L^e_{S,C}|$.
Note that for every cluster $C$ such that $z\notin [s_C-1..e_C+1]$, both $S[s_C..e_C]$ and $S'[s_C..e_C]$ are runs.
At most two runs with period $p$ can intersect an index (see \cref{fact:index_coverd_two_pruns}), so for all but $O(1)$ clusters in each string (two touching each of $z-1$,$z$, and $z+1$), the endpoints of $R_C$ enclose a run both in $S$ and in $S'$.
Let $\mathcal{C}'$ be the set $O(1)$ clusters such that $z\in [s_C-1..e_C+1]$ for $C\in \mathcal C '$.
In the rest of this section, we bound the expression $\sum_{x\in \{1,2,e\}} \sum _{T \in \{ S,S'\}} \sum _{C\in \C}|L^x_{T,C}|$ in three parts.
In particular, we decompose this expression into three sub-sums as follows.

\begin{align}
    \sum_{x\in \{1,2,e\}} \sum _{T \in \{ S,S'\}} \sum _{C\in \C}|L^x_{T,C}|=&\nonumber\\\label{eq:subsums}
     \underbrace{ \sum_{x\in \{ 1,2\}} \sum_{T\in \{ S,S'\}}\sum_{C\in \C \sm \C'}|L^x_{T,C}|}_{{\text{\cref{lem:firsts}}}} &+
    \underbrace{\sum_{T\in \{ S,S'\}}\sum_{C\in \C \sm \C'}|L^e_{T,C}|}_{{\text{\cref{lem:lasts}}}} +
     \underbrace{\sum_{x\in \{ 1,2,e\}}\sum_{T \in \{ S,S'\}}\sum_{C \in \C'} |L^x_{T,C}|}_{{\text{\cref{lem:annoyinguys}}}}
\end{align}

In the above, the Lemma specified beneath each sub-sum provides a bound of $\Otild(n/m)$ for the sub-sum.
The task of proving that the running time of the algorithm is $\Otild(n/m)$ is hence reduced to proving these three lemmas.
Together, this concludes the proof of \cref{lem:heavytime}.

We start providing a bound on the second term of \cref{eq:subsums}.
\begin{lemma}\label{lem:lasts}
    $\sum_{C \in \C\sm\C'}|L_{S,C}^e|+|L_{S',C}^e|\in \Otild(n/m)$.
\end{lemma}
\begin{proof}
    We bound the first part $\sum_{C \in \C\sm\C'}|L_{S,C}^e|$, the bound of $\sum_{C \in \C\sm\C'}|L_{S',C}^e|$ is symmetric.
    For every cluster $C\in \C_1$, let $P_{C}=S[b_C..e_C+1]$ be the string beginning in the last occurrence of $M_L$ in $C$ and ending in the first character after $R_C$ (to incorporate the case where $e_C=n$, we assume that $S[n+1] = \$$ for some fresh symbol $\$$).
    By \cref{fact:break_period}, $P_C$ is aperiodic.
    Moreover, let $\mathcal{P}=\{P_{C}\mid C\in \C_1\}$.
    Let $B=\{b_C\mid C\in \C_1\}$.
    For each $P\in \mathcal{P}$, we define $B_{P}= \Occurrences{S}{P}$.
    Notice that  $B\subseteq \cup_{P\in\mathcal{P}}B_P$ trivially.
    In addition, for every $b\in B_{P}$ we have $S[b..b+|P|-1)$ is periodic and therefore contained in some $R_C$ and $S[b..b+m)$ must be the last occurrence of $M_L$ in $R_C$ since $S[b..b+|P|)$ is aperiodic.
    Therefore $B=\cup_{P\in\mathcal{P}}B_P$.

   For a cluster $C$, let $q=|L_{S,C}^e|$.
    By the construction of $L_{S,C}^e$, the element $\ell_{b_C,q-1}$ is the leftmost index in $L_{S,C}^e$, and $\ell_{b_C,q-2}$ is the leftmost index in $L_{S,C}^e\setminus \{\ell_{b_C,q-1}\}$.
    By the construction of $L_{S,C}^e$, it holds that  $\ell_{b_C,q-2}+\LPF_S(\ell_{b_C,q-2})>e_C$+1.
    Denote $L_3=L_S(b_C,e_C+1)$.
    By the construction of $L_{S,C}^e$,
    \[
    |L_{S,C}^e\setminus \{\ell_{b_C,q-1}\}|
    =|L_S(b_C,e_C)|-1
    \le
    |L_S(b_C,e_C+1)|
    =|L_3|
    \]
    By \cref{lem:bound_Li}, $|L_3|\le |\NDExt{B_{P_C}}{b_C}{S}|+1$.
    Therefore, for a given cluster $C$, it holds that $|L_{S,C}^e|\le |\NDExt{B_{P_{C}}}{b_C}{S}|+2$.
    To conclude,

\begin{align*}
\sum_{C\in \C\sm\C'}|L_{S,C}^e|&\le\sum_{C\in \mathcal{C}_1}|L_{S,C}^e|&\text{since $\C\sm\C'\subseteq\C_1$}
\\&\le
    \sum_{b_C\in B}(|\NDExt{B_{P_C}}{b_C}{S}|+2)&\text{}
    \\&=2|\mathcal{C}_1| + \sum_{b_C\in B}|\NDExt{B_{P_C}}{b_C}{S}| &\text{}
    \\&= 2|\mathcal{C}_1| +
    \sum_{P\in\mathcal{P}} \sum_{b_C\in B_P}|\NDExt{B_{P}}{b_C}{S}|&\text{$\{B_P\}_{P\in \mathcal{P}}$ is a partition of $B$}
    \\&= 2|\mathcal{C}_1| + \sum_{P\in\mathcal{P}}\Otild(|B_P|)&\text{by \cref{lem:chain}}
    \\&= 2|\C_1|+\Otild (|B|)=2|\C_1|+\Otild (|\C_1|)&\text{$\{B_P\}_{P\in \mathcal{P}}$ is a partition of $B$}
    \\&=\Otild (|\C_1|)=\Otild(n/m)&\text{By \cref{fact:per_structure}}.&\qedhere
\end{align*}
\end{proof}

We proceed to bound the first term of \cref{eq:subsums}.

\begin{lemma}\label{lem:firsts}
    $\sum_{C\in \mathcal{C}\setminus\mathcal{C}'}|L_{S,C}^1|+|L_{S,C}^2|+|L_{S',C}^1|+|L_{S',C}^2|\in \Otild(n/m)$.
\end{lemma}

\begin{proof}
    We prove the lemma for $\sum_{C\in \mathcal{C}\setminus\mathcal{C}'}|L_{S,C}^2|$.
    The proofs for the other cases are symmetric.
    Let $C\in \C \setminus \C'$ be a cluster.
    Since $C\notin \C'$, the endpoints of the run $R_C = S[s_C..e_C]$ are the same in $S$ and in $S'$.
    We emphasize that $a_C$ and $a_C+p$ are the first two occurrences of $M_L$ within $R_C$ (even if $C$ is a cluster of occurrences in $S'$).
    We distinguish between three types of $\ell_k\in L_{S,C}^2$.
    \begin{enumerate}
        \item Let $L_{C,\alpha}=\{\ell_k\in L_{S,C}^2\mid \ell_k<s_C\}$.
        \item Let $L_{C,\beta}=\{\ell_k\in L_{S,C}^2\mid s_C< \ell_k<\ell_0\}$.
         \item $\ell_k\in\{s_C,\ell_0\}$.
    \end{enumerate}
    We will show that $\sum_{C\in{\C\sm\C'}}|L_{C,\alpha}| \in \Otild(n/m)$ (in \cref{clm:before_sC}) and $\sum_{C\in{\C\sm\C'}}|L_{C,\beta}| \in \Otild(n/m)$ (in \cref{clm:after_sC}).
    Clearly, $\sum_{C\in{\C\sm\C'}}|\{s_C,\ell_0\}|=\sum_{C\in{\C\sm\C'}}2=|\C\sm\C'|\in \Otild(n/m)$.
    Therefore, we simply obtain $\sum_{C\in \C\sm\C'}|L_{S,C}^2|\le \sum_{C\in{\C\sm\C'}}\left(|L_{C,\alpha}|+|L_{C,\beta}|+2\right)=\Otild(n/m)$, as required.

    We prove the following claim (in a similar way to the proof of \cref{lem:lasts}).
    \begin{claim}\label{clm:before_sC}
        $ \sum_{C\in{\C\sm\C'}}|L_{C,\alpha}| \in \Otild(n/m).$
    \end{claim}
    \begin{claimproof}
    First, notice that $a_C+p=\ell_0\notin L_{C,\alpha}$.
    For every cluster $C\in \C_1$, let $P_{C}=S[s_C-1..a_C+m-1]$ be the string beginning in the last character before $R_C$ (to incorporate the case where $s_C=1$, we assume that $S[0] = \$$ for some fresh symbol $\$$) and ending in the last character of the first occurrence of $M_L$ in $C$.
    By \cref{fact:break_period}, $P_C$ is aperiodic.
    Moreover, let $\mathcal{P}=\{P_{C}\mid C\in \C_1\}$.
    Let $F=\{s_C-1\mid C\in \C_1\}$.
    For each $P\in \mathcal{P}$, we define $F_{P}= \Occurrences{S}{P}$.
    Notice that  $F\subseteq \cup_{P\in\mathcal{P}}F_P$ trivially.
    In addition, for every $f\in F_{P}$ we have $S[f+1..f+|P|)$ contains an occurrence of $M_L$ and has period $p$.
    Therefore, $S[f+1..f+|P|)$ is contained in some $R_{\hat C}$ for $\hat C\in \C_1$ and $S[f+|P|-m..f+|P|)$ must be the first occurrence of $M_L$ in $R_{\hat C}$ since $S[f..f+|P|)$ is aperiodic.
    Therefore $F=\cup_{P\in\mathcal{P}}F_P$.

    Let $\hat L_{C,\alpha}=L_S(s_C-1,a_C+m-1)$.
    Let $\ell_k\in L_{C,\alpha}$.
    It holds that $\ell_k+\LPF_S(\ell_k)> a_C+m-1$.
    By \cref{obs:critical_indices} we have $\ell_k+\LPF_S(\ell_k)>\ell_k-1+\LPF_S(\ell_k-1)$.
    Therefore, by \cref{obs:critical_indices} we have $\ell_k\in \hat L_{C,\alpha}$.
    Hence, $|L_{C,\alpha}|\le|\hat L_{C,\alpha}|$.

    By \cref{lem:bound_Li}, $|\hat L_{C,\alpha}|\le |\NDExt{F_{P_C}}{s_C-1}{S}|+1$.
    Therefore, for a given cluster $C\in \C_1$, it holds that $|L_{C,\alpha}|\le |\NDExt{F_{P_{C}}}{s_C-1}{S}|+2$.
    To conclude,
\begin{align*}
\sum_{C\in \C\sm\C'}|L_{C,\alpha}|&\le \sum_{C\in \mathcal{C}_1}|L_{C,\alpha}| &\text{since $\C\sm\C'\subseteq\C_1$}
\\&\le
    \sum_{s_C-1\in F}(|\NDExt{F_{P_C}}{s_C-1}{S}|+2)&\text{}
    \\&=2|\mathcal{C}_1| + \sum_{s_C-1\in F}|\NDExt{F_{P_C}}{s_C-1}{S}| &\text{}
    \\&= 2|\mathcal{C}_1| +
    \sum_{P\in\mathcal{P}} \sum_{s_C-1\in F_P}|\NDExt{F_{P}}{s_C-1}{S}|&\text{$\{F_P\}_{P\in \mathcal{P}}$ is a partition of $F$}
    \\&= 2|\mathcal{C}_1| + \sum_{P\in\mathcal{P}}\Otild(|F_P|)&\text{by \cref{lem:chain}}
    \\&= 2|\C_1|+\Otild (|F|)=2|\C_1|+\Otild (|\C_1|)&\text{$\{F_P\}_{P\in \mathcal{P}}$ is a partition of $F$}
    \\&=\Otild (|\C_1|)=\Otild(n/m)&\text{by \cref{fact:per_structure}}.
\end{align*}
\end{claimproof}

   We proceed to bound $L_{C,\beta}$.
    \begin{claim}\label{clm:after_sC}
        $ \sum_{C\in{\C\sm\C'}}|L_{C,\beta}| \in \Otild(n/m).$
    \end{claim}
    \begin{claimproof}
We denote $A=\{a_C,a_C+p\mid C\in\C\}$.
Let $C\in\C\sm\C'$, let $i=a_C+p$ and let $j=i+m-1$.
We first show that $|L_{C,\beta}|\le |\NDExt{A}{i}{S}|$.

Let $\ell\in L_{C,\beta}\setminus\{a_C+p\}$.
By \cref{obs:critical_indices} we have $\ell+\LPF_S(\ell)>\ell-1+\LPF_S(\ell-1)$.
In addition, $\ell+\LPF_S(\ell)>j$ and $\ell\in[s_C+1..a_C+p]$.

We denote $d=\LPFpos_S(\ell)$.
Due to the definition of $\LPF$, we have $S[\ell..\ell+\LPF_S(\ell))= S[d..d +\LPF_S(\ell))$ which in particular yields the equality $S[i..j]= S[d + (i-\ell)..d+(j-\ell)]$.
We showed that $o = d + (i-\ell) \in \Occurrences{S}{M_L}$.

We have to show that $o\in A$.
Assume to the contrary that $o\notin A$.
Let $R_o$ be the run with period $p$ containing $o$.
We have $o-d = i-\ell \le a_C+p - s_C - 1\le 2p-1$.
It follows from $o\notin A$ that $d -1 \ge o - 2p$ is within $R_o$ and therefore $S[d-1]=S[d-1+p]$.
From the definition of $\LPF$, we have $S[\ell..\ell+\LPF_S(\ell)) = S[d..d+\LPF_S(\ell))$.
Since $\ell > s_C$, the periodicity of $R_C$ yields $S[\ell-1] = S[\ell-1+p]$.
Using all equalities specified above, we have $S[d-1] = S[d-1 + p] = S[\ell-1+p]=S[\ell-1]$.
We have shown that $S[d-1..d+\LPF_S(\ell)) = S[\ell-1..\ell + \LPF_S(\ell))$.
This implies that $\ell-1 + \LPF_S(\ell-1) \ge \ell + \LPF_S(\ell)$, a contradiction.

We proceed to prove that the point $p_{i,o}=(\LCS_S(i,o),\LCP_S(i,o))$ is a non-dominated point in $\ext_{A,i}$.
Assume to the contrary that for some occurrence $o'\in A$ there is an occurrence $p_{i,o'}\in \ext_{A,i} \sm \{p_{i,o}\}$ such that $\LCP_S(i,o') \ge \LCP_S(i,o)$ and $\LCS_S(i,o') \ge \LCS_{S}(i,o)$.
Notice that the following equality holds $S[o'-\LCS_S(i,o')+1 .. o' + \LCP_S(i,o')-1] = S[i-\LCS_S(i,o')+1 .. i + \LCP_S(i,o')-1]$.
We consider two cases.
\begin{enumerate}
    \item If $\LCP_S(i,o')>\LCP_S(i,o)$ and $\LCS_S(i,o')\ge \LCS_S(i,o)$ then we have $S[\ell .. \ell+\LPF_S(\ell)+1) = S[o'-(i-\ell)..o' + \LPF_S(\ell)-(i-\ell)+1)$.
    This is a contradiction to the maximality of $\LPF_S(\ell)$
    \item If $\LCP_S(i,o')\ge\LCP_S(i,o)$ and $\LCS_S(i,o')> \LCS_S(i,o)$ then we have $S[\ell-1 .. \ell+\LPF_S(\ell)) = S[o'-(i-\ell)-1..o' + \LPF_S(\ell)-(i-\ell))$.
    This indicates that $\ell-1 + \LPF_S(\ell-1) \ge \ell +\LPF_S(\ell)$, a contradiction to $\ell+\LPF_S(\ell)>\ell-1+\LPF_S(\ell-1)$.
\end{enumerate}

Thus, we show that $o\in A$ and that $p_{i,o}$ is a non-dominated point in $\ext_{A,i}\cap [1..i)$.
Thus, by definition we have $p_{i,o}\in \NDExt{A}{i}{S}$.
By \cref{clm:asmuchtotheleft}, we have $\LCS_S(i,o)=i-\ell+1$.
Since $p_{i,o}=(\LCS_S(i,o),\LCP_S(i,o))$, we have that the first coordinate of $p_{i,o}$ is $i-\ell+1$.
Thus, $p_{i,o}$ is a unique point in $\ext_{A,i}\cap[1..i)$ corresponding to $\ell$.
Thus, we have shown $|L_{C,\beta}|\le |\NDExt{A}{i}{S}|$.

Thus, by \cref{lem:chain} we obtain \[\sum_{C\in \mathcal{C}\setminus\C'}|L_{C,\beta}|\le \sum_{C\in\C\sm\C'}|\NDExt{A}{a_C+p}{S}\le \sum_{a\in A}|\NDExt{A}{i}{S}|\in \Otild(n/m+|A|)= \Otild(n/m)\] as required.
\end{claimproof}
\end{proof}

It only remains to bound the third term of \cref{eq:subsums}.
Recall that there are $O(1)$ terms in this sub-sum.
We show that each is bounded by $\Otild(n/m)$.
\begin{lemma}\label{lem:annoyinguys}
    For every cluster $C\in \C$, for every $x\in \{1,2,e\}$ such that $i^x_{C}\in \Occurrences{S}{M_L}$, it holds that $|L^x_{S,C}| \in O(n/m)$
\end{lemma}
\begin{proof}
    Let $i \in \{a_C,a_C+p,b_C\}$ and let $j= i +m-1$.
    We will show that $|L_S(i,j)|=O(n/m)$.
    For $i \in \{a_C, a_C+p\}$, this is exactly $L^1_{S,C}$ and $L^2_{S,C}$.
    For $i = b_C$, we have that $L^e_{S,C} = L_S(b_C,e_C)$ with $e_C \ge b_C + m-1$.
    It follows immediately from the definition of $L_S(\cdot,\cdot)$ that $|L^e_{S,C}|\le |L_S(b_C,b_C+m-1)|$.

    Let $R_i=S[s_i..e_i]$ be the run with period $p$ containing the occurrence $S[i..j]$ of $M_L$ in $S$.
    We distinguish between three subsets of $L_S(i,j)$.
    \begin{enumerate}
        \item $L_1 = L_S(i,j) \cap [s_i+1..i-1]$
        \item $L_2 = L_S(i,j) \cap [1..s_i-1]$
        \item $L_S(i,j) \cap \{ i,s_i\}$
    \end{enumerate}
    Clearly, the third set is of constant size.
    We bound $L_1$ and $L_2$ separately.
    \begin{claim}
        $|L_1| \in O(n/m)$
    \end{claim}
    \begin{claimproof}
        Let $\ell \in L_1$ and let $d = \LPFpos_S(\ell)$.
        We start by showing that for some cluster $C_1\in \C_1$, the index $d$ is the first index of the run $R_{C_1}$ i.e., $d=s_{C_1}$.
        Firstly, let us show that $d \in [s_{C_1}..e_{C_1}]$ for some cluster $C_1 \in \C_1$.
        From the definition of $\LPF$, we have $S[\ell .. \ell+\LPF_S(\ell) ) = S[d..d+\LPF_S(\ell))$.
        Since $\ell > s_i$, and $S[\ell..\ell + \LPF_S(\ell))$ contains an occurrence of $M_L$, the string $S[d..d+\LPF_S(\ell))$ also contains an occurrence of $M_L$ and has a periodic prefix with period $p$ containing this occurrence of $M_L$.
        Therefore, it must be the case that $d\in [s_{C_1}..e_{C_1}]$.

        Assume to the contrary that $d > s_{C_1}$ and recall that $\ell>s_i$.
        Then, from the periodicity of $R_{C_1}$ we get $S[d-1]=S[d-1+p]$.
        From the periodicity of $R_i$, we also have $S[\ell-1]=S[\ell-1+p]$.
        Combining the above equalities, we obtain $S[\ell-1] = S[\ell-1+p] = S[d-1+p] = S[d-1]$.
        We have shown that $S[\ell-1 .. \ell+\LPF_S(\ell)) = S[d-1..d+\LPF_S(\ell))$.
        This implies $\ell-1 + \LPF_S(\ell-1) \ge \ell+ \LPF_S(\ell)$.
        This is a contradiction to \cref{obs:critical_indices}.

        Let $\ell \in L_1$ be an index with $\LPFpos_S(\ell) = s_{C_1}$ and $\ell + \LPF_S(\ell) > e_i + 1$.
        We claim that $ e_i -\ell=e_{C_1} -s_{C_1}$.
        This must be true, since the period $p$ breaks at $e_i+1$ after $\ell$ and at $e_{C_1}$ after $s_{C_1}$.
        Therefore, there is at most one $\ell \in L_1$ with $\LPFpos_S(\ell) = s_{C_1}$ and $\ell + \LPF_S(\ell) > e_i + 1$ (we have shown that $s_{C_1}$ decides the value of $\ell$).

        We further claim that there is at most one element $\ell \in L_1$ with $\LPFpos_S(\ell) = s_{C_1}$ such that $\ell + \LPF_S(\ell) \le e_i+1$.
        First, notice that for any $\ell$ with $\ell + \LPF_S(\ell) \le e_i +1$, it holds that $\ell \in [s_i+1..s_i+p]$.
        That is true since $S[s_i+p..e_i] = S[s_i..e_i-p]$ and therefore $s_i +p + \LPF_S(s_i + p) \ge e_i + 1$.
        Therefore, for every $\ell>s_i+p$ with $\ell+\LPF_S(\ell)\le e_i+1$ we have $\ell+\LPF_S(\ell)=\ell-1+\LPF_S(\ell-1)$, which by \cref{obs:critical_indices}, implies $\ell\notin L_{1}$.
        Assume to the contrary that there are two indices $\ell_1,\ell_2\in L_1 \cap[s_i+1..s_i+p]$ and $s_{C_1}=\LPFpos_S(\ell_1)=\LPFpos_S(\ell_2)$.
        Hence, $\LCP_S(\ell_1,s_{C_1}) \ge m$ and $\LCP_S(\ell_2,s_{C_1}) \ge m$.
        It follows that the string $m' = S[s_{C_1}..s_{C_1} +m-1]$, which is periodic with period $p$, occurs both in $\ell_1$ and in $\ell_2$.
        Since $|\ell_2 - \ell_1| < p$, this is a contradiction to \cref{fact:per_structure}.

        We have shown that for every $C\in C_1$, there are at most 2 elements $\ell \in L_1$ with $\LPFpos_S(\ell) = s_{C_1}$, and that every element in $L_1$ must have $\LPFpos_S(\ell) = s_{C'}$ for some $C' \in \C_1$.
        Due to $|\C_1| \in O(n/m)$, this concludes the proof.
    \end{claimproof}

    We proceed to bound $|L_2|$.
    \begin{claim}
        $|L_2| \in O(n/m)$
    \end{claim}
    \begin{claimproof}
        Consider the string $P= S[s_i-1..j]$.
        Due to \cref{fact:break_period}, $P$ is aperiodic.
        Clearly, $|L_2| \le |L_S(s_i-1..j)|$ from the definition of $L_S(\cdot,\cdot)$.
        By \cref{lem:bound_Li}, $|L_S(s_i-1,j)| \le |\NDExt{\Occurrences{S}{P}}{s_i-1}{j}|\le |\Occurrences{S}{P}|$.
        Since $P$ is aperiodic, we have $|\Occurrences{S}{P}| \le O(n/m)$, which concludes the proof.
    \end{claimproof}
    \end{proof}

\section{Lower Bound}\label{sec:lb}

In this section, we show that there is no data structure solving \cref{prb:ub} with polynomial preprocessing time and $O(n^{2/3-\eps})$ update and query times for any $\eps>0$, unless SETH~\cite{IP01,IPZ01} is false, thus proving \cref{thm:intro-lb}.
We first introduce an easier version of \cref{prb:ub}.

\begin{problem}{Dynamic $|\LZss(S)|$ under substitutions}\label{prb:lb}
\textbf{Preprocess$(S)$:} preprocess a string $S$ of length $N$ and return $|\LZss(S)|$.
\medskip
\\
\textbf{Update$(i,\sigma)$:} apply $S[i]\gets \sigma$ and return $|\LZss(S)|$.
\end{problem}
Clearly, the following lower bound implies \cref{thm:intro-lb}.
\begin{lemma}\label{thm:lb}
For every $c,\eps>0$, there is no data structure that solves \cref{prb:lb} with $O(N^c)$ preprocessing time and $O(N^{2/3-\eps})$ update time, unless SETH is false.
\end{lemma}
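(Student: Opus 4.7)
The plan is to reduce from the Orthogonal Vectors problem (OV): given sets $A,B\subseteq\{0,1\}^d$ with $|A|=|B|=n$ and $d=\omega(\log n)$, decide whether some $(a,b)\in A\times B$ satisfies $\langle a,b\rangle=0$. SETH implies OV requires $n^{2-o(1)}$ time, so any algorithm running in $n^{2-\Omega(1)}$ time contradicts it. I will show that a data structure solving \cref{prb:lb} with $O(N^c)$ preprocessing and $O(N^{2/3-\eps})$ updates yields such an algorithm.

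The first step is the gadget. Using a standard OV-to-pattern-matching encoding, I will build for every $v\in\{0,1\}^d$ two strings $P_v$ and $T_v$ of length $\mathrm{poly}(d)=n^{o(1)}$ over a constant-size alphabet. Together with a fresh separator $\#$ not occurring in any $P_v$ or $T_v$, they satisfy: $P_a$ is a substring of the concatenation $T_{b_1}\#T_{b_2}\#\cdots\#T_{b_k}$ if and only if there exists $i\in[k]$ with $a\perp b_i$. The separator is what rules out spurious matches that straddle two consecutive gadgets.

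The second step batches the reduction to handle arbitrary $c$. Partition $B$ into $m=n/k$ groups $B_1,\dots,B_m$ of size $k=n^{\alpha}$, where $\alpha\in(0,1)$ will be fixed as a function of $c$ and $\eps$. For each group $B_j$, initialize the string
\[
S_j \;=\; T_{b_1}\#T_{b_2}\#\cdots\#T_{b_k}\#\,Q_j
\]
of length $N=\Otild(k)$, where the ``test slot'' $Q_j$ has the same length as any $P_a$ and is padded with a fresh character $\star$ absent from the prefix, so that initially every LZ77 phrase inside $Q_j$ has length $1$. Apply the hypothetical preprocessing once per group in $O(N^c)$ time. Then, for each $a\in A$: issue $|P_a|$ substitutions overwriting $Q_j$ by $P_a$, read the value of $|\LZss(S_j)|$ returned by the last of these updates, and finally issue $|P_a|$ substitutions restoring the $\star$-padding. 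By design, this returned value drops by an extra additive unit (compared to the restored baseline) precisely when $P_a$ occurs in the prefix, i.e.\ precisely when $\exists b\in B_j: a\perp b$. Iterating across all $m$ groups and all $a\in A$ hence decides OV.

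Totalling the running time gives
\[
\Otild\!\bigl(\tfrac{n}{k}\cdot N^{c}\;+\;\tfrac{n}{k}\cdot n\cdot |P_a|\cdot N^{2/3-\eps}\bigr)\;=\;\Otild\!\bigl(n\cdot k^{c-1}\;+\;n^{2}\cdot k^{-1/3-\eps}\bigr).
\]
Choosing $k=n^{\alpha}$ with $\alpha>0$ small enough so that $\alpha(c-1)<1$ and $\alpha(1/3+\eps)>0$ (both jointly achievable for any fixed $c,\eps>0$, with no batching needed when $c\le 1$) makes both summands $n^{2-\Omega(1)}$, contradicting SETH. The main obstacle will be designing the gadget so that (i) substring occurrence of $P_a$ in a single $T_{b_i}$ captures $a\perp b_i$, (ii) no spurious match of $P_a$ crosses a gadget boundary or the prefix/$Q_j$ boundary, and (iii) the change in $|\LZss(S_j)|$ between the orthogonal and non-orthogonal regimes is at least one and thus detectable. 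Standard techniques based on an augmented constant-size ``anchor'' alphabet, with carefully chosen separator and padding characters, handle all three simultaneously.
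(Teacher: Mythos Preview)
There is a fatal gap at the gadget step. You posit encodings $P_v,T_v$ of length $\mathrm{poly}(d)=n^{o(1)}$ such that $P_a$ is a substring of $T_b$ if and only if $a\perp b$. But any such encoding would solve OV in near-linear time unconditionally: build a suffix tree of $T_{b_1}\#\cdots\#T_{b_n}$ in time $n\cdot\mathrm{poly}(d)=n^{1+o(1)}$, then for each $a\in A$ query whether $P_a$ occurs in $O(|P_a|)$ time, for a total of $n^{1+o(1)}$. This refutes SETH, so the gadget you call ``standard'' cannot exist (and if SETH were false the lemma would be vacuous anyway). Exact substring matching is a linear-time problem and therefore cannot encode OV through $n^{o(1)}$-length strings. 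If you try to rescue the plan by lengthening the gadgets to $|P_a|=n^{\Omega(1)}$, the number of substitutions per test becomes $n^{\Omega(1)}$ and your update term $(n/k)\cdot n\cdot |P_a|\cdot N^{2/3-\eps}$ no longer beats $n^2$.

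The paper's reduction avoids this barrier by not going through substring occurrence at all. It encodes all of $A$ at once into a ``matrix'' of length $\Theta(d\,n^{1.5})$, representing each index $i$ via a two-digit base-$\sqrt n$ unary code so that $|\ACoord(i,j)|=\Theta(\sqrt n)$, and designs dictionary and $\BVect$ gadgets so that the number of $\LZss$ phrases covering $\AVect(i)$ is \emph{provably exactly} $d+1+(\ourvec u\cdot\ourvec v_i)$. Only the single vector $\ourvec u\in B$ is swapped in, at $O(d)$ substitutions each; the string length $N=\Theta(n^{1.5})$ is chosen precisely so that $n$ rounds of $O(d)$ updates at cost $N^{2/3-\eps}$ each totals $n^{2-\Omega(1)}$. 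Your step~(iii) is also underspecified: even when $P_a$ does not occur in the prefix, its $\LZss$ factorization in the test slot need not coincide with the $\star$-baseline, so the ``drops by exactly one unit'' criterion is not justified without a phrase-counting argument of the kind the paper carries out in \cref{lem:phrases}.
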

We note that the queries of \cref{prb:ub} can be used to compute $|\LZss(S)|$ within $\tilde{O}(1)$ queries.
The number of phrases can be obtained directly using $\ContainingPhrase_S(n)$.
Alternatively, one can perform a binary search on $\SelectPhrase_S(i)$ to identify the last phrase.

The rest of this section is dedicated for the proof of \cref{thm:lb}.
We build upon the lower bound of Abboud and Vassilevska Williams~\cite{AVW21} for the Orthogonal Vectors ($\OV$) problem in the following setting.
A set $A$ of vectors is given for preprocessing, and then a set $B$ is queried, the goal is to decide if there are vectors $\ourvec{v}\in A$ and $\ourvec{u}\in B$ such that $\ourvec{v}\cdot \ourvec{u} = 0$, where $\ourvec{v} \cdot \ourvec{u}= \bigvee_{j=1}^d(\ourvec{v}[j] \wedge \ourvec{u}[j]) $ denotes the boolean inner product of $\ourvec{u}$ and $\ourvec{v}$.
The lower bound of \cite{AVW21} is stated as follows.
\begin{lemma}[{\cite[{cf. Theorem 13}]{AVW21}}]\label{lem:OVA}
There is no algorithm that given two sets $A$ and $B$ of $n$ boolean vectors in
$d(n) = \log^{\omega(1)} n$ dimensions, preprocesses the set $A$ in $O(n^c)$ time, and subsequently solves $\OV$ on $A$ and $B$ in $O(n^{2-\eps})$ time, for some $c,\eps > 0$, unless SETH is false.
\end{lemma}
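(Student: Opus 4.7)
The plan is to derive \cref{lem:OVA} via a self-reduction from the standard Orthogonal Vectors Hypothesis (OVH), which is known to follow from SETH: for every $\eps'>0$, OV on two sets of $M$ vectors in $\omega(\log M)$ dimensions cannot be solved in $O(M^{2-\eps'})$ time unless SETH fails. Suppose, toward contradiction, that some algorithm $\mathcal{A}$ achieves $O(n^c)$ preprocessing on $A$ and $O(n^{2-\eps})$ query time for some constants $c,\eps>0$ in the regime $d=\log^{\omega(1)} n$. I would invoke $\mathcal{A}$ as a black box to solve ordinary OV in truly subquadratic time, contradicting OVH and thereby SETH.

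Given an OV instance $(A',B')$ with $|A'|=|B'|=M$ and dimension $d=\log^{\omega(1)} M$, I would partition $A'$ and $B'$ into chunks of size $N$ each (where $N\le M$ is a parameter to be fixed), giving $M/N$ chunks on each side. For every $A'$-chunk $A'_i$ I would invoke $\mathcal{A}$'s preprocessing in $O(N^c)$ time, and then for every pair of chunks $(A'_i, B'_j)$ I would invoke $\mathcal{A}$'s query in $O(N^{2-\eps})$ time to detect an orthogonal pair inside $A'_i\times B'_j$. The original instance admits an orthogonal pair iff some $(A'_i,B'_j)$ does. Since $\log N = \Theta(\log M)$, the dimension $d$ remains super-polylogarithmic in $N$, so $\mathcal{A}$ is applicable on each chunk and the dimension condition is preserved.

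The total runtime is
\[
O\!\left(\tfrac{M}{N}\cdot N^{c} \;+\; \tfrac{M^{2}}{N^{2}}\cdot N^{2-\eps}\right) \;=\; O\!\left(M\,N^{c-1} \;+\; M^{2}\,N^{-\eps}\right).
\]
If $c\le 1$, taking $N=M$ gives an $O(M^{2-\eps})$-time algorithm for OV directly, contradicting OVH. Otherwise, balancing the two terms with $N=M^{1/(c-1+\eps)}$ yields total time $O\!\left(M^{2-\eps/(c-1+\eps)}\right)=O(M^{2-\eps'})$ for $\eps':=\eps/(c-1+\eps)>0$, again contradicting OVH and hence refuting SETH.

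The main technical obstacle is the regime of large $c$, where naive application of $\mathcal{A}$ directly to the whole instance of size $M$ would cost $M^{c}\gg M^{2}$ just in preprocessing; the chunking trick circumvents this by amortizing each preprocessing across the $M/N$ different $B'$-chunks, effectively trading preprocessing budget against query budget. A secondary subtlety is keeping the super-polylog dimension hypothesis intact under the reduction, which is automatic because $\log N=\Theta(\log M)$, so any dimension $d=\log^{\omega(1)} M$ is also $\log^{\omega(1)} N$.
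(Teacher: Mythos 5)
Note first that the paper does not prove \cref{lem:OVA} at all: it is imported as a black box from \cite{AVW21}, so there is no in-paper argument to compare against. Your chunking self-reduction is essentially the standard derivation of such preprocessing lower bounds from the moderate-dimensional OV hypothesis (and hence SETH), and it is sound: splitting $A'$ and $B'$ into chunks of size $N$, preprocessing each $A'$-chunk once, and querying all $(M/N)^2$ chunk pairs gives $O(MN^{c-1}+M^2N^{-\eps})$ time, which is truly subquadratic for a suitable $N=M^{\delta}$, contradicting the OV hypothesis. Two small points deserve explicit care. First, your balancing choice $N=M^{1/(c-1+\eps)}$ exceeds $M$ whenever $1<c<2-\eps$; in that regime simply cap $N=M$, which gives $MN^{c-1}=M^{c}=M^{2-(2-c)}$, still truly subquadratic, so the case analysis should read ``$N=\min(M,M^{1/(c-1+\eps)})$'' rather than only distinguishing $c\le 1$. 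Second, you reuse one preprocessed $A'$-chunk for $M/N$ different $B'$-chunks; the statement as phrased speaks of solving $\OV$ on $A$ and a set $B$ once, so you should either read the data structure as supporting arbitrarily many query sets after preprocessing (the intended reading, and how \cite{AVW21} formulate it) or add the standard rollback argument: a query running in $O(N^{2-\eps})$ time touches at most that many memory cells, so its modifications can be undone at the same cost, making reuse harmless even for large $c$ where re-preprocessing per pair would be too expensive. Finally, the dimension bookkeeping you wave at (``$\log N=\Theta(\log M)$'') is correct but should include the routine zero-padding step that brings the hard OV instances up to the exact super-polylogarithmic dimension function $d(\cdot)$ assumed by the hypothetical data structure; padding preserves orthogonal pairs and costs only $M^{1+o(1)}$ time.
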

We will show how to solve the problem of \cref{lem:OVA} using a data structure of \cref{prb:lb} with $N=O(d\cdot n^{1.5})$ and $O(d\cdot n)$ updates.

\para{Overview.}
Let $A$ be a set of $n$ Boolean $d = d(n)$-dimensional vectors.
The string $S$ consists of three components, which we refer to as the dictionary, the vector, and the matrix.
The dictionary is defined as a function of $n$, independent of the vectors in $A$ or $B$.
The matrix represents all vectors in $A$, while the vector component at any given time represents a single $\ourvec{u} \in B$.
Our goal is to encode $A$ and some $\ourvec{u} \in B$ such that if all vectors in $A$ are non-orthogonal to $\ourvec{u}$, then $|\LZss(S)|$ attains a predetermined value.
Conversely, if there exists at least one $\ourvec{v} \in A$ such that $\ourvec{u} \cdot \ourvec{v} = 0$, then $|\LZss(S)|$ is strictly smaller.
We introduce a representation that encodes each $\ourvec{v} \in A$ separately.
The number of phrases in $\LZss(S)$ contributed by the encoding of $\ourvec{v}$ is $d+1$ if $\ourvec{u} \cdot \ourvec{v} = 0$ and $d+2$ otherwise.
(In particular, this allows the algorithm not only to detect the existence of an orthogonal pair but also to count the number of such pairs.)
Finally, transitioning the representation from one vector in $B$ to another requires at most $O(d)$ updates.
Thus, iterating over all vectors in $B$ involves at most $O(d \cdot n)$ updates.

\para{The reduction.}

The alphabet we use contains the symbols $0_j$, $1_j$, and $2_j$ for every $j \in [d]$. In addition, it includes a large number of unique characters, which we denote by the symbol $\#$. Each occurrence of $\#$ refers to a distinct symbol that appears exactly once in the strings we define.
Without loss of generality, we assume that  $\sqrt n$ is an integer number.
We first explain how to represent the set $A$.
For any $i\in[1..n]$ let $(i_1,i_2)\in[0..\sqrt n]\times [0..\sqrt n-1]$ be the two unique integers such that $i=i_1\cdot \sqrt n+i_2$.
We start by encoding of a single entry in a vector of $A$.
Let $\ourvec{v}_1,\ourvec{v}_2,\dots. \ourvec{v}_n$ be the vectors of $A$ in some arbitrary order and let $\ourvec{v}_i[j]$ denote the $j$th entry of $\ourvec{v}_i$.
$\ourvec{v}_i[j]$ is represented by \[\ACoord(i,j)= \begin{cases}
        0_j{}^{3\sqrt n+i_1}2_j{}^{i_2} & \ourvec{v}_i[j]=0\\
        1_j{}^{3\sqrt n+i_1}2_j{}^{i_2}  & \ourvec{v}_i[j]=1
\end{cases}.\]

The representation of the $i$th vector of $A$ is obtained by concatenating the representation of all its coordinates by \[    \AVect(i)=\bigodot_{j=1}^d \ACoord(i,j).\]

Finally, to represent the set $A$ we concatenate the representations of all the vectors.
We use $\#$ to denote a unique character that appears only once in $S$ (every time we use $\#$ it means a completely new symbol).
Thus, \[\Mat(A)=\bigodot_{i=1}^n \left(\AVect(i)\#\right).\]

The dictionary is made up of two types of gadgets.
The types, called \textit{skip-$0$} and \textit{skip-two-halves} gadgets, defined for every $j\in[d], i_2\in[0..\sqrt n],(x,y)\in\{0,1\}^2$ as
\[\skipo(j)=0_j{}^{4\sqrt n}2_j{}^{\sqrt n}\]
\[\skiphalves(j,i_2,x,y)=x_j{}^{2\sqrt n}2_j{}^{i_2}y_{j+1}{}^{2\sqrt n}.\]
We define the dictionary as
\[ \Dict=\left(\bigodot_{j=1}^d\skipo(j)\#\right)\cdot \left(
\bigodot_{j=1}^d\bigodot_{i_2=0}^{\sqrt n}\bigodot_{(x,y)\in\{0,1\}^2}\skiphalves(j,i_2,x,y)\#
\right).\]
A coordinate of a vector $\ourvec{u}\in B$ is encoded by \[\BCoord(\ourvec{u},j)= \begin{cases}
        1_j{}^{4\sqrt n+1}2_j{}^{\sqrt n} & \ourvec{u}[j]=0\\
        1_j{}^{2\sqrt n}{\#}1_j{}^{2\sqrt n}2_j{}^{\sqrt n}  & \ourvec{u}[j]=1
\end{cases}.\]

We emphasize that the encoding of a 0-coordinate in $\ourvec{u}$ by $\BCoord(\ourvec{u},\cdot)$ does not include the symbol $0_j$, which slightly differs from the intuition presented in \cref{sec:overview}. This design choice is motivated by the need to allow the encoding of a 0-coordinate to be obtained from the encoding of a 1-coordinate by only a few modifications.

Thus, a vector $\ourvec{u}\in B$ is encoded by \[\BVect(\ourvec{u})=\bigodot_{j=1}^d \BCoord(\ourvec{u},j)\#.\]

As described above, the dynamic string $S(\ourvec{u})$ when testing a vector $\ourvec{u}\in B$ is a concatenation of the three main gadgets: \[S(\ourvec{u})=\Dict\cdot \BVect(\ourvec{u})\cdot \Mat(A).\]

\subsection{Analysis}
In our analysis, we make use of the concept of $\LZss$-like factorization.
\begin{definition}[cf. \cite{LZ76}]
    A factorization $F$ of a string $S$ into disjoint substrings is $\LZss$-like if every substring in $F$ is either a single character, or a substring that is not a left-most occurrence of itself in $S$.
\end{definition}

We extensively use the following lemma by Lempel and Ziv~\cite{LZ76}, that says that the \LZss{} factorization is optimal.
\begin{fact}[{\cite[{Theorem 1}]{LZ76}}]\label{thm:LZ_optimal}
For any strings $S$, the \LZss{} factorization of $S$ has the smallest number of phrases among all \LZss-like factorizations of $S$.
\end{fact}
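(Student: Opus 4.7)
The plan is to prove the fact by a standard greedy exchange argument, leveraging \cref{lem:19} which states that $i+\LPF'_S(i)$ is non-decreasing in $i$. Let $F = f_1 f_2 \cdots f_k$ be any $\LZss$-like factorization of $S$ and let $L = \ell_1 \ell_2 \cdots \ell_m$ be the $\LZss$ factorization. Denote by $p_i$ and $q_i$ the ending positions of $\ell_1 \cdots \ell_i$ and $f_1 \cdots f_i$ respectively, with $p_0 = q_0 = 0$. I would prove by induction on $i$ that $q_i \le p_i$ for every $i \in [0..\min(k,m)]$. The statement then follows: when $p_m = |S|$, we must have $q_i < |S|$ for all $i < m$, forcing $k \ge m$.

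The base case $i=0$ is trivial. For the inductive step, assume $q_i \le p_i$. The $(i{+}1)$th $\LZss$ phrase has length exactly $\LPF'_S(p_i + 1)$, so $p_{i+1} = p_i + \LPF'_S(p_i+1)$. Meanwhile, the phrase $f_{i+1}$ in $F$ starts at position $q_i + 1$ and, being an $\LZss$-like phrase, it is either a single character or a non-leftmost occurrence of itself. In either case its length is at most $\LPF'_S(q_i+1)$ by the very definition of $\LPF'$. Hence
\[
q_{i+1} \;\le\; q_i + \LPF'_S(q_i+1) \;=\; (q_i+1) + \LPF'_S(q_i+1) - 1.
\]
Since $q_i + 1 \le p_i + 1$, iterating \cref{lem:19} yields $(q_i+1) + \LPF'_S(q_i+1) \le (p_i+1) + \LPF'_S(p_i+1)$, and therefore $q_{i+1} \le p_i + \LPF'_S(p_i+1) = p_{i+1}$, closing the induction.

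The approach is essentially the classical Lempel--Ziv optimality argument, and there is no real obstacle once \cref{lem:19} is in hand: the only subtlety is verifying that every $\LZss$-like phrase starting at index $j$ has length at most $\LPF'_S(j)$, which follows immediately because a phrase of length $\ell > 1$ must have an occurrence strictly before $j$, so by definition of $\LPF_S$ we have $\ell \le \LPF_S(j) \le \LPF'_S(j)$, while a singleton phrase trivially satisfies $1 \le \LPF'_S(j)$. With this observation the induction goes through cleanly and gives $m \le k$, which is exactly the claimed optimality.
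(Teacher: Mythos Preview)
The paper does not supply its own proof of this statement: it is recorded as a \emph{Fact} with a citation to \cite{LZ76} and is used as a black box thereafter. So there is nothing in the paper to compare against line by line.

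Your argument is correct. The greedy exchange via the prefix maxima $p_i,q_i$ is exactly the classical proof, and invoking \cref{lem:19} to propagate the inequality $(q_i{+}1)+\LPF'_S(q_i{+}1)\le (p_i{+}1)+\LPF'_S(p_i{+}1)$ is a clean way to phrase the key monotonicity step. The one point worth making explicit (and you do) is that an \LZss-like phrase of length $\ell>1$ starting at $j$ has an earlier occurrence in $S$, hence $\ell\le \LPF_S(j)\le \LPF'_S(j)$; this is immediate from the definition of ``not a leftmost occurrence''. The conclusion $k\ge m$ then follows exactly as you say: if $k<m$ then $q_k\le p_k<|S|$, contradicting $q_k=|S|$. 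There is no circularity, since \cref{lem:19} is an elementary property of $\LPF'$ that does not rely on the optimality of the \LZss\ factorization.
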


We apply \cref{thm:LZ_optimal} in the following manner.
Let $S = X \# G \# Y$, where each occurrence of $\#$ denotes a unique symbol.
Due to the uniqueness of each $\#$ in $S$, an \LZss{} phrase starts both before and after $G$ in $\LZss(S)$.
Consider the phrases in $\LZss(S)$ covering $G$, assume there are $z$ such phrases.
By \cref{thm:LZ_optimal}, any partition of $G$ into $\LZss$-like phrases uses at least $z$ phrases (as otherwise, one can replace the phrases of $\LZss(S)$ covering $G$ with  $z'<z$ phrases, obtaining a strictly smaller $\LZss$-like factorization of $S$).
We conclude this discussion with the following.

\begin{corollary}\label{lem:LZ_optimal}
    Let $G$ be a substring surrounded by unique $\#$ symbols in some string $S$, i.e., $S= X \# G \# Y$.
    Let $z$ be the number of $\LZss$-phrases in $\LZss(S)$ that cover $G\#$, and let $z'$ be the number of factors covering $G\#$ in an $\LZss$-like factorization of $S$.
    Then it must hold that $z \le z'$.
\end{corollary}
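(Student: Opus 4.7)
The plan is to prove \cref{lem:LZ_optimal} by a direct exchange argument that appeals to \cref{thm:LZ_optimal}. First, I would establish a structural alignment claim: because each of the two $\#$ symbols surrounding $G$ occurs exactly once in $S$, any $\LZss$-like factorization of $S$ must factorize each such $\#$ as a standalone single-character phrase. Indeed, any multi-character phrase containing $\#$ would have to have an earlier occurrence in $S$, which is impossible since $\#$ appears only once. Consequently, both $\LZss(S)$ and any competing $\LZss$-like factorization place phrase boundaries at the beginning and the end of $G\#$, so the $z$ phrases and the $z'$ phrases in question each form a genuine factorization of the substring $G\#$ itself (with the final one consisting only of the $\#$).

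Next, I would perform the exchange. Suppose for contradiction that $z' < z$. I would define a new factorization $F'$ of $S$ that agrees with $\LZss(S)$ outside $G\#$ but replaces the $z$ phrases of $\LZss(S)$ covering $G\#$ with the $z'$ phrases from the alternative $\LZss$-like factorization. The key observation is that the $\LZss$-like property of a phrase depends only on whether that phrase is a single character or has an earlier occurrence in $S$; it does not depend on how the rest of $S$ is factorized. Therefore $F'$ inherits the $\LZss$-like property from its two components and is itself an $\LZss$-like factorization of $S$.

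Finally, a counting argument closes the proof: $|F'| = |\LZss(S)| - z + z' < |\LZss(S)|$, contradicting \cref{thm:LZ_optimal}, which asserts that $\LZss(S)$ achieves the minimum number of phrases over all $\LZss$-like factorizations of $S$. Hence $z \le z'$, as desired. I do not anticipate any substantive obstacle; the only step that merits a careful sentence is the alignment claim — ensuring that the uniqueness of $\#$ truly forces both factorizations to break exactly at the endpoints of $G\#$ — and this is immediate from the definition of an $\LZss$-like phrase.
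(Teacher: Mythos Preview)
Your proposal is correct and follows essentially the same approach as the paper: the paper's argument (given in the discussion immediately preceding the corollary) also observes that the uniqueness of each $\#$ forces phrase boundaries around $G\#$, and then uses the same exchange/replacement argument against \cref{thm:LZ_optimal}. Your write-up is in fact slightly more explicit than the paper's about why the alignment holds for an arbitrary $\LZss$-like factorization (not just $\LZss(S)$), but the idea is identical.
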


Let $\ourvec{u}\in B$.
For every $i \in [0..n]$ we define $\Mat_i(A)=\bigodot_{k=1}^i\left(\AVect(k)\#\right)$ is the prefix of $\Mat(A)$ until the end of the representation of the $i$th vector (including the $\#$).
Moreover, let $S_i(\ourvec{u})=\Dict\cdot\BVect(\ourvec{u})\cdot\Mat_i(A)$ be the prefix of $S(\ourvec{u})$ until the end of the representation of the $i$th vector (including the $\#$).

We define $\Phrases(S(\ourvec{u}),i)$ to be the number of phrases covering $\AVect(i)\#$  in $\LZss(S(\ourvec{u}))$.
Formally, $\Phrases(S(\ourvec{u}),i)=|\LZss(S_i(\ourvec{u}))|-|\LZss(S_{i-1}(\ourvec{u}))|$.
Notice that due to the usage of $\#$ separators, the phrases covering $\AVect(i)\#$ are disjoint from the phrases covering $\AVect(k)\#$  for every $k\ne i$.

\begin{lemma}\label{lem:phrases}
    For every $i\in [n]$ it holds that $\Phrases(S(\ourvec{u}),i)= d+1 + \ourvec{u}\cdot \ourvec{v}_i$.
\end{lemma}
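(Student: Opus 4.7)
The plan is to establish $\Phrases(S(\ourvec u),i)=d+1+\ourvec u\cdot\ourvec v_i$ by matching upper and lower bounds.

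For the upper bound I will exhibit an $\LZss$-like factorization of $\AVect(i)\#$ with exactly $d+1+\ourvec u\cdot\ourvec v_i$ phrases and invoke \cref{lem:LZ_optimal}. The terminal $\#$ is always a single-character phrase. For each coordinate $j$ with $\ourvec v_i[j]\cdot\ourvec u[j]=0$ I use $\ACoord(i,j)$ as one phrase: when $\ourvec v_i[j]=0$ it is a substring of $\skipo(j)$, and when $\ourvec v_i[j]=1,\ourvec u[j]=0$ it is a substring of $\BCoord(\ourvec u,j)=1_j^{4\sqrt n+1}2_j^{\sqrt n}$. Containment reduces to $3\sqrt n+i_1\leq 4\sqrt n$ and $i_2\leq \sqrt n$. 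When $\ourvec u\cdot\ourvec v_i=1$, pick one coordinate $j^*$ with $\ourvec v_i[j^*]=\ourvec u[j^*]=1$ and split $\ACoord(i,j^*)=1_{j^*}^{3\sqrt n+i_1}2_{j^*}^{i_2}$ into the two phrases $1_{j^*}^{\sqrt n+i_1}$ and $1_{j^*}^{2\sqrt n}2_{j^*}^{i_2}$, both substrings of $\BCoord(\ourvec u,j^*)=1_{j^*}^{2\sqrt n}\#1_{j^*}^{2\sqrt n}2_{j^*}^{\sqrt n}$.

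For the lower bound I will show that every $\LZss$-like factorization of $\AVect(i)\#$ uses at least $d+1+\ourvec u\cdot\ourvec v_i$ phrases. The trailing $\#$ contributes one, so the task reduces to phrases inside $\AVect(i)$. The argument hinges on the coordinate-indexed disjoint alphabets $\{0_j,1_j,2_j\}$ and two structural claims. First, every phrase of length $>1$ spans at most two consecutive coordinates: a three-coordinate earlier occurrence must live inside some prior $\AVect(k)$, but matching its fully-spanned middle coordinate forces $k_1=i_1$ and $k_2=i_2$, i.e.\ $k=i$, a contradiction. Second, any single phrase containing all of $\ACoord(i,j)$ must equal $\ACoord(i,j)$ exactly: any extension into a neighbour introduces a $0_j$/$1_j$-run of length $3\sqrt n+i_1$ adjacent to a different alphabet, which exceeds the $2\sqrt n$ cap of every $\skiphalves$ gadget and again forces $k\geq i$. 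A case analysis of $\skipo(j)$, $\BCoord(\ourvec u,j)$, the relevant $\skiphalves$ gadgets, and prior $\ACoord(k,j)$ then shows $\ACoord(i,j)$ has an earlier occurrence iff $\ourvec v_i[j]\cdot\ourvec u[j]=0$.

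The counting is clean. Let $n_j$ be the number of phrases intersecting $\ACoord(i,j)$, set $T=\{j:n_j=1\}$ and $T^c=\{1,\dots,d\}\setminus T$, and let $p_a,p_b$ count the non-bridging and bridging phrases in $\AVect(i)$. For $j\in T$ the unique covering phrase is $\ACoord(i,j)$ itself, so no bridge touches a $T$-coordinate; hence every bridge connects two consecutive coordinates in $T^c$, giving $p_b\leq |T^c|-c$ where $c$ is the number of maximal contiguous blocks of $T^c$. Summing incidences, $p_a+2p_b=\sum_j n_j\geq |T|+2|T^c|=d+|T^c|$; subtracting yields $p_a+p_b\geq d+c$. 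Every coordinate with $\ourvec v_i[j]\cdot\ourvec u[j]=1$ lies in $T^c$, so $\ourvec u\cdot\ourvec v_i=1$ forces $c\geq 1$ and hence $p_a+p_b\geq d+1$; otherwise $p_a+p_b\geq d$ suffices. Adding the $\#$-phrase completes the lower bound. The main obstacle will be the exhaustive verification of the structural claims: ruling out every way a phrase in $\AVect(i)$ could reference an earlier occurrence inside the dictionary, $\BVect(\ourvec u)$, or a prior $\AVect(k)$, which ultimately rests on the arithmetic conclusion that $k_1\geq i_1$ together with $k_2\geq i_2$ forces $k\geq i$, eliminating all such witnesses.
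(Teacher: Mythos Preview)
Your lower-bound argument is correct and in fact more structured than the paper's: the paper proves $\Phrases\ge d+1$ from the ``no proper superstring of $\ACoord(i,j)$'' claim, and then separately argues that in the non-orthogonal case the unique $(d+1)$-phrase factorization (namely the $\ACoord(i,j)$'s themselves) is infeasible. Your counting via $n_j$, bridging phrases, and blocks of $T^c$ is a clean alternative that yields $\ge d+1+\ourvec u\cdot\ourvec v_i$ in one shot.

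The upper bound, however, has a genuine gap in the non-orthogonal case. You handle every coordinate $j$ with $\ourvec v_i[j]\cdot\ourvec u[j]=0$ as a single phrase and then split \emph{one} coordinate $j^*$ with $\ourvec v_i[j^*]=\ourvec u[j^*]=1$ into two pieces. But nothing in the construction forces there to be only one such coordinate; if there are $m\ge 2$ coordinates with $\ourvec v_i[j]=\ourvec u[j]=1$, the remaining $m-1$ of them are not covered by your factorization at all. Splitting each of them the same way would give $d+m+1>d+2$ phrases, so your scheme does not establish $\Phrases\le d+2$ in general. The $\skiphalves$ gadgets in $\Dict$ exist precisely to close this gap: the paper's $(d+2)$-phrase factorization takes $P_1=(\ourvec v_i[1])_1^{2\sqrt n}$, $P_j=(\ourvec v_i[j-1])_{j-1}^{\sqrt n+i_1}2_{j-1}^{i_2}(\ourvec v_i[j])_j^{2\sqrt n}$ for $2\le j\le d$, $P_{d+1}=(\ourvec v_i[d])_d^{\sqrt n+i_1}2_d^{i_2}$, and $P_{d+2}=\#$, with each $P_j$ ($2\le j\le d$) a substring of $\skiphalves(j-1,i_2,\ourvec v_i[j-1],\ourvec v_i[j])$. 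This works uniformly, regardless of how many coordinates are ``bad''. You should replace your upper bound for the case $\ourvec u\cdot\ourvec v_i=1$ with this construction (your upper bound for the orthogonal case is fine and matches the paper).
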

\begin{proof}
    We start by showing that $\Phrases(S(\ourvec{u}),i)\le d+2$.
    \begin{claim}\label{clm:atmostd2}
        $\Phrases(S(\ourvec{u}),i)\le d+2$.
    \end{claim}
    \begin{claimproof}
        We introduce an $\LZss$-like factorization of $\AVect(i)\#$ into $d+2$ phrases, thus by \cref{lem:LZ_optimal} the claim follows.
        Let $P_1=(\ourvec{v}_i[1])_1{}^{2\sqrt n}$ (that is, $P_1= \begin{cases}
        0_1{}^{2\sqrt n} & \ourvec{v}_i[j]=0\\
        1_1{}^{2\sqrt n} & \ourvec{v}_i[j]=1
\end{cases}$).
        For every $j\in [2..d]$ let $P_j=(\ourvec{v}_i[j-1])_{j-1}{}^{\sqrt n+i_1}2_{j-1}{}^{i_2}(\ourvec{v}_i[j])_{j}{}^{2\sqrt n}$, and let $P_{d+1}=(\ourvec{v}_i[d])_{d}{}^{\sqrt n+i_1}2_d{}^{i_2}$ and $P_{d+2}=\#$.
        Notice that $P_1$ is a substring of $\skiphalves(1,i,\ourvec{v}_i[1],0)$ and for every $j\in [2..d+1]$ we have $P_j$ is a substring of $\skiphalves(j-1,i,\ourvec{v}_i[j-1],\ourvec{v}_i[j])$ and $\#$ is a single (fresh) character.
        Moreover, $\bigodot_{j=1}^{d+2}P_j=\AVect(i)\#$.
        Since all gadgets $\skiphalves$ appear in $S(\ourvec{u})$ before $\AVect(i)$, these are valid $\LZss$-like phrases.
        Therefore, $P_1,P_2,\dots,P_{d+2}$ is indeed a valid $\LZss$-like factorization of $\AVect(i)\#$ of length $d+2$.
    \end{claimproof}

    The following shows that every $\ACoord(i,j)$ gadget induces at least one unique phrase.
    \begin{claim}\label{clm:no-superstring}
        For any $j\in [1..d]$ there is no $\LZss$-like phrase of $S(\ourvec{u})$ covering part of $\AVect(i)\#$ that is a proper superstring of $\ACoord(i,j)$.
    \end{claim}
    \begin{claimproof}
        Notice that any proper superstring of $\ACoord(i,j)$ is a superstring of either $x_{j-1}\ACoord(i,j)$ or $\ACoord(i,j)x_{j+1}$ for $x\in\{0,1,2\}$ (the case where the following or preceding character is $\#$ is clearly impossible).
        We will prove the claim for superstrings of $\ACoord(i,j)x_{j+1}$, the proof for $x_{j-1}\ACoord(i,j)$ is symmetric.
        First, note that for every $x,y\in\{0,1,2\}$, $\ACoord(i,j)x_{j+1}$ is not a substring of $\BVect(\ourvec{u})$ or $\Dict$.
        This is because the only gadgets (in $\BVect(\ourvec{u})$ or $\Dict$) that contain both $y_j$ and $x_{j+1}$ characters are $\skiphalves$ gadgets and none of them contain more than $2\sqrt n$ occurrences of $(\ourvec{v}_i[j])_j$.

        Finally, for every $k<i$ the gadget $\ACoord(i,j)$ is not a substring of $\AVect(k)$.
        This is because the only substring of $\AVect(k)$ that contains characters $y_j$ for $y\in\{0,1,2\}$ is $\ACoord(k,j)$.
        Since $k<i$ it must be that either $k_1<i_1$ or $k_2<i_2$.
        Therefore, $\ACoord(k,j)$ either does not contain enough $(\ourvec{v}_i[j])_j$ characters, or it does not contain enough $2_j$ characters to be a superstring of $\ACoord(i,j)$.
    \end{claimproof}

    The following claim immediately follows from \cref{clm:no-superstring} since at least one phrase starts in every $\ACoord$ gadget and an additional phrase for $\#$.

    \begin{claim}\label{clm:atleastd1}
        $\Phrases(S(\ourvec{u}),i)\ge d+1$
    \end{claim}

The following two claims complete the proof of the lemma.
\begin{claim}
        If $\ourvec{u}\cdot \ourvec{v}_i=0$ then $\Phrases(S(\ourvec{u}),i)=d+1$.
    \end{claim}
    \begin{claimproof}
        Let $P_j=\ACoord(i,j)$ and $P_{d+1}=\#$.
        Notice that $\AVect(i,j)\#=\bigodot_{j=1}^{d+1}P_j$.
        We will show that for every $j\in[1..d]$ the string $P_j$ occurs in $\Dict\cdot\BVect(\ourvec{u})$ (which is a prefix of $S(\ourvec{u})$).
        Let $j\in [1..d]$.
        If $\ourvec{v}_i[j]=0$, then $P_j=\ACoord(i,j)=0_j{}^{3\sqrt n+i_1}2_j{}^{i_2}$ is a substring of $\skipo(j)=0_j{}^{4\sqrt n}2_j{}^{\sqrt n}$.
        Otherwise, if $\ourvec{v}_i[j]=1$ then it must be that $\ourvec{u}[j]=0$ (since $\ourvec{u}\cdot \ourvec{v}_i=0$) and therefore $P_j=\ACoord(i,j)=1_j{}^{3\sqrt n+i_1}2_j{}^{i_2}$ is a substring of $\BCoord(\ourvec{u},j)=1_j{}^{4\sqrt n+1}2_j{}^{\sqrt n}$.
        We have shown that $P_1,P_2,\dots,P_{d+1}$ is an $\LZss$-like factorization of $\AVect(i)$, thus by \cref{lem:LZ_optimal} we have $\Phrases(S(\ourvec{u}),i)\le d+1$.
        Combining with \cref{clm:atleastd1} we have $\Phrases(S(\ourvec{u}),i)= d+1$, as required.
    \end{claimproof}
    \begin{claim}
        If $\ourvec{u}\cdot \ourvec{v}_i=1$ then $\Phrases(S(\ourvec{u}),i)=d+2$.
    \end{claim}
    \begin{claimproof}
        Assume to the contrary that $\Phrases(S(\ourvec{u}),i)\ne d+2$.
        By \cref{clm:atmostd2,clm:atleastd1} it must be that $\Phrases(S(\ourvec{u}),i)=d+1$.
        Clearly, the last phrase is exactly $\#$.
        By \cref{clm:no-superstring} it must be that the remaining $d$ phrases are exactly $\ACoord(i,j)$ for $j\in[1..d]$.
        Let $j\in [1..d]$ be a coordinate such that $\ourvec{u}[j]=\ourvec{v}_i[j]=1$ (such a $j$ must exist due to $\ourvec{u}\cdot \ourvec{v}_i=1$).
        We will show that $\ACoord(i,j)=1_j{}^{3\sqrt n+i_1}2_j{}^{i_2}$ does not occur in $S(\ourvec{u})$ before $\AVect(i)\#$.
        Consider all occurrences of $X=1_j{}^{3\sqrt n}$ is $S(\ourvec{u})$ before $\AVect(i)$.
        Clearly, $X$ does not occur in $\Dict$.
        The only occurrence of $1_j$ in $\BVect(\ourvec{u})$ is in $\BCoord(\ourvec{u},j)$, which in this case is $\BCoord(\ourvec{u},j)=1_j{}^{2\sqrt n}{\#}1_j{}^{2\sqrt n}2_j{}^{\sqrt n}$ which does not contain an occurrence of $X$.

        Finally, for every $k<i$ the gadget $\ACoord(i,j)$ is not a substring of $\AVect(k)$.
        This is because the only substring of $\AVect(k)$ that may contain  $1_j$ is $\ACoord(k,j)$.
        Since $k<i$ it must be that either $k_1<i_1$ or $k_2<i_2$.
        Therefore, $\ACoord(k,j)$ either does not contain enough $1_j$ characters, or it does not contain enough $2_j$ characters to be a superstring of $\ACoord(i,j)$.

        To conclude $\ACoord(i,j)$ cannot be an $\LZss$ phrase, a contradiction.
    \end{claimproof}
    Thus, the lemma is proven.
\end{proof}

For a vector $\ourvec{u}$, we denote $S'(\ourvec{u})=\Dict\cdot\BVect(\ourvec{u})$.
That is, $S'(\ourvec{u})$ is the prefix of $S(\ourvec{u})$ obtained by removing the suffix $\Mat(A)$.
\begin{lemma}\label{lem:allPhrases}
    $|\LZss(S(\ourvec{u}))|-|\LZss(S'(\ourvec{u}))|=(d+2)n$ if and only if $u$ is not orthogonal to any $v\in A$.
\end{lemma}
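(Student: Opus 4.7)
The plan is to express the difference $|\LZss(S(\ourvec{u}))|-|\LZss(S'(\ourvec{u}))|$ as a telescoping sum of the per-vector contributions $\Phrases(S(\ourvec{u}),i)$ and then apply the previous lemma coordinate-by-coordinate. Concretely, observe that $\Mat_0(A)$ is the empty concatenation, so $S_0(\ourvec{u})=\Dict\cdot\BVect(\ourvec{u})=S'(\ourvec{u})$, while $S_n(\ourvec{u})=\Dict\cdot\BVect(\ourvec{u})\cdot\Mat(A)=S(\ourvec{u})$. Using the definition $\Phrases(S(\ourvec{u}),i)=|\LZss(S_i(\ourvec{u}))|-|\LZss(S_{i-1}(\ourvec{u}))|$, telescoping gives
\[
\sum_{i=1}^n \Phrases(S(\ourvec{u}),i)\;=\;|\LZss(S_n(\ourvec{u}))|-|\LZss(S_0(\ourvec{u}))|\;=\;|\LZss(S(\ourvec{u}))|-|\LZss(S'(\ourvec{u}))|.
\]

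Next, I would invoke \cref{lem:phrases} to substitute $\Phrases(S(\ourvec{u}),i)=d+1+\ourvec{u}\cdot \ourvec{v}_i$ into the above sum, yielding
\[
|\LZss(S(\ourvec{u}))|-|\LZss(S'(\ourvec{u}))|\;=\;\sum_{i=1}^n (d+1+\ourvec{u}\cdot\ourvec{v}_i)\;=\;(d+1)n+\sum_{i=1}^n \ourvec{u}\cdot\ourvec{v}_i.
\]
Since each $\ourvec{u}\cdot\ourvec{v}_i\in\{0,1\}$, each individual summand $d+1+\ourvec{u}\cdot\ourvec{v}_i$ is at most $d+2$, so the overall sum equals $(d+2)n$ if and only if every summand attains the maximum value, i.e. $\ourvec{u}\cdot\ourvec{v}_i=1$ for all $i\in[1..n]$. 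This is exactly the statement that $\ourvec{u}$ is not orthogonal to any $\ourvec{v}\in A$, which establishes both directions of the equivalence.

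There is essentially no obstacle here: the lemma is an immediate consequence of the already-proven per-vector characterization together with the telescoping identity induced by the $\#$-separated structure. The only thing worth double-checking is the boundary case $i=1$, where $\Phrases(S(\ourvec{u}),1)=|\LZss(S_1(\ourvec{u}))|-|\LZss(S'(\ourvec{u}))|$ indeed holds under the convention $S_0(\ourvec{u})=S'(\ourvec{u})$, which is consistent with the definition of $\Mat_i(A)$ as a possibly-empty concatenation.
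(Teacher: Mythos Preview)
Your proposal is correct and follows essentially the same approach as the paper: express the difference as the telescoping sum $\sum_{i=1}^n \Phrases(S(\ourvec{u}),i)$ and then invoke \cref{lem:phrases}. The paper's proof is terser but makes exactly the same two moves.
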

\begin{proof}
    By definition of $\Phrases$ and by the $\#$ separators we have \[|\LZss(S(\ourvec{u}))|-|\LZss(\Dict\cdot\BVect(\ourvec{u}))|=\sum_{i=1}^n\Phrases(S(\ourvec{u}),i).\]
    The lemma follows immediately from \cref{lem:phrases}.
\end{proof}

We are ready to prove \cref{thm:lb}.

\begin{proof}[Proof of \cref{thm:lb}]
Assume by contradiction the there is an data structure $D$ that solves \cref{prb:lb} with $O(N^c)$ preprocessing time and $O(N^{2/3-\eps})$ update time.
We will show how to exploit $D$ to preprocess a set $A$ of $n$ binary $d$ dimensional vectors in $O(n^{1.5c})$ such that given a set $B$ of $n$ vectors, we can solve $\OV$ on $A,B$ in $O(\mathsf{poly}(d)\cdot n^{2-1.5\eps})$.
Thus, by \cref{lem:OVA} the theorem follows.

Given the set $A$ we construct two strings: $S=S(\ourvec{1})=\Dict\cdot \BVect(\ourvec{1})\cdot \Mat(A)$ and $S'=\Dict\cdot \BVect(\ourvec{1})$, and initialize two instances of $D$, denoted by $D_S$ and $D_{S'}$.
When receiving a set $B$ of vectors we iterate over all $\ourvec{u}\in B$.
For every $\ourvec{u}\in B$ we apply at most $d$ updates to transform $\BVect(\ourvec{1})$ to $\BVect(\ourvec{u})$ both in $D_S$ and $D_{S'}$ (notice that to transform $\BCoord(\ourvec 1,j)$ to $\BCoord(\ourvec{u},j)$ a single substitution is sufficient).
After all updates are applied we have $S=S(\ourvec{u})$ and $S'=S'(\ourvec{u})$.
At this point $D_S$ and $D_{S'}$ report $|\LZss(S(\ourvec{u}))|$ and $|\LZss(S'(\ourvec{u}))|$, respectively.
Thus, the algorithm checks if $|\LZss(S(\ourvec{u}))|-|\LZss(S'(\ourvec{u}))|<(d+2)\cdot n$, and if so reports that $A$ and $B$ contain a pair of orthogonal vectors.
Otherwise, the algorithm updates $S$ and $S'$ to be $S(\ourvec{1})$ and $S'(\ourvec{1})$, undoing all the substitutions.
If when finishing the scan of $B$ it was not reported that $A$ and $B$ contain a pair of orthogonal vectors, the algorithm reports that no such pair exists.
The correctness of the algorithm follows immediately from \cref{lem:allPhrases}.

\para{Complexity.}
Notice that the lengths of the strings are
$|\Dict|=O(\sqrt n\cdot d\cdot \sqrt n)=O(dn)$, $|\BVect(\ourvec{1})|=O(d\sqrt n)$ and
$|\Mat(A)|=O(n\cdot d \cdot \sqrt n)=O(d\cdot n^{1.5})$.
Therefore $|S|,|S'|\in O(d\cdot n^{1.5})$.
It immediately follows that the preprocess of $A$ takes $O(|S|^c+|S'|^c)=O(d^cn^{1.5c})$.
Moreover, the algorithm processes every vector $\ourvec{u}\in B$ with $O(d)$ updates, summing up to $O(d\cdot n)$ updates which take in total $O(d\cdot n\cdot (d\cdot n^{1.5})^{2/3-\eps})=O(d^{5/3-\eps}n^{2-1.5\eps})$.
By taking $d\in \Theta(2^{\sqrt{\log n}})$ and assuming SETH, this contradicts \cref{lem:OVA}.
\end{proof}

The above construction also establish the lower bound of \cref{thm:weak}.
Clearly, the following lemma implies the lower bound part of \cref{thm:weak}.
\begin{lemma}\label{thm:lbweak}
For every $c,\eps>0$, there is no data structure that solves \cref{prb:lb} with $O(N^c)$ preprocessing time and $O(|\LZss(S)|^{1-\eps})$ update time, unless SETH is false.
\end{lemma}
\begin{proof}
Assume by contradiction the there is a data structure $D$ that solves \cref{prb:lb} with $O(N^c)$ preprocessing time and $O(|\LZss(S)|^{1-\eps})$ update time.
The majority of the proof follows the same structure as the proof of \cref{thm:lb}.
Here, we rewrite only the complexity part of the proof of \cref{thm:lb}.

\para{Complexity.}
Since $|\Dict|,|\BVect(\ourvec{1})|\in O(dn)$, it holds that $|\LZss(\Dict\cdot\BVect(\ourvec{1}))|\in O(dn)$.
By \cref{lem:phrases}, there are $O(dn)$ phrases in the $\Mat$ part of $S(\ourvec{u})$ for every vector $\ourvec{u}$.
Therefore $|\LZss(S)|,|\LZss(S')|\in O(d\cdot n)$.
Moreover, the algorithm processes every vector $\ourvec{u}\in B$ with $O(d)$ updates, summing up to $O(d\cdot n)$ updates which take in total $O(dn\cdot (dn)^{1-\eps})=O(d^{2-\eps}n^{2-\eps})$.
By taking $d\in \Theta(2^{\sqrt{\log n}})$ and assuming SETH, this contradicts \cref{lem:OVA}.
\end{proof}

\appendix

\section{Implementation of a Dynamic $\LPF$-tree using Top Trees}\label{sec:top_trees}

In this section, we describe how to implement the data structure presented in \cref{sec:LPFandLPFTree} with $\Otild(1)$ update and query times.
To do that, we reproduce the proof from \cite{BCR24} with some minor changes.
Our minor changes are required to obtain a worst-case running time (compared to the amortized running time obtained by \cite{BCR24}), and to support $\findIAncestor$ queries.

We replace link-cut trees with top trees \cite{AHLT05}, as they support the same operations as link-cut trees, but top trees guarantee $\Otild(1)$ worst-case time complexity.
The operations supported by top trees are as follows.
\begin{enumerate}
    \item insert a single node to the forest.
    \item remove a leaf from a tree.
    \item attach a root to another node as a child.
    \item detach a node from its parent.
    \item add a weight $x$ to a node and all of its descendants.
    \item get the weight of a given node.
    \item get the $i$th ancestor of a node.
\end{enumerate}

As in \cite{BCR24}, we use Red-Black (RB) trees, that are balanced search trees supporting the following operations.
\begin{enumerate}
    \item Insert an element to an RB tree.
    \item Delete an element from an RB tree.
    \item Join two RB trees $T_1,T_2$ into a single RB tree containing the values of both, under the promise that the maximal value in $T_1$ is smaller than the minimal value in $T_2$.
    \item Given a value $c$ and an RB tree $T$, split $T$ into two RB trees $T_1$ and $T_2$ such that $T_1$ contains all values smaller than $c$ and the rest of the values are in $T_2$.
\end{enumerate}
All these operations cost $\Otild(1)$ time.

The following observation is useful.

\begin{observation}[{\cite[{Observation 23}]{BCR24}}]\label{obs:23}
    A collection of RB trees on $n$ nodes can be simulated using top trees.
The cost of every operation on an RB tree is then $O(\log^2 n)$.
\end{observation}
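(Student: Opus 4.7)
The plan is to implement each RB-tree operation as a sequence of $O(\log n)$ primitive top-tree operations, each of worst-case cost $O(\log n)$, yielding the claimed $O(\log^2 n)$ bound per operation. I represent every RB tree directly as a top tree with the same parent-child relation; the key of each node and all auxiliary information (color, black-height, and any subtree summaries needed by search) are stored as node weights that the top tree can read and update in $O(\log n)$ time.

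First I would handle Insert and Delete. The standard RB-tree algorithm locates the target position via $O(\log n)$ comparisons, each using one weight query, then performs a single leaf insertion or deletion, and finally restores the RB invariants with $O(\log n)$ rotations and recolorings. A rotation changes only a constant number of parent-child edges and a constant number of node labels, so it is realized by a constant number of $\Delete$/$\Link$ operations on the top tree plus constant-many weight updates. Multiplying $O(\log n)$ such steps by $O(\log n)$ worst-case cost per top-tree primitive yields $O(\log^2 n)$.

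Join and Split are handled by invoking their textbook algorithms. For Join, descend along the appropriate spine of the taller of the two input trees until a node of matching black-height is reached, attach the shorter tree there via $\Link$, and rebalance with $O(\log n)$ rotations. For Split, walk once from the root to the pivot, separating the tree into $O(\log n)$ canonical pieces along the way, and then reassemble the two output trees via $O(\log n)$ joins whose total rotation count is still $O(\log n)$ by the standard amortized analysis. Each descent step, cut, graft, or rotation uses a constant number of the top-tree primitives, so the total work is $O(\log n)$ primitives at $O(\log n)$ each, i.e.\ $O(\log^2 n)$.

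The main obstacle is to confirm that each elementary structural change in an RB-tree algorithm---in particular, a rotation---can indeed be expressed as a constant-sized sequence of the specific primitives supported by the top tree ($\Insert$, $\Delete$, $\Link$, attach/detach-from-parent, weight update, weight query, level-ancestor). This is immediate once the rotation is spelled out locally: it detaches two edges with $\Delete$, reattaches them with $\Link$ in the rotated configuration, and refreshes the constant-many affected weights. With that local correspondence in place, the global $O(\log^2 n)$ bound follows directly from the well-known $O(\log n)$ rotation and navigation counts for RB trees combined with the $O(\log n)$ worst-case cost of the top-tree primitives.
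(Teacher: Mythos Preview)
The paper does not actually prove this observation; it merely cites it from \cite{BCR24} and uses it as a black box in \cref{sec:top_trees}. Your argument is the standard one and is essentially correct: each of the four RB-tree operations (insert, delete, join, split) performs $O(\log n)$ local structural modifications (rotations, edge attachments/detachments) and $O(\log n)$ navigation steps, and every such step can be realized by $O(1)$ of the top-tree primitives (link, cut, leaf insert/delete), each costing $O(\log n)$ in the worst case.

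One small wrinkle worth tightening: you propose storing keys, colors, and black-heights as top-tree node weights, but the only weight-update primitive available is a \emph{subtree} add, which cannot change a single node's value without touching all its descendants. This is not a real obstacle---per-node bookkeeping data (key, color, black-height, left/right child pointers) can simply be stored in an external array indexed by node identity, since they require no path or subtree aggregation. Likewise, the downward search during insert/delete/split uses those explicit child pointers rather than the top-tree interface. With that clarification your $O(\log^2 n)$ bound goes through.
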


The indices of the string $S$ are stored as nodes of one top tree $U$ as follows.
For every node $i$ of the $\LPF$-tree, the children of $i$ are stored in an RB tree $T_i$, where the root of the RB tree is linked to the node $i$ (i.e., $i$ is the parent of the root of $T_i$ in $U$).
The edges of $U$ between nodes $i$ and the root of $T_i$ are called $\LPF$-edges, and all other edges of $U$ (i.e., edges inside $T_i$ for some $i$) are called internal edges.
We consider internal edges of $U$ as edges of weight 0, and $\LPF$-edges of $U$ as edges of weight 1.
These weights preserve the property that the distance between $i$ and every child of $i$ in the $\LPF$-tree is $1$ in $U$ (as it is in the $\LPF$-tree).
We preserve the invariant that the weight of every node $v$ in $U$ is exactly the sum of the (imaginary) edge weights from the root to $v$, and hence it is also the depth of $v$ in the $\LPF$ tree.

$\Insert$, $\Link$, and $\Delete$ are simple top tree operations, and cost $\Otild(1)$ time by \cref{obs:23}.
In the $\MoveInterval$ operation, we need to move consecutive nodes (representing consecutive indices of $S$) from some RB tree $T_i$ tree to another RB tree $T_j$.
This is done in $\Otild(1)$ time by \cref{obs:23} (using split and join operations on both $T_i$ and $T_j$).
In addition, the weight of these nodes should be updated.
This is done in $O(1)$ calls to get-weight and add-weight.
Therefore, $\MoveInterval$ costs $\Otild(1)$ time.
Moreover, $\GetDepth$ is implemented in $\Otild(1)$ by returning the weight of the node in $U$.
Finally, $\findIAncestor$ is implemented with a simple binary-search on the ancestors of $v$ in $U$, looking for the lowest ancestor with weight $\GetDepth(v)-i$.

\section{Proof of \texorpdfstring{\cref{lem:subPMDS}}{Lemma 9}}\label{sec:DynSPM}

In this section, we provide the following data structure.
\subPMDS*

For a string $S[1..n]$, the suffix array of $S$ is an array $SA_S$ of length $n$ that contains all the suffixes of $S$ in their lexicographic order.
The suffix $S[i..n]$ is represented in $SA_S$ by its starting index $i$.

We make use of the following data structure.
\begin{lemma}[{cf. \cite[{Theorem 10.16}]{kempa2022dynamic}}]\label{lem:DymSA}
There is a data structure for maintaining a dynamic string $S$ that undergoes edit operations and supports the following query in $\Otild(1)$ time:
Given an index $i$, return $SA_S[i]$.
The update time of the data structure is $\Otild(1)$ and the construction time for an initial string of length $n$ is $\Otild(n)$.
\end{lemma}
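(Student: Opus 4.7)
The plan is to follow the approach of Kempa and Kociumaka~\cite{kempa2022dynamic}, reducing $SA_S[i]$ queries to navigation on a dynamically maintained (compacted) suffix tree of $S$. Concretely, I would store the suffix tree via top trees (as in \cref{sec:top_trees}) so that its leaves, ordered by their Euler tour, spell out the suffix array: the $i$-th leaf in the tour corresponds to position $SA_S[i]$. With subtree-size annotations this permits locating the $i$-th leaf in $\Otild(1)$ time. Edge labels are represented implicitly by a pair $(i,j)$ into $S$; comparisons and lengths along an edge are resolved by delegating to the dynamic PILLAR primitives of \cref{lem:dympillar}, so that an $\LCP$ query suffices to determine the branching character between any two suffixes.

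The second step is the update. A single edit at position $z$ a priori changes the identity of every suffix $S[j..n]$ with $j\le z$, but the central structural claim is that only $\Otild(1)$ leaves need to be relocated in the suffix tree. The suffix at position $j$ can change its tour position only if its branching character against some lex-neighbor falls exactly at position $z$, i.e.\ the neighbors share a common prefix of length precisely $z-j$. By the Fine--Wilf periodicity lemma (as in \cref{fact:aperfarapart}), the set of such $j$ is either of size $\Otild(1)$ or lies inside a run with small period $p$, in which case the affected leaves form arithmetic progressions that can be moved as a block. I would identify these progressions in $\Otild(1)$ time by applying the $\IPM$ primitive of \cref{lem:dympillar} to the $\Otild(1)$-length neighborhood of $z$, and perform the corresponding block-moves with $\MoveInterval$-style top-tree operations of \cref{sec:top_trees}.

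The third step is initialization: build the suffix tree on the initial length-$n$ string in $\Otild(n)$ time (for instance via Farach's algorithm or $n$ successive insertions into the dynamic structure), then maintain it under the claimed $\Otild(1)$ worst-case cost per edit. Queries, insertions, and deletions of single characters are each dispatched through the dynamic pillar structure and the top-tree interface, so the complexity bounds follow from the $\Otild(1)$ per-operation guarantees of those primitives.

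The main obstacle is the structural bound in the second step: showing that every edit admits a decomposition of the shifted leaves into $\Otild(1)$ arithmetic progressions realizable as top-tree block moves. Long periodic runs straddling $z$ can otherwise cause many leaves to shift simultaneously, and controlling them requires the full periodicity and string-synchronizing-set analysis used in~\cite{kempa2022dynamic}; once that structural bound is in hand, the remaining work is a routine orchestration of the dynamic primitives from \cref{lem:dympillar} with the top-tree operations of \cref{sec:top_trees}.
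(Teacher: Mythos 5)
This lemma is not proved in the paper at all: it is imported verbatim as a black box from Kempa and Kociumaka~\cite{kempa2022dynamic} (their Theorem 10.16), and the paper's only ``proof'' is the citation. So the relevant question is whether your sketch could stand as an independent proof, and it cannot: the central structural claim in your second step --- that a single edit requires relocating only $\Otild(1)$ leaves of a lexicographically ordered suffix tree, possibly grouped into $\Otild(1)$ arithmetic progressions moved as blocks --- is false. Consider $S=a^n$ and substitute the last character by $b$: before the edit the suffixes are ordered by increasing length, after it they are ordered by decreasing length, so the relative lexicographic order of $\Theta(n)$ pairs of suffixes flips and $\Theta(n)$ leaves change their positions in the ordering. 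No periodicity argument confined to an $\Otild(1)$-length neighborhood of the edit position can repair this, because the suffixes whose ranks change are spread over the whole string. This is precisely why the actual data structure of~\cite{kempa2022dynamic} does not maintain the suffix array (or a leaf-ordered suffix tree) explicitly; it answers $SA_S[i]$ queries through a recursive, compressed representation built on hierarchies of string synchronizing sets, and its analysis is far from a ``routine orchestration'' of $\LCP$/$\IPM$ primitives with top-tree block moves.

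Your own closing paragraph concedes that the missing step needs ``the full periodicity and string-synchronizing-set analysis used in~\cite{kempa2022dynamic},'' which amounts to deferring the entire difficulty to the very result being proved. For the purposes of this paper, the correct treatment is simply to cite \cref{lem:DymSA} as an external result, as the authors do; if you want to prove it, you would have to reproduce the synchronizing-set machinery rather than a dynamic suffix tree with locally repaired leaves.
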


It is well-known that the suffix array can be used to search a pattern in a text.
We bring the following observation to describe the exact framework in which we use suffix arrays for text searching.
\begin{observation}\label{obs:SAandLCPisPM}
There is an algorithm that decides if a pattern $P$ occurs in a text $T$ of length $n$ using $O(\log n)$ queries of the form $\LCP(P, T[i..n])$ for some $i\in [|T|]$, $O(\log n)$ queries to $SA_T$, and $O(\log n)$ additional time.
\end{observation}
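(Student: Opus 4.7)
}

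The plan is to execute the textbook suffix-array binary search, but to carry out every lexicographic comparison between $P$ and a suffix $T[i..n]$ using a single $\LCP$ query plus one character probe. The key observation is that $P$ occurs in $T$ if and only if there exists an index $i\in[|T|]$ with $\LCP(P,T[i..n])=|P|$, and the set of suffixes of $T$ that have $P$ as a prefix forms a contiguous range in $SA_T$ because those are exactly the suffixes whose lexicographic position lies between $P$ and $P\cdot \$^{\infty}$ in the sorted order.

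First I would run a standard binary search over the range $[1..|T|]$ of suffix-array positions, maintaining invariants of the form ``$P$ is lexicographically $\ge$ the suffix at position $\text{lo}$ and $\le$ the suffix at position $\text{hi}$.'' At each of the $O(\log n)$ iterations, I pick the midpoint $m$, use one $SA_T$ query to recover $i=SA_T[m]$, and then compare $P$ with $T[i..n]$ as follows. I issue the single query $\ell=\LCP(P,T[i..n])$. If $\ell=|P|$, then $P$ is a prefix of $T[i..n]$, so $P$ occurs in $T$ and the algorithm returns TRUE. Otherwise, the characters $P[\ell+1]$ and $T[i+\ell]$ are defined and distinct (unless $i+\ell>|T|$, in which case $T[i..n]<P$); their comparison determines whether $P$ lies lexicographically before or after $T[i..n]$, and I update $\text{lo}$ or $\text{hi}$ accordingly. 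If the loop terminates without ever reporting $\ell=|P|$, the algorithm returns FALSE; correctness in that case follows because the contiguous $SA_T$-range of suffixes prefixed by $P$ must contain the final search midpoint had it been non-empty.

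The accounting is immediate: $O(\log n)$ iterations, each performing one $SA_T$ query, one $\LCP(P,T[i..n])$ query, and $O(1)$ additional work consisting of one character comparison between $P[\ell+1]$ and $T[i+\ell]$ (both specified as substring-indexed characters, which the ambient dynamic string data structure accesses in $\Otild(1)$) and the usual binary-search arithmetic. This sums to $O(\log n)$ $SA_T$ queries, $O(\log n)$ $\LCP(P,T[i\mathinner{\ldotp\ldotp}n])$ queries, and $O(\log n)$ additional time, exactly as claimed.

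There is no substantial obstacle; the only point requiring care is the boundary handling when $i+\ell>|T|$ (so $T[i..n]$ is a proper prefix of $P$), which is resolved by treating that case as $T[i..n]<P$ and recursing to the right half, together with the easy base-case check that the first iteration's comparison is meaningful when $|P|>|T|$ (in which case we can return FALSE immediately without affecting the claimed bounds).
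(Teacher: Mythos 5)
Your proposal is correct and matches the paper's own proof: both perform the standard suffix-array binary search, spending one $SA_T$ query and one $\LCP(P,T[i..n])$ query per iteration and resolving each lexicographic comparison from the mismatch character, for $O(\log n)$ iterations total. Your extra care with the boundary case $i+\ell>|T|$ is a minor refinement of the same argument, not a different route.
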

\begin{proof}
    One can binary search for an occurrence of $P$ in $T$.
    At every point, the algorithm checks if there is a starting index of an occurrence of $P$ in $SA_T[i..j]$ for some $i$ and $j$ initially set as $i=1$ and $j=|T|$.
    If $i = j$, the algorithm simply checks if $x=SA_T[i]$ is a starting index of an occurrence of $P$ by querying $\LCP(P,T[x..n])$.
    Otherwise, the algorithm sets $c = \left \lceil{\frac{j-i}{2}}\right \rceil $, obtains $x =SA_T[c]$ and queries for $\ell = \LCP(P,T[x..n])$.
    If $\ell = |P|$, we have found an occurrence of $P$ in $T$.
    Otherwise, the lexicographical order between $P$ and $T[x..n]$ is decided by the order between $P[\ell+1]$ and $T[x+\ell+1]$.
    We proceed our search in the half of $SA_T$ that may contain an occurrence of $P$.
    Clearly, the binary search terminates in $O(\log|T|)$ steps.
    In each step, the algorithm executes a single $\LCP$ query and a single query to $SA_T$.
\end{proof}

We are ready to describe our data structure for \cref{lem:subPMDS}.
We describe a data structure with amortized update time.
The running time can be de-amortized using standard techniques.
We maintain the dynamic data structure of \cref{lem:dympillar} on $S$.
Additionally, for every $i\in [\ceil*{\log |S|}+1]$, we are interested in maintaining a partition $\mathcal{P}^i$ of the indices of $[|S|]$ into intervals $I^i_1,I^i_2, \ldots ,I^i_{|\mathcal{P}^i|}$ with the following properties.
\begin{enumerate}
    \item For every $j\in [|\mathcal{P}^i|]$, $|I^i_j| < 2^{i+1}$
    \item For every $j\in [|\mathcal{P}^i|-1]$, $|I^i_{j}| + |I^i_{j+1}| \ge 2^i$
\end{enumerate}
Every interval $I^i_j = [a^i_j..b^i_j]$ corresponds to a substring $S^i_j= S[a^i_j..b^i_j]$.
For every $S^i_j$ the algorithm stores the structure of \cref{lem:DymSA} allowing queries for $SA_{S^i_j}$.

We maintain the intervals as follows.
Initially, $\mathcal{P}^i$ is simply a partition of $[1..|S|]$ into disjoint intervals of length exactly $2^i$ (excluding the last interval that may be shorter).
When an insertion (resp.\ deletion) is applied at index $x$ in $S$, the algorithm increases (resp.\ reduces) the ending index of $I^i_j$ that contains $x$ by $1$ (and implicitly shift all indices following $I^i_j$).
The algorithm also applies the appropriate insertion (resp.\ deletion) to the dynamic data structure of $S^i_j$.

If the update results in the length of $I^i_j$ reaching $2^{i+1}$, the algorithm splits $I^i_j$ into two intervals $I^i_j$ and $I^i_{j+1}$ of length $2^i$ each (thus shifting all intervals following $I^i_j$ prior to the update).
Then, the algorithm constructs the data structure of \cref{lem:DymSA} for $S^i_{j}$ and for $S^{i}_{j+1}$ from scratch.
Similarly, if the update results in two adjacent intervals $I^i_j$ and $I^i_{j+1}$ with combined length that exactly $2^{i}-1$, the algorithm merges $I^i_j$ and $I^i_{j+1}$ into a single interval $I^i_j$ of length $2^i-1$, and computes $S^i_j$ for the new $I^i_j$ from scratch.

It should be clear that all invariants are maintained and that for every interval $I^i_j$ the data structure of \cref{lem:DymSA} is maintained for the corresponding $S^i_j$.

Finding the interval $I^i_j$ can be implemented in $O(\log n)$ time by storing the dynamic endpoints of the intervals in a balanced search tree.
Then, applying a constant number of updates for the data structure of \cref{lem:DymSA} takes $\Otild(1)$ time.
The only case when we apply more then a single operation to the data structure of \cref{lem:DymSA} is when we merge or split intervals.

The newly created intervals, either merged or splitted, are always of size $O(2^i)$, so initializing the data structure of \cref{lem:DymSA} can be implemented by initializing an empty data structure and applying $O(2^i)$ insertions, which requires $\Otild(2^i)$ time.
We make a standard charging argument to account for this running time.

When an interval $I^i_j$ is split, its length is $2^{i+1}$, meaning that at least $2^i-1$ insertion operations were applied to the interval since its creation.
We can charge the $\Otild(2^i)$ cost of computing $S^i_j$ and $S^i_{j+1}$ from scratch on these $2^i$ edit operations for an amortized cost of $\Otild(1)$ per update.
Similarly, when two intervals $I^i_j$,$I^i_{j+1}$ are merged, one of $I^i_j$,$I^i_{j+1}$ must be of length at most $2^{i-1} - 1$.
It follows that at least $2^{i-1} + 1$ deletion operations were applied to the shorter of the two merged intervals since its creation.
Again, we can charge the $\Otild(2^i)$ cost of constructing the data structure of \cref{lem:DymSA} for the merged interval on these deletions, resulting in $\Otild(1)$ amortized running time per update.

The only part of the algorithm that is not worst-case but amortized is the construction of the suffix array data structure for new intervals.
We can de amortize this part of the algorithm by employing the standard technique of
gradually constructing the data structure for future intervals in advance.

We show how to implement this approach to de-amortize the running time for splitting an interval.
A similar de-amortization can be applied to de-amortize the running time for merging intervals.

Every interval $I^i_j$ stores, in addition to the data structure of $S^i_j=S[a^i_j..b^i_j]$, additional two suffix array data structures $SA^i_j(1)$ and $SA^i_j(2)$, for the strings $S^i_j(1) = S[a^i_j..c]$ and $S^i_j(2) = S[c+1..b^i_j]$ for $c = \floor*{\frac{a+b}{2}}$, respectively.
To be more precise, $SA^i_j(1)$ and $SA^i_j(2)$ are \textit{under construction}, and we wish to have the property that if $I^i_j$ is split, $SA^i_j(1)$ and $SA^i_j(2)$ are the proper suffix array data structures of the newly created intervals.
We also want $I^i_j$ to store a copy of the string $S^i_j$ in its state when the interval $I^i_j$ was created.

When $I^i_j$ is created, both $SA^i_j(1)$ and $SA^i_j(2)$ data structures are initialized as suffix array data structures for the empty string in $\Otild(1)$ time.
The interval $I^i_j$ also initializes an empty list $U^i_j$ meant for storing all future insertions, substitutions, and deletions applied to $I^i_j$.

Denote $S^i_j(1,init)$ (resp.\ $S^i_j(2,init)$) as the content of $S^i_j(1)$ (resp $S^i_j(2)$) at the time of the creation of the interval $I^i_j$.
Let us consider the following implementation of splitting an interval in the amortized data structure:
When the interval $I^i_j$ is split, the algorithm initializes the suffix array data structure for $SA^i_j(1)$ as a data structure for the empty string, inserts all symbols of $S^i_j(1,init)$ one by one and then applies all updates that $S^i_j(1)$ received during the lifetime of $I^i_j$ (a similar procedure is applied to obtain $SA^i_j(2)$).
Clearly, this correctly constructs $SA^i_j(1)$ and $SA^i_j(2)$.
This implementation requires $1+ 2^{i+1}+|U^i_j|$ operations of the dynamic suffix array data structure of \cref{lem:DymSA}.
We call this sequence of operations the \textit{construction sequence} of $I^i_j$.

Now, every time that an operation is applied to $I^i_j$, if $SA^i_j(1)$ and $SA^i_j(2)$ are already fully constructed for the current left and right halves of $S^i_j$, the algorithm applies the appropriate update to the one containing the updated index.

Otherwise, $SA^i_j(1)$ and $SA^i_j(2)$ are still under construction.
The algorithm applies the next $8$ operations from the construction sequence of $I^i_j$ to $SA^i_j(1)$ and to $SA^i_j(2)$.
Note that these operations are known, as we have access to $S^i_j(1,init)$ and to $S^i_j(2,init)$ (which is sufficient for applying the first $2^{i+1}$ operations) and to $U^i_j$.
Recall that at least $2^{i-1}$ updates are applied to $I^i_j$ before it is split.
It is therefore guaranteed that by the time $I^i_j$ splits, the algorithm has already applied $8 |U^i_j| = 4|U^i_j| + 4|U^i_j| \ge 2^{i+2} + |U^i_j| + 1$ operations from the construction sequence.
Therefore, the construction sequence has already been executed, and $SA^i_j(1)$ and $SA^i_j(2)$ are already constructed as required.
We also need to have an already prepared, copy of the current $S^i_j(1)$ and $S^i_j(2)$, which could take $O(2^{i+1})$ time to write and store if done at the time of the split.
This can be de-amortized in the same manner.

Recall that the data structure maintains dynamic partition of intervals for every $i\in [\ceil*{\log n}+1]$.
Each partition is stored as a balanced search tree with delta representation \cite{PlanarBook}, storing the starting indices of the intervals.
We make the following observation.
\begin{lemma}\label{clm:intervalpartition}
    For every $a\le b \in [|S|]$, there is a set $\mathcal{I}_{a,b}$ of $O(\log n)$ intervals in $\cup_{i=1}^{\ceil*{\log n} + 1}\mathcal{P}^i$ such that $\cup _{I \in \mathcal{I}_{a,b}} I = [a..b]$.
    The set $\mathcal{I}_{a,b}$ can be found in $\Otild(1)$ time given $a$ and $b$.
\end{lemma}
\begin{proof}
    We describe a greedy algorithm for finding $\mathcal{I}_{a,b}$.
    The algorithm initializes an empty list $L$ and an index $k=a$ and runs the following procedure until a termination condition is met.
    \begin{enumerate}
        \item \label{intpartition:1} If $k>b$, return $L$ as $\mathcal{I}_{a,b}$.
        \item \label{intpartition:2} Find the maximal $i$ such that the interval $I^i_j$ of $\mathcal{P}^i$ containing $k$ has $I^i_j \subseteq [a..b]$.
        \item \label{intpartition:3} Append $I^i_j$ to $L$ and set $k \leftarrow b^i_j+1$.
    \end{enumerate}

    First, we claim that $I^i_{j}$ is well defined for every $k\in [a..b]$.
    This follows from the fact that all intervals in $\mathcal{P}^0$ are of length exactly 1, therefore the interval $I^0_k=[k..k]$ is a feasible candidate for $I^i_j$.
    It follows that the algorithm terminates.
    It should be clear that at the beginning of every iteration of the algorithm, the list $L$ already contains a set of intervals covering exactly $[a..k-1]$, so when the algorithm terminates the list $L$ indeed satisfies $\cup_{I\in L}I=[a..b]$.

    Every iteration of the algorithm is implemented in $\Otild(1)$ time by querying each of the search trees of $\mathcal{P}^i$ for the interval containing $k$.
    It is left to prove that $|L|\in O(\log n)$ - notice that this also leads to the running time of the algorithm being $\Otild(1)$.

    Let $k_t$ be the value of $k$ at the beginning of the $t$-th iteration.
    Similarly, let $i_t$ be the maximal $i$ value found in Step \ref{intpartition:2} of the algorithm in the $t$-th iteration, and $I_t=[a_t..b_t]$ be the interval added to $L$ in the $t$-th iteration.
    We claim that for every $i\in [\ceil*{\log n}+1]$, $i$ can appear in the sequence $i_1,i_2 ,\ldots ,i_{|L|}$ at most 16 times - the bound on $|L|$ directly follows from this claim.

    Let us fix some $i\in [\ceil*{\log n}+1]$.
    Assume to the contrary that there exist 17 iterations, $t_1<t_2<\ldots < t_{17}$, of the algorithm in which $i_t = i$.
    Notice that for every $x< y\in [17]$ we have $I_{t_x}\neq I_{t_y}$.
    It follows from $k_{t_y} \ge k_{t_{x+1}}> b_{t_x}$, i.e., $I_{t_x}$ does not contain $k_{t_y}$.
    It immediately follows that the claim is correct for $i=\ceil*{\log n} + 1$ as $|\mathcal{P}^{\ceil*{\log n} + 1}|\le 2$.

    Let us assume that $i\in [\ceil*{\log n}]$.
    Recall that the length of any two consecutive intervals in $\mathcal{P}^i$ is least $2^i$.
    Therefore, $k_{t_9} > k_{t_1} + 2^{i+2}$ and $k_{t_{17}} > k_{t_9} + 2^{i+2}$.
    Implying $k_{t_9}>a+2^{i+1}$ and $k_{t_9}<b-2^{i+1}$.
    Let $I^{i+1}_j =[a'..b'] \in \mathcal{P}^{i+1}$  be the interval in $\mathcal{P}^{i+1}$  that contains $k_{t_9}$.
    Recall that the length of $I^{i+1}_j$ is less than $2^{i+2}$.
    It follows from $k_{t_9}>a+2^{i+1}$ that $a'>a$ and it follows from $k_{t_9}<b-2^{i+1}$ that $b'<b$.
    Hence $I^{i+1}_j\subseteq[a..b]$, a contradiction to $i_{t_9}=i$.
    \end{proof}

We now show how to answer a query given the partitions, and the data structure \cref{lem:dympillar}.
First, we apply \cref{clm:intervalpartition} to obtain a set $\mathcal{I}_{i_T,j_T}$ of intervals from the partitions such that the union of the intervals is exactly $S[i_T..j_T]$.
For every $I^i_j \in \mathcal{I}_{i_T,j_T}$, we can access entries of the suffix array of $S^i_j = S[a^i_j..b^i_j]$ in $\Otild(1)$ time, and we can query the $\LCP$ between indices in $S^i_j$ and $i_P$ via the data structure of \cref{lem:dympillar} in $\Otild(1)$ time.
It follows from \cref{obs:SAandLCPisPM} that we can report if there is an occurrence of $P= S[i_P..j_P]$ in $S^i_j$ in $\Otild(1)$ time.
We do this for each of the $\Otild(1)$ intervals of $\mathcal{I}_{i_T,j_T}$ for a total running time of $\Otild(1)$ and report that there is an occurrence of $P$ in $T = S[i_T..j_T]$ if one of the applications of \cref{obs:SAandLCPisPM} finds an occurrence.

Notice that this is insufficient: there may be an occurrence of $P$ in $T$ that is not contained in any of the intervals of $\mathcal{I}_{i_T,j_T}$.
We observe that every such 'missed' occurrence must contain an endpoint of an interval in $\mathcal{I}_{i_T,j_T}$.

We complement our algorithm with the following procedure.
For every $x\in \cup_{[a..b]\in \mathcal{I}_{i_T,j_T}}\{a,b\}$ (i.e., each endpoint of an interval of $\mathcal{I}_{i_T,j_T}$) we find a representation of all occurrences of $P$ in $S[\max\{x-|P|,i_T\}..\min\{x+|P|,j_T\}]$ using an internal pattern matching query to the data structure of \cref{lem:dympillar}.
This adds another factor of $\Otild(1)$ to the time complexity.
It is easy to see that each occurrence of $P$ that contains an endpoint of an interval is found in this way, thus concluding the algorithm and proving \cref{lem:subPMDS}.

\section{Proofs of Technical Lemmas}\label{sec:missing_proofs}
For two integers $p,q$, we denote as $\mathrm{gcd}(p,q)$ the greatest common divisor of $p$ and $q$.
In the proofs presented in this section, we use the periodicity lemma of Fine and Wilf \cite{FiWi65}.
\begin{fact}[Periodicity Lemma \cite{FiWi65}]\label{lem:per}
    If a string $S$ has a period $p$ and a period $q$ such that $|S| \ge p+q - \mathrm{gcd}(p,q)$, it holds that $\mathrm{gcd}(p,q)$ is a period of $S$.
\end{fact}

\begin{proof}[Proof of \cref{fact:break_period}]
    We prove the first statement, the second statement can be proven in a symmetrical manner.
    Recall that $S$ is periodic, so $p \le |S|/2$.
    Assume to the contrary  that $Sx$ is periodic with some period $q\le (|S| +1 )/2$.
    Since $x\neq S[n-p+1]$, and therefore $x\ne S[n-p+1] = S[n- i\cdot p+1]$ for every positive integer $i$ such that $n  \ge i\cdot p$, it must hold that $p$ is not a divisor of $q$.
    This implies that $\mathrm{gcd}(p,q) < p$.
    Since $S$ is a substring of $Sx$, it holds $q$ is a period of $S$ as well.
    Since both $p$ and $q$ are periods of $S$, and $p+q- \mathrm{gcd}(p,q) \le |S|$ it follows from \cref{lem:per} that $\mathrm{gcd}(p,q) < p$ is also a period of $S$.
    This is a contradiction to $p$ being the minimal period of $S$.
\end{proof}

\begin{proof}[Proof of \cref{lem:ipmalllengths}]
    We simply maintain the data structure of \cref{lem:dympillar}.
    Upon a query for $T$ and $P$, for every integer $i$ we define $T_i = T[i_T + i\cdot |P|.. \min(i_T+i|P|+2|P|-1 ,j_T)]$.
    It can be easily verified that every occurrence of $P$ in $T$ is contained in $T_i$ for some $i\in [0..|T|/|P|)$.
    We query for $\IPM_S(P,T_i)$ for every $i\in [0..{|T|}/{|P|})$, this yields a set of $O(|T|/|P|)$ arithmetic progressions that together represent all occurrences of $P$ in $T$.
    To obtain $\Clusters_T(P)$, attempt to merge clusters of occurrences from adjacent $T_i$'s by checking the fist occurrence of the cluster in $T_i$ is at distance exactly $p$ from the last occurrence of $T_{i-1}$.
    This can be easily done in $O(|T|/|P|)$ time.
\end{proof}

\begin{proof}[Proof of \cref{lem:first_occ}]
    We describe an algorithm for finding the first occurrence of $P=[i_P..j_P]$ in $T=[i_T..j_T]$, an algorithm for finding the last occurrence can be constructed in a similar way.
    We use the data structure of \cref{lem:subPMDS}.
    First we check whether $P$ occurs in $T$.
    If not, return null.
    Otherwise, binary search for the minimal index $k\le j_T$ such that $T[i_T..k]$ contains an occurrence of $P$ and return $k-|P|+1$.
\end{proof}

\bibliography{ref.bib}

\end{document}